\newtheorem{theorem}{Theorem}
\newtheorem{lemma}{Lemma}
\newif\if@restonecol
\renewenvironment{proof}[1][\proofname]{\par
  \pushQED{\qed}%
  \normalfont \topsep6\p@\@plus6\p@\relax
  \trivlist
  \item[\hskip\labelsep
        \itshape
    #1]\ignorespaces
}{%
  \popQED\endtrivlist\@endpefalse
}
\tikzset{set/.style={draw,circle,inner sep=0pt,align=center}}
  \tikzstyle{abstractbox} = [draw=black, fill=white, rectangle,
\tikzstyle{abstracttitle} =[fill=white]
\DeclareMathOperator*{\argmax}{arg\,max}
\DeclareMathOperator*{\argmin}{arg\,min}
\tikzstyle{cblue}=[circle, draw, thin,fill=cyan!20, scale=0.8]
\tikzstyle{qgre}=[rectangle, draw, thin,fill=green!20, scale=0.8]
\tikzstyle{rpath}=[ultra thick, red, opacity=0.4]
\tikzstyle{legend_isps}=[rectangle, rounded corners, thin,
\tikzstyle{legend_overlay}=[rectangle, rounded corners, thin,
\tikzstyle{legend_phytop}=[rectangle, rounded corners, thin,
\tikzstyle{legend_general}=[rectangle, rounded corners, thin,
\colorlet{myRed}{red!20}
\tikzset{
  rows/.style 2 args={/utils/temp/.style={row ##1/.append style={nodes={#2}}},
    /utils/temp/.list={#1}},
  columns/.style 2 args={/utils/temp/.style={column ##1/.append style={nodes={#2}}},
    /utils/temp/.list={#1}}}
\definecolor{switch}{HTML}{006996}
    \pgfmathsetlength\pgfutil@tempdima{\pgfkeysvalueof{/pgf/parallelepiped
      offset x}}
    \pgfmathsetlength\pgfutil@tempdimb{\pgfkeysvalueof{/pgf/parallelepiped
      offset y}}
    \def\ppd@offset{\pgfpoint{\pgfutil@tempdima}{\pgfutil@tempdimb}}
\tikzset{anchor/.append code=\let\tikz@auto@anchor\relax,
  add font/.code=%
    \expandafter\def\expandafter\tikz@textfont\expandafter{\tikz@textfont#1},
  left delimiter/.style 2 args={append after command={\tikz@delimiter{south east}
    {south west}{every delimiter,every left delimiter,#2}{south}{north}{#1}{.}{\pgf@y}}}}
\tikzstyle{sms} = [rectangle callout, draw,very thick, rounded corners, minimum height=20pt]
\tikzset{anchor/.append code=\let\tikz@auto@anchor\relax,
  add font/.code=%
    \expandafter\def\expandafter\tikz@textfont\expandafter{\tikz@textfont#1},
  left delimiter/.style 2 args={append after command={\tikz@delimiter{south east}
    {south west}{every delimiter,every left delimiter,#2}{south}{north}{#1}{.}{\pgf@y}}}}
\tikzstyle{sms} = [rectangle callout, draw,very thick, rounded corners, minimum height=20pt]
\tikzset{l3 switch/.style={
    parallelepiped,fill=switch, draw=white,
    minimum width=0.75cm,
    minimum height=0.75cm,
    parallelepiped offset x=1.75mm,
    parallelepiped offset y=1.25mm,
    path picture={
      \node[fill=white,
        circle,
        minimum size=6pt,
        inner sep=0pt,
        append after command={
          \pgfextra{
            \foreach \angle in {0,45,...,360}
            \draw[-latex,fill=white] (\tikzlastnode.\angle)--++(\angle:2.25mm);
          }
        }
      ]
       at ([xshift=-0.75mm,yshift=-0.5mm]path picture bounding box.center){};
    }
  },
  ports/.style={
    line width=0.3pt,
    top color=gray!20,
    bottom color=gray!80
  },
  rack switch/.style={
    parallelepiped,fill=white, draw,
    minimum width=1.25cm,
    minimum height=0.25cm,
    parallelepiped offset x=2mm,
    parallelepiped offset y=1.25mm,
    xscale=-1,
    path picture={
      \draw[top color=gray!5,bottom color=gray!40]
      (path picture bounding box.south west) rectangle
      (path picture bounding box.north east);
      \coordinate (A-west) at ([xshift=-0.2cm]path picture bounding box.west);
      \coordinate (A-center) at ($(path picture bounding box.center)!0!(path
        picture bounding box.south)$);
      \foreach \x in {0.275,0.525,0.775}{
        \draw[ports]([yshift=-0.05cm]$(A-west)!\x!(A-center)$)
          rectangle +(0.1,0.05);
        \draw[ports]([yshift=-0.125cm]$(A-west)!\x!(A-center)$)
          rectangle +(0.1,0.05);
       }
      \coordinate (A-east) at (path picture bounding box.east);
      \foreach \x in {0.085,0.21,0.335,0.455,0.635,0.755,0.875,1}{
        \draw[ports]([yshift=-0.1125cm]$(A-east)!\x!(A-center)$)
          rectangle +(0.05,0.1);
      }
    }
  },
  server/.style={
    parallelepiped,
    fill=white, draw,
    minimum width=0.35cm,
    minimum height=0.75cm,
    parallelepiped offset x=3mm,
    parallelepiped offset y=2mm,
    xscale=-1,
    path picture={
      \draw[top color=gray!5,bottom color=gray!40]
      (path picture bounding box.south west) rectangle
      (path picture bounding box.north east);
      \coordinate (A-center) at ($(path picture bounding box.center)!0!(path
        picture bounding box.south)$);
      \coordinate (A-west) at ([xshift=-0.575cm]path picture bounding box.west);
      \draw[ports]([yshift=0.1cm]$(A-west)!0!(A-center)$)
        rectangle +(0.2,0.065);
      \draw[ports]([yshift=0.01cm]$(A-west)!0.085!(A-center)$)
        rectangle +(0.15,0.05);
      \fill[black]([yshift=-0.35cm]$(A-west)!-0.1!(A-center)$)
        rectangle +(0.235,0.0175);
      \fill[black]([yshift=-0.385cm]$(A-west)!-0.1!(A-center)$)
        rectangle +(0.235,0.0175);
      \fill[black]([yshift=-0.42cm]$(A-west)!-0.1!(A-center)$)
        rectangle +(0.235,0.0175);
    }
  },
}
\pgfplotsset{compat=1.16}
\tikzset{%
  interface/.style={draw, rectangle, rounded corners, font=\LARGE\sffamily},
  ethernet/.style={interface, fill=yellow!50},
  serial/.style={interface, fill=green!70},
  speed/.style={sloped, anchor=south, font=\large\sffamily},
  route/.style={draw, shape=single arrow, single arrow head extend=4mm,
    minimum height=1.7cm, minimum width=3mm, white, fill=switch!20,
    drop shadow={opacity=.8, fill=switch}, font=\tiny}
}
\newcommand*{\shift}{1.3cm}
\newcommand{\Crossk}{$\mathbin{\tikz [x=1.2ex,y=1.2ex,line width=.1ex, black] \draw (0,0) -- (1,1) (0,1) -- (1,0);}$}%
\newcommand*{\router}[1]{
\begin{tikzpicture}
  \coordinate (ll) at (-3,0.5);
  \coordinate (lr) at (3,0.5);
  \coordinate (ul) at (-3,2);
  \coordinate (ur) at (3,2);
  \shade [shading angle=90, left color=switch, right color=white] (ll)
    arc (-180:-60:3cm and .75cm) -- +(0,1.5) arc (-60:-180:3cm and .75cm)
    -- cycle;
  \shade [shading angle=270, right color=switch, left color=white!50] (lr)
    arc (0:-60:3cm and .75cm) -- +(0,1.5) arc (-60:0:3cm and .75cm) -- cycle;
  \draw [thick] (ll) arc (-180:0:3cm and .75cm)
    -- (ur) arc (0:-180:3cm and .75cm) -- cycle;
  \draw [thick, shade, upper left=switch, lower left=switch,
    upper right=switch, lower right=white] (ul)
    arc (-180:180:3cm and .75cm);
  \node at (0,0.5){\color{blue!60!black}\Huge #1};
  \begin{scope}[yshift=2cm, yscale=0.28, transform shape]
    \node[route, rotate=45, xshift=\shift] {\strut};
    \node[route, rotate=-45, xshift=-\shift] {\strut};
    \node[route, rotate=-135, xshift=\shift] {\strut};
    \node[route, rotate=135, xshift=-\shift] {\strut};
  \end{scope}
\end{tikzpicture}}
  \def\tikz@shading{cloud}\tikz@addmode{\tikz@mode@shadetrue}}
\tikzset{my cloud/.style={
     cloud, draw, aspect=2,
     cloud color={gray!5!white}
  }
}
\newcommand{\acro}[1]{\textsc{#1}\xspace}
\newcommand{\acros}[1]{\textsc{#1}s\xspace}
\newcommand{\acrop}[1]{\textsc{#1}\xspace}
\newcommand{\tpp}{\acrop{tp-2}}
\newcommand{\posg}{\acro{posg}}
\newcommand{\posgs}{\acros{posg}}
\newcommand{\mdp}{\acro{mdp}}
\newcommand{\pomdp}{\acro{pomdp}}
\newcommand{\mdps}{\acros{mdp}}
\newcommand{\pomdps}{\acros{pomdp}}
\newcommand{\idps}{\acro{idps}}
\newcommand{\ipss}{\acros{ips}}
\newcommand{\idss}{\acros{ids}}
\newcommand{\irss}{\acros{irs}}
\newcommand{\kl}{\acro{kl}}
\newcommand{\cpu}{\acro{cpu}}
\newcommand{\cgroups}{\acro{cgroups}}
\newcommand{\alphago}{\acro{alphago}}
\newcommand{\ssh}{\acro{ssh}}
\newcommand{\openai}{\acro{openai}}
\newcommand{\five}{\acro{five}}
\newcommand{\irc}{\acro{irc}}
\newcommand{\smtp}{\acro{smtp}}
\newcommand{\snmp}{\acro{snmp}}
\newcommand{\icmp}{\acro{icmp}}
\newcommand{\ntp}{\acro{ntp}}
\newcommand{\debian}{\acro{debian}}
\newcommand{\jessie}{\acro{jessie}}
\newcommand{\wheezy}{\acro{wheezy}}
\newcommand{\samba}{\acro{samba}}
\newcommand{\elastic}{\acro{elastic}}
\newcommand{\tomcat}{\acro{tomcat}}
\newcommand{\dns}{\acro{dns}}
\newcommand{\snort}{\acro{snort}}
\newcommand{\http}{\acro{http}}
\newcommand{\netem}{\acro{netem}}
\newcommand{\apache}{\acro{apache}}
\newcommand{\ts}{\acro{ts}}
\newcommand{\mysql}{\acro{mysql}}
\newcommand{\tcpp}{\acro{tcp}}
\newcommand{\xmas}{\acro{xmas}}
\newcommand{\finn}{\acro{fin}}
\newcommand{\nulll}{\acro{null}}
\newcommand{\udp}{\acro{udp}}
\newcommand{\syn}{\acro{syn}}
\newcommand{\mongo}{\acro{mongodb}}
\newcommand{\postgres}{\acro{postgres}}
\newcommand{\telnet}{\acro{telnet}}
\newcommand{\cassandra}{\acro{cassandra}}
\newcommand{\ubuntu}{\acro{ubuntu}}
\newcommand{\flask}{\acro{flask}}
\newcommand{\icml}{\acro{icml}}
\newcommand{\flipit}{\acro{flipit}}
\newcommand{\mininet}{\acro{mininet}}
\newcommand{\rll}{\acro{rl}}
\newcommand{\sll}{\acro{sl}}
\newcommand{\vulscan}{\acro{vulscan}}
\newcommand{\ftp}{\acro{ftp}}
\newcommand{\proftp}{\acro{proftpd}}
\newcommand{\cve}{\acro{cve}}
\newcommand{\cwe}{\acro{cwe}}
\newcommand{\ppo}{\acro{ppo}}
\newcommand{\sarsa}{\acro{sarsa}}
\newcommand{\hsvi}{\acro{hsvi}}
\newcommand{\spsa}{\acro{spsa}}
\newcommand{\dqn}{\acro{dqn}}
\newcommand{\ddqn}{\acro{ddqn}}
\newcommand{\tfp}{\acro{t-fp}}
\newcommand{\nfsp}{\acro{nfsp}}
\newcommand{\ddpg}{\acro{ddpg}}
\newcommand{\sgs}{\acros{sg}}
\newcommand{\ac}{\acro{a2c}}
\newcommand{\acc}{\acro{a3c}}
\newcommand{\muzero}{\acro{muzero}}
\newcommand{\nfq}{\acro{nfq}}
\begin{document}
\bstctlcite{MyBSTcontrol}
\title{Learning Near-Optimal Intrusion Responses \\ Against Dynamic Attackers}

\author{\IEEEauthorblockN{Kim Hammar \IEEEauthorrefmark{2}\IEEEauthorrefmark{3} and Rolf Stadler\IEEEauthorrefmark{2}\IEEEauthorrefmark{3}}

 \IEEEauthorblockA{\IEEEauthorrefmark{2}
Division of Network and Systems Engineering, KTH Royal Institute of Technology, Sweden
 }\\
 \IEEEauthorblockA{\IEEEauthorrefmark{3} KTH Center for Cyber Defense and Information Security, Sweden \\
Email: \{kimham, stadler\}@kth.se%
\\
\today
}
}

\maketitle
\begin{abstract}
We study automated intrusion response and formulate the interaction between an attacker and a defender as an optimal stopping game where attack and defense strategies evolve through reinforcement learning and self-play. The game-theoretic modeling enables us to find defender strategies that are effective against a dynamic attacker, i.e. an attacker that adapts its strategy in response to the defender strategy. Further, the optimal stopping formulation allows us to prove that optimal strategies have threshold properties. To obtain near-optimal defender strategies, we develop Threshold Fictitious Self-Play (\tfp), a fictitious self-play algorithm that learns Nash equilibria through stochastic approximation. We show that \tfp outperforms a state-of-the-art algorithm for our use case. The experimental part of this investigation includes two systems: a simulation system where defender strategies are incrementally learned and an emulation system where statistics are collected that drive simulation runs and where learned strategies are evaluated. We argue that this approach can produce effective defender strategies for a practical IT infrastructure.
\end{abstract}

\begin{IEEEkeywords}
Cybersecurity, network security, automated security, intrusion response, optimal stopping, Dynkin games, reinforcement learning, game theory, Markov decision process, \mdp, \pomdp.
\end{IEEEkeywords}

\IEEEpeerreviewmaketitle

\section{Introduction}
An organization's security strategy has traditionally been defined, implemented, and updated by domain experts \cite{int_prevention}. This approach can provide basic security for an organization's communication and computing infrastructure. As infrastructure update cycles become shorter and attacks increase in sophistication, meeting the security requirements becomes increasingly difficult. To address this challenge, significant efforts have started to automate the process of obtaining security strategies \cite{ml_for_cognitive_net_management}. Examples of this research include: computation of defender strategies using dynamic programming and control theory \cite{dp_security_1,Miehling_control_theoretic_approaches_summary}; computation of exploits and corresponding defenses through evolutionary methods \cite{armsrace_malware,hemberg_oreily_evo}; computation of defender strategies through game-theoretic methods \cite{nework_security_alpcan, serkan_gyorgy_game}; derivation of defender responses through causal inference \cite{causal_neil_agent}; use of machine learning techniques to estimate model parameters and strategies \cite{hammar_stadler_tnsm, hammar_stadler_cnsm_20, hammar_stadler_cnsm_21}; automated creation of threat models \cite{mal_pontus}; and identification of infrastructure vulnerabilities through attack simulations and threat intelligence \cite{wagner_automated_segmentation, threat_intel_misp}.

A promising new direction of research is automatically learning security strategies through reinforcement learning methods \cite{rl_bible}, whereby the problem of finding security strategies is modeled as a Markov decision problem and strategies are learned through simulation (see surveys \cite{deep_rl_cyber_sec,control_rl_reviews}). While encouraging results have been obtained following this approach \cite{hammar_stadler_tnsm,hammar_stadler_cnsm_20,hammar_stadler_cnsm_21, hammar_stadler_cnsm_22, elderman, schwartz_2020, oslo_pentest_rl, kurt_rl, microsoft_red_teaming, ridley_ml_defense, rl_cyberdefense_heartbleed, deep_hierarchical_rl_pentest, pentest_rl_rohit, adaptive_cyber_defense_pomdp_rl, muzero_sdn,atmos,sdn_rl_ddos,deep_air,noms_demo_preprint,game_cyber_rl_sim,al_shaer_book_ppo_simulation,cmu_ppo_selfplay,YUNGAICELANAULA2022103444,sdn_zolo,9923774,9916301, 9893186,rl_qcd_kalle,MAEDA2021102108,BLAND2020101738, HUANG2022102844,LIU2021102480,9096506,Khoury2020AHG,Wang2020AnID, han_yi_sdn,stochastic_game_approach_2,Iannucci2021AnIR,Iannucci2019APE,beyond_cage,apt_rl_simulation,zhang2019}, key challenges remain \cite{real_world_rl}. Chief among them is narrowing the gap between the environment where strategies are evaluated and a scenario playing out in a real system. Most of the results obtained so far are limited to simulation environments, and it is not clear how they generalize to practical IT infrastructures. Another challenge is to obtain security strategies that are effective against a \textit{dynamic} attacker, i.e. an attacker that adapts its strategy in response to the defender strategy. Most of the prior work have used reinforcement learning to find effective defender strategies against static attackers, and little is known about the found strategies' performance against a dynamic attacker.
\begin{figure}
  \centering
  \scalebox{0.93}{
    \input{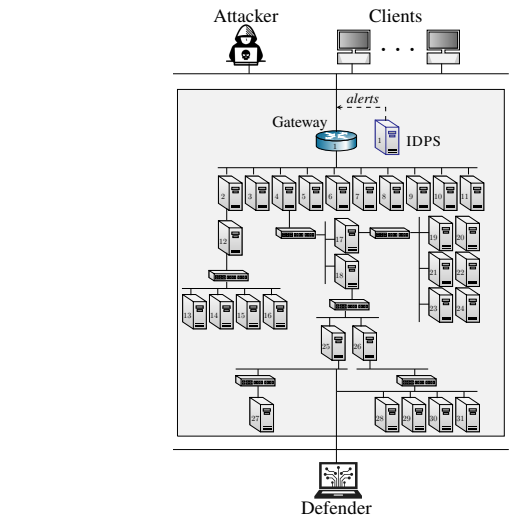}
    }
    \caption{The IT infrastructure and the actors in the intrusion response use case.}
    \label{fig:system2}
  \end{figure}

In this paper, we address the above challenges and present a novel framework to automatically learn a defender strategy against a dynamic attacker. We apply this framework to an \textit{intrusion response} use case, which involves the IT infrastructure of an organization (see Fig. \ref{fig:system2}). The operator of this infrastructure, which we call the defender, takes measures to protect it against an attacker while providing services to a client population.

We formulate the intrusion response use case as an \textit{optimal stopping game}, namely a stochastic game where both players face an optimal stopping problem \cite{wald,dynkin_orig_3,shirayev_change_point}. This formulation enables us to gain insight into the structure of optimal strategies, which we prove to have threshold properties. To obtain effective defender strategies, we use reinforcement learning and self-play. Based on the threshold properties, we design Threshold Fictitious Self-Play (\tfp), an efficient algorithm that iteratively computes near-optimal defender strategies against a dynamic attacker.

Our method for learning and evaluating strategies for a given infrastructure includes building two systems (see Fig. \ref{fig:method}). First, we develop an \textit{emulation system} where key functional components of the target infrastructure are replicated. This system closely approximates the functionality of the target infrastructure and is used to run attack scenarios and defender responses. Such runs produce system measurements and logs, from which we estimate infrastructure statistics, which then are used to instantiate the simulation model.

Second, we build a \textit{simulation system} where game episodes are simulated and strategies are incrementally learned through self-play. Learned strategies are extracted from the simulation system and evaluated in the emulation system.

Two benefits of this method are: (\textit{i}) that the emulation system allows evaluating strategies without affecting operational workflows on the target infrastructure; and (\textit{ii}) that the simulation system enables efficient and rapid learning of strategies. (A video demonstration of the software framework that implements the emulation and simulation systems is available at \cite{fig2_video_demonstration}.)
\begin{figure}
  \centering
  \scalebox{0.95}{
    \input{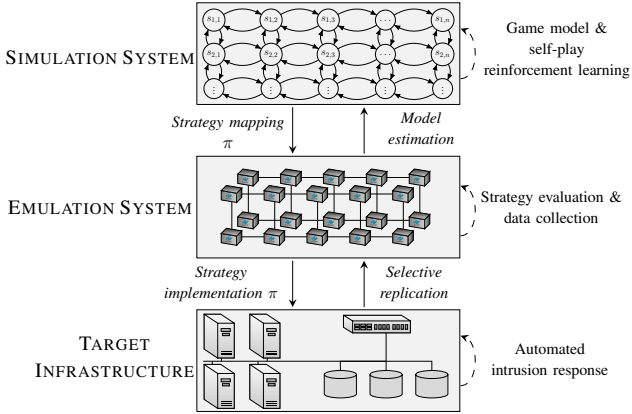}
 }
    \caption{Our framework for finding and evaluating intrusion response strategies \cite{hammar_stadler_tnsm}.}
    \label{fig:method}
\end{figure}

We make three contributions with this paper. First, we formulate intrusion response as an optimal stopping game between an attacker and a defender. This novel formulation allows us a) to derive and prove structural properties of optimal strategies; and b) to find defender strategies that are effective against an attacker with a dynamic strategy. We thus address a key limitation of many related works, which only consider static attackers \cite{hammar_stadler_tnsm,oslo_pentest_rl,ridley_ml_defense,deep_hierarchical_rl_pentest,pentest_rl_rohit,deep_air,hammar_stadler_cnsm_21,hammar_stadler_cnsm_22,adaptive_cyber_defense_pomdp_rl,kurt_rl,YUNGAICELANAULA2022103444,sdn_zolo,9916301, 9893186,7568529,Kreidl2004FeedbackCA,7573127,miehling_attack_graph,MAEDA2021102108,Iannucci2019APE,Iannucci2021AnIR,Wang2020AnID,3560830.3563732,beyond_cage,9984930,model_free_rl_irs}. Second, we propose \tfp, an efficient reinforcement learning algorithm that exploits threshold properties of optimal stopping strategies and outperforms a state-of-the-art algorithm for our use case. Third, we provide evaluation results from an emulated infrastructure. This addresses a drawback in related research that relies solely on simulations to learn and evaluate strategies \cite{hammar_stadler_cnsm_20,hammar_stadler_cnsm_21, elderman, schwartz_2020, oslo_pentest_rl, kurt_rl, microsoft_red_teaming, ridley_ml_defense, rl_cyberdefense_heartbleed, deep_hierarchical_rl_pentest, pentest_rl_rohit, adaptive_cyber_defense_pomdp_rl,serkan_gyorgy_game,flipit,dynamic_game_linan_zhu,general_sum_markov_games_for_strategic_detection_of_apt,game_cyber_rl_sim,cmu_ppo_selfplay,al_shaer_book_ppo_simulation,mec_game_rl_q_learning_sim,nfsp_jamming_1_sim,9923774,9916301,miehling_attack_graph,7573127,7568529,BLAND2020101738,HUANG2022102844,LIU2021102480,HUANG2020101660,9096506,stochastic_game_approach_2,Iannucci2019APE,Wang2021GameTheoreticAI,3560830.3563732,9984930,apt_rl_simulation,zhang2019,fog_computing_irs, altman_jamming_1}.

We believe that this paper provides a foundation for the next generation of security systems, including Intrusion Prevention Systems (\ipss) (e.g. Trellix \cite{trellix}), Intrusion Response Systems (\irss) (e.g. Wazuh \cite{wazuh}), and Intrusion Detection Systems (\idss) (e.g. Snort \cite{snort}). The optimal stopping strategies computed through our framework can be used in these systems to decide at which point in time an automated response action should be triggered or at which point in time a human operator should be alerted to take action.

The work in this paper builds on our earlier results in automated intrusion response \cite{hammar_stadler_cnsm_21, hammar_stadler_tnsm, hammar2022learning}. Specifically, this paper can be seen as a generalization of the work in \cite{hammar_stadler_tnsm}, where we investigate intrusion response against a static attacker. As explained in this paper, intrusion response against a dynamic attacker requires a different mathematical framework. An extended abstract of this paper was presented at the ``Machine learning for cyber security'' workshop at the International Conference on Machine Learning (\icml) 2022 \cite{hammar2022learning}.
\section{The Intrusion Response Use Case}\label{sec:use_case}
We consider an intrusion response use case that involves the IT infrastructure of an organization. The operator of this infrastructure, which we call the defender, takes measures to protect it against an attacker while providing services to a client population (Fig. \ref{fig:system2}). The infrastructure includes a set of servers that run the services and an Intrusion Detection and Prevention System (\idps) that logs events in real-time. Clients access the services through a public gateway, which is also open to the attacker.

The attacker's goal is to intrude on the infrastructure and compromise its servers. To achieve this, the attacker explores the infrastructure through reconnaissance and exploits vulnerabilities while avoiding detection by the defender. The attacker decides when to start an  intrusion and may stop the intrusion at any moment. During the intrusion, the attacker follows a pre-defined strategy. When deciding the time to start or stop an intrusion, the attacker considers both the gain of compromising additional servers and the risk of detection. The optimal strategy for the attacker is to compromise as many servers as possible without being detected.

The defender continuously monitors the infrastructure through accessing and analyzing \idps alerts and other statistics. It can take a fixed number of defensive actions, each of which has a cost and a chance of stopping an ongoing attack. An example of a defensive action is to drop network traffic that triggers \idps alerts of a certain priority. The defender takes actions in a pre-determined order, starting with the action that has the lowest cost. The final action blocks all external access to the gateway, which disrupts any intrusion as well as the services to the clients.

When deciding the time for taking a defensive action, the defender balances two objectives: (\textit{i}) maintain services to its clients; and (\textit{ii}) stop a possible intrusion at the lowest cost. The optimal strategy for the defender is to monitor the infrastructure and maintain services until the moment when the attacker enters through the gateway, at which time the attack must be stopped at minimal cost through defensive actions. The challenge for the defender is to identify this precise moment.

\section{Formalizing The Intrusion Response Use Case}\label{sec:formal_model_2}
We formulate the above intrusion response use case as a partially observed stochastic game. The attacker wins the game when it can intrude on the infrastructure and hide its actions from the defender. Similarly, the defender wins the game when it manages to stop an intrusion. It is a zero-sum game, which means that the gain of one player equals the loss of the other player.

The attacker and the defender have different observability in the game. The defender observes alerts from an Intrusion Detection and Prevention System (\idps) but has no certainty about the presence of an attacker or the state of a possible intrusion. The attacker, on the other hand, is assumed to have complete observability. It has access to all the information that the defender has access to, as well as the defender's past actions. This means that the defender has to find strategies that are effective against an opponent that has more knowledge than itself.

The reward function of the game encodes the defender's objective. An optimal defender strategy \textit{maximizes} the reward when facing an attacker with an optimal strategy, i.e. a worst-case attacker. Similarly, an optimal attacker strategy \textit{minimizes} the reward when facing a worst-case defender. Such a pair of optimal strategies is known as a Nash equilibrium in game theory \cite{nash51}.

We model the game as a finite, zero-sum Partially Observed Stochastic Game (\posg) with one-sided partial observability:
\begin{align}
\Gamma = \langle \mathcal{N}, \mathcal{S}, (\mathcal{A}_i)_{i \in \mathcal{N}}, \mathcal{T}, \mathcal{R}, \gamma,  \rho_1, T, \mathcal{O}, \mathcal{Z} \rangle \label{eq:game_def}
\end{align}
It is a discrete-time game that starts at time $t=1$ and ends at time $t=T$. In the following, we describe the components of the game, its evolution, and the objectives of the players.

\textbf{Players $\mathcal{N}$.} The game has two players: player $\mathrm{D}$ is the defender and player $\mathrm{A}$ is the attacker. Hence, $\mathcal{N}=\{\mathrm{D},\mathrm{A}\}$.

\textbf{Time horizon $T$.}
The time horizon $T$ is a random variable that depends on both players' strategies and takes values in the set $T \in \{2,3,\hdots,\infty\}$.

\textbf{State space $\mathcal{S}$.} The game has three states: $s_t=0$ if no intrusion occurs, $s_t=1$ if an intrusion is ongoing, and $s_T=\emptyset$ if the game has ended. Hence, $\mathcal{S}=\{0,1,\emptyset\}$. The state at time $t$, $s_t$, is a realization of the random variable $S_t$. The initial state is $s_1=0$. Hence the initial state distribution $\rho_1\colon \mathcal{S} \rightarrow [0,1]$ is the degenerate distribution $\rho_1(0)=1$.

\textbf{Action spaces $\mathcal{A}_i$.} Each player $i\in \mathcal{N}$ can invoke two actions: ``stop'' ($\mathfrak{S}$) and ``continue'' ($\mathfrak{C}$). The action spaces are thus $\mathcal{A}_{\mathrm{D}}=\mathcal{A}_{\mathrm{A}}=\{\mathfrak{S},\mathfrak{C}\}$. Executing action $\mathfrak{S}$ triggers a change in the game while action $\mathfrak{C}$ is a passive action. In the following, we encode $\mathfrak{S}$ with $1$ and $\mathfrak{C}$ with $0$.

The attacker can invoke the stop action twice: the first time to start the intrusion and the second time to terminate it.

The defender can invoke the stop action $L \geq 1$ times. A stop action is a defensive action against a possible intrusion. The number of stop actions remaining to the defender is known to both players and is denoted by $l \in \{1,\hdots,L\}$.

At each time-step, the attacker and the defender simultaneously choose an action $\mathbf{a}_t = (a^{(\mathrm{D})}_t, a^{(\mathrm{A})}_t)$, where $a^{(i)}_t \in \mathcal{A}_i$ is a realization of the random variable $A^{(i)}_t$ and $\mathbf{a}_t$ is a realization of the random vector $\mathbf{A}_t$.

\textbf{Observation space $\mathcal{O}$.} The attacker has complete observability and knows the game state, the defender's actions, and the defender's observations. In contrast, the defender has a limited set of observations $o_t \in \mathcal{O}$, where $\mathcal{O}$ is a discrete set. (In our use case, $o_t$ relates to the weighted sum of \idps alerts triggered during time-step $t$. We focus on the \idps alert metric as it provides more information than other possible metrics for detecting intrusions, see Appendix \ref{appendix:infrastructure_metrics} for details.)

Both players have perfect recall, meaning they remember their respective play history. The history of the defender at time-step $t$ is the vector $\mathbf{h}^{(\mathrm{D})}_t=\allowbreak(\rho_1, a^{(\mathrm{D})}_{1}\allowbreak, o_{1}\allowbreak, \hdots,\allowbreak a^{(\mathrm{D})}_{t-1},\allowbreak o_{t})$ and the history of the attacker is the vector $\mathbf{h}^{(\mathrm{A})}_t=\allowbreak(\rho_1,\allowbreak a^{(1)}_{1},\allowbreak a^{(\mathrm{A})}_{1},\allowbreak o_{1},\allowbreak s_1,\allowbreak \hdots,\allowbreak a^{(\mathrm{D})}_{t-1},\allowbreak a^{(\mathrm{A})}_{t-1},\allowbreak o_{t},\allowbreak s_{t})$, which are realizations of the random vectors $\mathbf{H}^{(\mathrm{D})}_t$ and $\mathbf{H}^{(\mathrm{A})}_t$, respectively.

\textbf{Belief space $\mathcal{B}$.} Based on its history $\mathbf{h}^{(\mathrm{D})}_t$, the defender forms a belief about the game state $s_t$, which is expressed in the \textit{belief state} $b_t(s_t)=\mathbb{P}[S_t=s_t\mid \mathbf{H}^{(\mathrm{D})}_t]$. Since $s_{t} \in \{0,1\}$ and $b_{t}(0) = 1-b_t(1)$ for $t<T$, we can model $\mathcal{B} = [0,1] \subset \mathbb{R}$, where $b_t(1) \in \mathcal{B}$ is a realization of the random variable $B_t$.

\textbf{Transition probabilities $\mathcal{T}$.} At each time-step $t$, a state transition from $s_t$ to $s_{t+1}$ occurs with probability $\mathcal{T}\big(s_{t+1}, s_t,\allowbreak (a^{(\mathrm{D})}_t, \allowbreak a^{(\mathrm{A})}_t)\big) = \allowbreak\mathbb{P}\big[S_{t+1}=s_{t+1}\mid \allowbreak S_t=s_t, \allowbreak \mathbf{A}_t=\allowbreak(a^{(\mathrm{D})}_t,\allowbreak a^{(\mathrm{A})}_t)\allowbreak\big]$:
\begin{align}
&\mathcal{T}_{l>1}\big(0,0,(\mathfrak{S},\mathfrak{C})\big)=\mathcal{T}\big(0,0,(\mathfrak{C},\mathfrak{C})\big)=1\label{eq:tp_1}\\
&\mathcal{T}_{l>1}\big(1,1,(\cdot,\mathfrak{C})\big)=\mathcal{T}\big(1,1,(\mathfrak{C},\mathfrak{C})\big) = 1-\phi_{l}  \label{eq:tp_2}\\
&\mathcal{T}_{l>1}\big(1,0,(\cdot,\mathfrak{S})\big)=\mathcal{T}\big(1,0,(\mathfrak{C},\mathfrak{S})\big) = 1\label{eq:tp_3}\\
&\mathcal{T}_{l>1}\big(\emptyset,1,(\cdot,\mathfrak{C})\big)=\mathcal{T}\big(\emptyset,1,(\mathfrak{C},\mathfrak{C})\big)=\phi_{l}\label{eq:tp_4}\\
&\mathcal{T}_{l=1}\big(\emptyset,\cdot,(\mathfrak{S},\cdot)\big)=\mathcal{T}(\emptyset,\emptyset,\cdot)=\mathcal{T}(\emptyset,1,(\cdot, \mathfrak{S}))=1\label{eq:tp_7}
\end{align}
where $\mathcal{T}_{l>1}$ and $\mathcal{T}_{l=1}$ refer to the transition probabilities when $l>1$ and $l=1$, respectively ($\mathcal{T}$ denotes the transition probabilities for any value of $l$). All other state transitions have probability $0$.

(\ref{eq:tp_1})--(\ref{eq:tp_2}) define the probabilities of the recurrent state transitions $0\rightarrow 0$ and $1\rightarrow 1$. The game stays in state $0$ with probability $1$ if the attacker selects action $\mathfrak{C}$ and $l_t-a_t^{(\mathrm{D})}>0$. (Note that $l_{t+1}=l_t-a_t^{(\mathrm{D})}$.) Similarly, the game stays in state $1$ with probability $1-\phi_{l}$ if the attacker chooses action $\mathfrak{C}$ and $l_t-a_t^{(\mathrm{D})}>0$. Here $\phi_{l}$ denotes the probability that the intrusion is stopped, which is a parameter of the use case. The intrusion can be stopped at any time-step as a consequence of previous stop actions by the defender. We assume that $\phi_{l}$ increases with each stop action that the defender takes.

(\ref{eq:tp_3}) captures the transition $0 \rightarrow 1$, which occurs when the attacker chooses action $\mathfrak{S}$ and $l_t-a_t^{(\mathrm{D})}>0$. (\ref{eq:tp_4})--(\ref{eq:tp_7}) define the probabilities of the transitions to the terminal state $\emptyset$. The terminal state is reached in three cases: (\textit{i}) when $l_t=1$ and the defender takes the final stop action $\mathfrak{S}$ (i.e. when $l_t-a^{(\mathrm{D})}_t=0$); (\textit{ii}) when the intrusion is stopped by the defender with probability $\phi_{l}$; and (\textit{iii}) when $s_t=1$ and the attacker terminates the intrusion ($a^{(\mathrm{A})}_t=1$).

The evolution of the game can be described with the state transition diagram in Fig. \ref{fig:state_transitions}. The figure captures a game \textit{episode}, which starts at $t=1$ and ends at $t=T$.

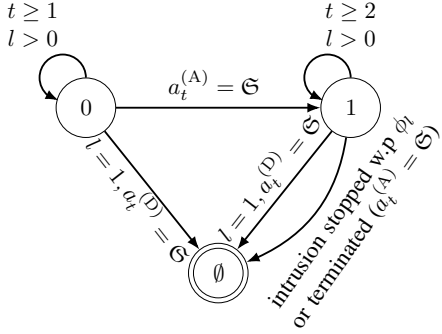
\begin{figure}
  \centering
  \scalebox{1.1}{
    \begin{tikzpicture}[fill=white, >=stealth,
    node distance=3cm,
    database/.style={
      cylinder,
      cylinder uses custom fill,
      shape border rotate=90,
      aspect=0.25,
      draw}]

\node[scale=0.8] (kth_cr) at (0,2.15)
{
  \begin{tikzpicture}

\node[scale=1] (level1) at (-1.7,-5.6)
{
  \begin{tikzpicture}
\node[draw,circle, minimum width=15mm, scale=0.6](s0) at (0,0) {};
\node[draw,circle, minimum width=15mm, scale=0.6](s1) at (4,0) {};
\node[draw,circle, minimum width=15mm, scale=0.6](s2) at (2,-2.5) {};
\node[draw,circle, minimum width=15mm, scale=0.5](s4) at (2,-2.5) {};

\node[inner sep=0pt,align=center, scale=1] (time) at (0.07,0)
{
$0$
};

\node[inner sep=0pt,align=center, scale=1] (time) at (4.07,0)
{
$1$
};

\node[inner sep=0pt,align=center, scale=1] (time) at (2.07,-2.5)
{
$\emptyset$
};

\node[inner sep=0pt,align=center, scale=1] (time) at (-0.8,1.25)
{
$t\geq 1$\\
$l > 0$
};

\node[inner sep=0pt,align=center, scale=1] (time) at (4,1.25)
{
$t\geq 2$\\
$l > 0$
};

\node[inner sep=0pt,align=center, scale=1] (time) at (2,0.35)
{
$a^{(\mathrm{A})}_t=\mathfrak{S}$
};

\node[inner sep=0pt,align=center, scale=1,rotate=-52] (time) at (0.84,-1.45)
{
  $l=1, a_t^{(\mathrm{D})}=\mathfrak{S}$
};

\node[inner sep=0pt,align=center, scale=1,rotate=52] (time) at (2.8,-1)
{
  $l=1, a_t^{(\mathrm{D})}=\mathfrak{S}$
};

\node[inner sep=0pt,align=center, scale=1,rotate=55] (time) at (4.05,-1.68)
{
  intrusion stopped w.p $\phi_l$\\
  or terminated ($a^{(\mathrm{A})}_t=\mathfrak{S}$)
};


\draw[thick,-{Latex[length=2mm]}] (s0) to (s1);
\draw[thick,-{Latex[length=2mm]}] (s1) to (s2);
\draw[thick,-{Latex[length=2mm]}] (s0) to (s2);

\draw[thick,-{Latex[length=2mm]}, bend left=30] (s1) to (s2);

\draw[thick,-{Latex[length=2mm]}] (s0.90) arc (0:260:3.5mm);

\draw[thick,-{Latex[length=2mm]}] (s1.90) arc (0:260:3.5mm);

    \end{tikzpicture}
  };
    \end{tikzpicture}
  };

\end{tikzpicture}
    }
    \caption{State transition diagram of a game episode: each disk represents a state; an arrow represents a state transition; a label indicates the conditions for the state transition (w.p means ``with probability''); a game episode starts in state $s_1=0$ with $l=L$ and ends in state $s_T=\emptyset$.}
    \label{fig:state_transitions}
  \end{figure}

\textbf{Reward function $\mathcal{R}$.} At time-step $t$, the defender receives the reward $r_t = \mathcal{R}(s_t,(a^{(\mathrm{D})}_t, a^{(\mathrm{A})}_t))$ and the attacker receives the reward $-r_t$. Here $r_t \in \mathbb{R}$ is a realization of the random variable $R_t$. The reward function $\mathcal{R}$ is parameterized by the defender's reward for stopping an intrusion ($\mathrm{R}_{\mathrm{st}} \in \mathbb{R}_{>0}$), the defender's cost of taking a defensive action ($\mathrm{R}_{\mathrm{cost}} \in \mathbb{R}_{<0}$), and the defender's cost while an intrusion occurs ($\mathrm{R}_{\mathrm{int}} \in \mathbb{R}_{<0}$):
\begin{align}
&\mathcal{R}(\emptyset, \cdot)=0\label{eq:reward_0}\\
&\mathcal{R}\big(1, (\cdot,\mathfrak{S})\big) = 0 \label{eq:reward_1}\\
&\mathcal{R}\big(0, (\mathfrak{C},\cdot)\big)=0 \label{eq:reward_2}\\
&\mathcal{R}\big(0, (\mathfrak{S},\cdot)\big)=\frac{\mathrm{R}_{\mathrm{cost}}}{l_t}    \label{eq:reward_3}\\
&\mathcal{R}\big(1, (\mathfrak{S},\mathfrak{C})\big)=\frac{\mathrm{R}_{\mathrm{st}}}{l_t}  \label{eq:reward_4}\\
&\mathcal{R}\big(1, (\mathfrak{C},\mathfrak{C})\big)=\mathrm{R}_{\mathrm{int}}  \label{eq:reward_5}
\end{align}
(\ref{eq:reward_0})--(\ref{eq:reward_1}) state that the reward is zero in the terminal state and when the attacker terminates an intrusion. (\ref{eq:reward_2}) states that the defender incurs no cost when no attack occurs and it does not take a defensive action. (\ref{eq:reward_3}) indicates that the defender incurs a cost when taking a defensive action if no intrusion is ongoing. (\ref{eq:reward_4}) states that the defender receives a reward when taking a stop action while an intrusion occurs. Lastly, (\ref{eq:reward_5}) indicates that the defender incurs a cost for each time-step during which an intrusion occurs.

\textbf{Observation function $\mathcal{Z}$.} At time-step $t$, $o_t \in \mathcal{O}$ is drawn from a random variable $O$ whose distribution $f_{O}$ depends on the current state $s_{t}$. We define $\mathcal{Z}(o_{t},s_t,(a^{(\mathrm{D})}_{t-1}, a^{(\mathrm{A})}_{t-1}))=\allowbreak \mathbb{P}[O=o_t\mid \allowbreak S_t=s_t, \allowbreak \mathbf{A}_t=(a^{(\mathrm{D})}_{t-1},\allowbreak a^{(\mathrm{A})}_{t-1})\allowbreak ]$ as follows:
\begin{align}
&\mathcal{Z}\big(o_t,0,\cdot \big) = f_{O}(o_t \mid 0) \label{eq:obs_1}\\
&\mathcal{Z}\big(o_t,1,\cdot \big) = f_{O}(o_t \mid 1)\label{eq:obs_2}
\end{align}\normalsize

\textbf{Player strategies $\pi_{i}$.} A defender strategy is a function $\pi_{\mathrm{D}} \in \Pi_{\mathrm{D}} = \{1,\hdots, L\} \times \mathcal{B} \rightarrow \Delta(\mathcal{A}_{\mathrm{D}})$, where $\Delta(\mathcal{A}_{\mathrm{D}})$ denotes the set of probability distributions over $\mathcal{A}_{\mathrm{D}}$. Similarly, an attacker strategy is a function $\pi_{\mathrm{A}} \in \Pi_{\mathrm{A}} = \{1,\hdots, L\} \times \mathcal{B} \times \mathcal{S} \rightarrow \Delta(\mathcal{A}_{\mathrm{A}})$. The strategies for both players are dependent on $l$ but independent of $t$ (i.e. strategies are stationary). If $\pi_{i}$ always maps on to an action with probability $1$, it is called \textit{pure}, otherwise it is called \textit{mixed}. In other words, a pure strategy is deterministic and a mixed strategy is stochastic.

\textbf{Belief update.} At time-step $t>1$, the defender updates the belief state $b_{t-1}$ using the equation
\begin{align}
&b_{t}(s_{t}) = C \sum_{s_{t-1} \in \mathcal{S}}\sum_{a^{(\mathrm{A})}_{t-1} \in \mathcal{A}_{\mathrm{A}}}\Big(b_{t-1}(s_{t-1})\pi_{\mathrm{A}}(a^{(\mathrm{A})}_{t-1}\mid s_{t-1},b_{t-1})\cdot \nonumber\\
  &\mathcal{Z}(o_{t},s_{t},(a^{(\mathrm{D})}_{t-1},a^{(\mathrm{A})}_{t-1})) \mathcal{T}\big(s_{t},s_{t-1},(a^{(\mathrm{D})}_{t-1},a^{(\mathrm{A})}_{t-1})\big)\Big)\label{eq:belief_upd}
\end{align}
where $C=1/\mathbb{P}[o_{t}\mid a^{(\mathrm{D})}_{t-1},\pi_{\mathrm{A}}, b_{t-1}]$ is a normalizing factor to ensure that the sum over $b_{t}(s_t)$ for all $s_t$ equals $1$. The initial belief is $b_1(0)=1$.

\textbf{Objective functions $J_i$.} The goal of the defender is to \textit{maximize} the expected discounted cumulative reward over the time horizon $T$. Similarly, the goal of the attacker is to \textit{minimize} the same quantity. Therefore, the objective functions $J_{\mathrm{D}}$ and $J_{\mathrm{A}}$ are
\begin{align}
J_{\mathrm{D}}(\pi_{\mathrm{D}}, \pi_{\mathrm{A}}) &= \mathbb{E}_{(\pi_{\mathrm{D}}, \pi_{\mathrm{A}})}\left[\sum_{t=1}^{T}\gamma^{t-1}\mathcal{R}(S_t, \mathbf{A}_t)\right] \label{eq:objective_1}\\
J_{\mathrm{A}}(\pi_{\mathrm{D}}, \pi_{\mathrm{A}}) &= -J_{\mathrm{D}}(\pi_{\mathrm{D}}, \pi_{\mathrm{A}}) \label{eq:objective_2}
\end{align}
where $\gamma \in [0,1)$ is the discount factor and $\mathbb{E}_{(\pi_{\mathrm{D}}, \pi_{\mathrm{A}})}$ denotes the expectation of the random variables $(S_t,O_t,\mathbf{A}_t)_{t \in \{1,\hdots,T\}}$ under strategy profile $(\pi_{\mathrm{D}}, \pi_{\mathrm{A}})$.

\textbf{Best response strategies $\tilde{\pi}_{i}$.} A defender strategy $\tilde{\pi}_{\mathrm{D}} \in \Pi_{\mathrm{D}}$ is called a \textit{best response} against $\pi_{\mathrm{A}}\in \Pi_{\mathrm{A}}$ if it \textit{maximizes} $J_{\mathrm{D}}$ (\ref{eq:objective_1}). Similarly, an attacker strategy $\tilde{\pi}_{\mathrm{A}}$ is called a best response against $\pi_{\mathrm{D}}$ if it \textit{minimizes} $J_{\mathrm{D}}$ (\ref{eq:objective_2}). Hence, the best response correspondences $\mathscr{B}_{\mathrm{D}}$ and $\mathscr{B}_{\mathrm{A}}$ are obtained as follows:
\begin{align}
\mathscr{B}_{\mathrm{D}}(\pi_{\mathrm{A}}) &= \argmax_{\pi_{\mathrm{D}} \in \Pi_{\mathrm{D}}}J_{\mathrm{D}}(\pi_{\mathrm{D}}, \pi_{\mathrm{A}})\label{eq:br_defender}\\
\mathscr{B}_{\mathrm{A}}(\pi_{\mathrm{D}}) &= \argmin_{\pi_{\mathrm{A}} \in \Pi_{\mathrm{A}}}J_{\mathrm{D}}(\pi_{\mathrm{D}}, \pi_{\mathrm{A}})\label{eq:br_attacker}
\end{align}

\textbf{Optimal strategies $\pi^{*}_{i}$.} An optimal defender strategy $\pi_{\mathrm{D}}^{*}$ is a best response strategy against any attacker strategy that \textit{minimizes} $J_{\mathrm{D}}$. Similarly, an optimal attacker strategy $\pi_{\mathrm{A}}^{*}$ is a best response against any defender strategy that \textit{maximizes} $J_{\mathrm{D}}$. Hence, when both players follow optimal strategies, they play best response strategies against each other:
\begin{align}
(\pi_{\mathrm{D}}^{*}, \pi_{\mathrm{A}}^{*}) \in \mathscr{B}_{\mathrm{D}}(\pi_{\mathrm{A}}^{*}) \times \mathscr{B}_{\mathrm{A}}(\pi_{\mathrm{D}}^{*})\label{eq:minmax_objective}
\end{align}
Since no player has an incentive to change its strategy, $(\pi_{\mathrm{D}}^{*},\pi_{\mathrm{A}}^{*})$ is a Nash equilibrium \cite{nash51}.

\begin{table}
  \centering
  \begin{tabular}{ll} \toprule
    {\textit{Notation(s)}} & {\textit{Description}} \\ \midrule
    $\Gamma$ & The intrusion response \posg (\ref{eq:game_def})\\
    $\mathrm{D}, \mathrm{A}$ & The defender player and the attacker player\\
    $t, T, \gamma$ & Time-step, time horizon, and discount factor\\
    $l_t$ & Defender stops remaining at time-step $t$\\
    $L$ & Maximum number of defender stops\\
    $\pi_{\mathrm{D}},\pi_{\mathrm{A}}$ & Defender and attacker strategies\\
    $\tilde{\pi}_{\mathrm{D}},\tilde{\pi}_{\mathrm{A}}$ & Best response strategies\\
    $\pi^{*}_{\mathrm{D}},\pi^{*}_{\mathrm{A}}$ & Optimal strategies\\
    $\mathcal{N},\mathcal{S},\mathcal{O}$ & Sets of players, states, and observations\\
    $\mathcal{A}_{\mathrm{D}},\mathcal{A}_{\mathrm{A}}$ & Sets of defender and attacker actions\\
    $\mathcal{T}, \mathcal{R}, \mathcal{Z}$ & Transition, reward and observation functions\\
    $s_t, o_t, \mathbf{a}_t=(a_t^{\mathrm{D}}, a_t^{\mathrm{A}})$ & State, observation, and actions at time-step $t$\\
    $b_t(1) \in \mathcal{B}, r_t$ & Defender belief and reward at time-step $t$\\
    $\mathfrak{S}, \mathfrak{C}$ & Stop and continue actions\\
    $\tau_{i,k}$ & $k$th stopping time of player $i$\\
    $\mathscr{B}_{\mathrm{D}}, \mathscr{B}_{\mathrm{A}}$ & Best response correspondences (\ref{eq:br_defender})--(\ref{eq:br_attacker})\\
    $J_{\mathrm{D}}, J_{\mathrm{A}}$ & Defender and attacker objectives (\ref{eq:objective_1})--(\ref{eq:objective_2})\\
    $\mathcal{M}^P, \mathcal{M}$ & Best response \pomdp and \mdp for $\mathrm{D}$ and $\mathrm{A}$\\
    $\mathscr{S}^{i}, \mathscr{C}^{i}$ & Stopping and continuation sets of player $i$\\
    $S_t,\mathbf{A}_{t},O_t$ & Random variables with realizations $s_t,\mathbf{a}_t,o_t$\\
    $R_t, B_t$ & Random variables with realizations $r_t,b_t$\\
    $V^{*}_{l,\pi_{\mathrm{A}}},V^{*}_{l,\pi_{\mathrm{D}}}$ & Value functions of $\mathcal{M}^P$ and $\mathcal{M}$\\
    $f_{O\mid s}$ & Observation distribution (\ref{eq:obs_1})--(\ref{eq:obs_2})\\
    \bottomrule\\
  \end{tabular}
  \caption{Notations for our mathematical model.}\label{tab:notation}
\end{table}
\section{Game-Theoretic Analysis and Our Algorithm for Finding Near-Optimal Defender Strategies}\label{sec:game_analysis}
Finding optimal strategies that satisfy (\ref{eq:minmax_objective}) is equivalent to finding a Nash equilibrium for the \posg $\Gamma$ (\ref{eq:game_def}). We know from game theory that $\Gamma$ has at least one mixed Nash equilibrium \cite{vonNeumann_1928:TGG,nash51, Shapley1095,posg_equilibria_existence_finite_horizon}. (A Nash equilibrium is called mixed if one or more players follow mixed strategies.) In this section, we first analyze the structure of Nash equilibria in $\Gamma$ using optimal stopping theory and then we describe an efficient reinforcement learning algorithm for approximating these equilibria.

\subsection{Analyzing Best Responses using Optimal Stopping Theory}
The equilibria in $\Gamma$ can be obtained by finding the pairs of strategies that are best responses against each other (\ref{eq:minmax_objective}). A best response for the defender is obtained by solving a \pomdp $\mathcal{M}^{P}$, and a best response for the attacker is obtained by solving an \mdp $\mathcal{M}$. The corresponding Bellman equations are \cite{bellman1957markovian}:
\begin{align}
&V_{l,\pi_{\mathrm{A}}}^{*}(b_t) = \max_{a^{(\mathrm{D})}_t\in \mathcal{A}_{\mathrm{D}}} \mathop{\mathbb{E}}_{\pi_{\mathrm{A}},b_t,a^{(\mathrm{D})}_t}\Big[R_{t+1} + \gamma V_{l-a^{(\mathrm{D})}_t,\pi_{\mathrm{A}}}^{*}(B_{t+1})\Big]\label{eq:bellman_eq_41}\\
&V_{l,\pi_{\mathrm{D}}}^{*}((b_t,s_t)) =\nonumber\\
&\min_{a^{(\mathrm{A})}_t\in \mathcal{A}_{\mathrm{A}}} \mathop{\mathbb{E}}_{\pi_{\mathrm{D}},a^{(\mathrm{A})}_t}\Big[R_{t+1} + \gamma V_{l-A^{(\mathrm{D})}_t,\pi_{\mathrm{D}}}^{*}((B_{t+1}, S_{t+1}))\Big]\label{eq:bellman_eq_43}
\end{align}
where $V_{l,\pi_{\mathrm{A}}}^{*}$ is the value function in the \pomdp $\mathcal{M}^{P}$ given that the attacker follows strategy $\pi_{\mathrm{A}}$ and the defender has $l$ stops remaining, and $V^{*}_{l,\pi_{\mathrm{D}}}$ is the value function in the \mdp $\mathcal{M}$ given that the defender follows strategy $\pi_{\mathrm{D}}$ and has $l$ stops remaining.

Since the game is zero-sum, stationary, and $\gamma < 1$, it follows from the Minimax theorem in game theory that there exists a value function:
\begin{align}
  &V_l^{*}(b_t) = \max_{\pi_{\mathrm{D}} \in \Delta(\mathcal{A}_{\mathrm{D}})}\min_{\pi_{\mathrm{A}} \in \Delta(\mathcal{A}_{\mathrm{A}})} \mathop{\mathbb{E}}_{\pi_{\mathrm{D}},\pi_{\mathrm{A}},b_t}\Big[R_{t+1} +  \label{eq:bellman_posg_1}\\
 &\quad\quad\quad\quad\quad\quad\quad\quad\quad\quad\quad\quad\quad\quad\quad\quad \gamma V_{l-a^{(\mathrm{D})}_t}^{*}(B_{t+1})\Big] \nonumber
\end{align}
and that $V_l^{*}(b) = V_{l,\pi^{*}_{\mathrm{A}}}^{*}(b) = V_{l,\pi^{*}_{\mathrm{D}}}^{*}(b,s)$ \cite{vonNeumann_1928:TGG}\cite[Thm. 2.3]{horak_thesis}. Further, from Markov decision theory we know that for any strategy pair ($\pi_{\mathrm{D}}, \pi_{\mathrm{A}}$), a corresponding pair of \textit{pure} best response strategies $(\tilde{\pi}_{\mathrm{D}}, \tilde{\pi}_{\mathrm{A}}) \in \mathscr{B}_{\mathrm{D}}(\pi_{\mathrm{A}}) \times \mathscr{B}_{\mathrm{A}}(\pi_{\mathrm{D}})$ exists \cite[Thm. 6.2.7]{puterman}\cite[Thm. 7.6.1-7.6.2]{krishnamurthy_2016}.
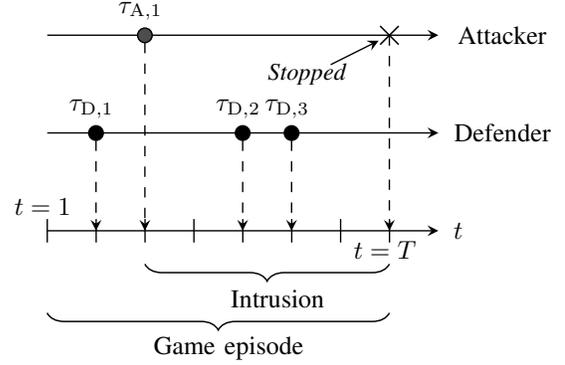
\begin{figure}
  \centering
  \scalebox{1.3}{
    \begin{tikzpicture}[fill=white, >=stealth,
    node distance=3cm,
    database/.style={
      cylinder,
      cylinder uses custom fill,
      shape border rotate=90,
      aspect=0.25,
      draw}]

    \tikzset{
node distance = 9em and 4em,
sloped,
   box/.style = {%
    shape=rectangle,
    rounded corners,
    draw=blue!40,
    fill=blue!15,
    align=center,
    font=\fontsize{12}{12}\selectfont},
 arrow/.style = {%
    line width=0.1mm,
    -{Triangle[length=5mm,width=2mm]},
    shorten >=1mm, shorten <=1mm,
    font=\fontsize{8}{8}\selectfont},
}

\node[scale=1] (system) at (0,-3)
{
\begin{tikzpicture}
\draw[->, color=black] (0, 0) to (4, 0);
\draw[-, color=black] (0, -0.12) to (0, 0.12);
\draw[-, color=black] (0.5, -0.12) to (0.5, 0.12);
\draw[-, color=black] (1, -0.12) to (1, 0.12);
\draw[-, color=black] (1.5, -0.12) to (1.5, 0.12);
\draw[-, color=black] (2, -0.12) to (2, 0.12);
\draw[-, color=black] (2.5, -0.12) to (2.5, 0.12);
\draw[-, color=black] (3, -0.12) to (3, 0.12);
\draw[-, color=black] (3.5, -0.12) to (3.5, 0.12);
\end{tikzpicture}
};

\node[scale=1] (system) at (0,-1)
{
\begin{tikzpicture}
\draw[->, color=black] (0, 0) to (4, 0);
\end{tikzpicture}
};

\node[scale=1] (system) at (0,-2)
{
\begin{tikzpicture}
\draw[->, color=black] (0, 0) to (4, 0);

\end{tikzpicture}
};

\node[inner sep=0pt,align=center, scale=0.75] (time) at (2.7,-1)
{
Attacker
};

\node[inner sep=0pt,align=center, scale=0.75] (time) at (2.7,-2)
{
Defender
};

\node[inner sep=0pt,align=center, scale=0.75] (time) at (-2,-2.75)
{
$t=1$
};

\node[inner sep=0pt,align=center, scale=0.75] (time) at (1.5,-3.2)
{
$t=T$
};

\node[inner sep=0pt,align=center, scale=0.75] (time) at (-1.5,-1.75)
{
$\tau_{\mathrm{D},1}$
};

\node[inner sep=0pt,align=center, scale=0.75] (time) at (0.0,-1.75)
{
$\tau_{\mathrm{D},2}$
};
\node[inner sep=0pt,align=center, scale=0.75] (time) at (0.5,-1.75)
{
$\tau_{\mathrm{D},3}$
};

\node[inner sep=0pt,align=center, scale=0.75] (time) at (-1,-0.75)
{
$\tau_{\mathrm{A},1}$
};


\node[inner sep=0pt,align=center, scale=0.75] (time) at (2.25,-3)
{
$t$
};

\node[draw,circle, minimum width=2mm, color=black, fill=black!70, scale=0.45](a2) at (-1,-1) {};
\draw[->, color=black, dashed] (a2) to (-1, -3);
\node[draw,circle, minimum width=2mm, color=black, scale=0.45, fill=black](d1) at (-1.5,-2) {};

\node[draw,circle, minimum width=2mm, color=black, scale=0.45, fill=black](d3) at (0.0,-2) {};
\node[draw,circle, minimum width=2mm, color=black, scale=0.45, fill=black](d4) at (0.5,-2) {};

\draw[->, color=black, dashed] (d1) to (-1.5, -3);
\draw[->, color=black, dashed] (d3) to (0.0, -3);
\draw[->, color=black, dashed] (d4) to (0.5, -3);

\draw[->, color=black, dashed] (1.5,-1) to (1.5, -3);

\node[inner sep=0pt,align=center, scale=1] (prev) at (1.5,-1)
{
\Crossk
};
\node[inner sep=0pt,align=center, scale=0.7] (time) at (0.7,-1.4)
{
  \textit{Stopped}
};



\draw[->, color=black] (0.9, -1.3) to (prev);

\node[inner sep=0pt,align=left, scale=0.75] (time) at (-0.1,-4.2)
{
Game episode
};

\node[inner sep=0pt,align=left, scale=0.75] (time) at (0.4,-3.7)
{
Intrusion
};

\draw [decorate,decoration={brace,amplitude=5pt,mirror,raise=4pt},yshift=0pt,rotate=180, line width=0.15mm]
(2,3.7) -- (-1.5,3.7) node [black,midway,xshift=0.1cm] {};

\draw [decorate,decoration={brace,amplitude=5pt,mirror,raise=4pt},yshift=0pt,rotate=180, line width=0.15mm]
(1,3.2) -- (-1.5,3.2) node [black,midway,xshift=0.1cm] {};

\end{tikzpicture}
  }
    \caption{Stopping times of the defender and the attacker in a game episode; the bottom horizontal axis represents time; the black circles on the middle axis and the upper axis represent time-steps of the defender's stop actions and the attacker's stop actions, respectively; $\tau_{i,j}$ denotes the $j$th stopping time of player $i$; the cross shows the time the intrusion is stopped; an intrusion starts when the attacker takes the first stop action (at time $\tau_{\mathrm{A},1}$); an episode ends either when the attacker is stopped (as a consequence of defender actions) or when the attacker terminates its intrusion.}
    \label{fig:stopping_times}
  \end{figure}

  We interpret the \pomdp $\mathcal{M}^{P}$ and the \mdp $\mathcal{M}$ that determine the best response strategies as \textit{optimal stopping} problems (see Fig. \ref{fig:stopping_times}) \cite{wald,stopping_book_1,chow1971great,hammar_stadler_tnsm}. Consequently, an optimal solution to $\mathcal{M}^{P}$ (or $\mathcal{M}$) is also an optimal solution to the corresponding stopping problem and vice versa.

The problem for the defender is to find a stopping strategy $\pi_{\mathrm{D}}^{*}(b_t) \rightarrow \{\mathfrak{S},\mathfrak{C}\}$ that maximizes $J_{\mathrm{D}}$ (\ref{eq:objective_1}) and prescribes the optimal stopping times $\tau^{*}_{\mathrm{D},1},\tau^{*}_{\mathrm{D},2},\hdots, \tau^{*}_{\mathrm{D},L}$. Similarly, the problem for the attacker is to find a stopping strategy $\pi_{\mathrm{A}}^{*}(s_t,b_t) \rightarrow \{\mathfrak{S},\mathfrak{C}\}$ that minimizes $J_{\mathrm{D}}$ (\ref{eq:objective_2}) and prescribes the optimal stopping times $\tau^{*}_{\mathrm{A},1}$ and $\tau^{*}_{\mathrm{A},2}$.

Given a pair of stopping strategies $(\pi_{\mathrm{D}},\pi_{\mathrm{A}})$ and their (pure) best responses $\tilde{\pi}_{\mathrm{D}} \in \mathscr{B}_{\mathrm{D}}(\pi_{\mathrm{A}})$ and $\tilde{\pi}_{\mathrm{A}} \in \mathscr{B}_{\mathrm{A}}(\pi_{\mathrm{D}})$, we define two subsets of $\mathcal{B}=[0,1]$: the \textit{stopping sets} and the \textit{continuation sets}.

The stopping sets $\mathscr{S}^{(\mathrm{D})}$ and $\mathscr{S}^{(\mathrm{A})}$ include the belief states where $\mathfrak{S}$ is a best response:
\begin{align}
\mathscr{S}^{(\mathrm{D})}_{l,\pi_{\mathrm{A}}} &= \left\{b(1) \mid b(1) \in [0,1], \tilde{\pi}_{\mathrm{D},l}\big(b(1)\big) = \mathfrak{S}\right\}\label{eq:stopping_set_def}\\
\mathscr{S}^{(\mathrm{A})}_{s,l,\pi_{\mathrm{D}}} &= \left\{b(1) \mid b(1) \in [0,1], \tilde{\pi}_{\mathrm{A},l}\big(s,b(1)\big) = \mathfrak{S}\right\}\label{eq:stopping_set_at}
\end{align}
Similarly, the continuation sets $\mathscr{C}^{(\mathrm{D})}$ and $\mathscr{C}^{(\mathrm{A})}$ contain the belief states where $\mathfrak{C}$ is a best response:
\begin{align}
\mathscr{C}^{(\mathrm{D})}_{l,\pi_{\mathrm{A}}} &= \left\{b(1) \mid b(1) \in [0,1], \tilde{\pi}_{\mathrm{D},l}\big(b(1)\big) = \mathfrak{C}\right\}\label{eq:continue_set_def}\\
\mathscr{C}^{(\mathrm{A})}_{s,l,\pi_{\mathrm{D}}} &= \left\{b(1) \mid b(1) \in [0,1], \tilde{\pi}_{\mathrm{A},l}\big(b(1), s\big) = \mathfrak{C}\right\}\label{eq:continue_set_at}
\end{align}
Based on \cite[Thm. 12.3.4]{krishnamurthy_2016} \cite[Prop. 4.5-4.8]{Nakai1985}, \cite[Thm. 1]{optimal_multiple_stopping_social_media_1}\cite[Thm. 1]{hammar_stadler_tnsm}, and \cite[Thm. 2.3]{horak_solving_one_sided_posgs}, we formulate Theorem \ref{thm:best_responses}, which contains an existence result for equilibria and a structural result for best response strategies of the game.
\begin{theorem}\label{thm:best_responses}
Given the \posg $\Gamma$ (\ref{eq:game_def}) with one-sided partial observability and $L \geq 1$, the following holds:
\begin{enumerate}[(A)]
\item $\Gamma$ has a mixed Nash equilibrium. If $s=0 \iff b(1)=0$, then it has a pure Nash equilibrium.
\item We assume that the probability mass function $f_{O \mid s}$ is totally positive of order 2 (i.e., \tpp \cite[Def. 10.2.1, pp. 223]{krishnamurthy_2016}). Given an attacker strategy $\pi_{\mathrm{A}} \in \Pi_{\mathrm{A}}$, then there exist values $\tilde{\alpha}_{1}$ $\geq$ $\tilde{\alpha}_{2}$ $\geq$ $\hdots$ $\geq$ $\tilde{\alpha}_L \in [0,1]$ and a best response strategy $\tilde{\pi}_{\mathrm{D}} \in \mathscr{B}_{\mathrm{D}}(\pi_{\mathrm{A}})$ for the defender that satisfies
\begin{align}
\tilde{\pi}_{\mathrm{D},l}(b(1)) = \mathfrak{S} \iff b(1) \geq \tilde{\alpha}_l \quad l\in \{1,\hdots,L\} \label{eq:prop_br_defender}
\end{align}
\item Given a defender strategy $\pi_{\mathrm{D}}\in \Pi_{\mathrm{D}}$ where $\pi_{\mathrm{D}}(\mathfrak{S} \mid b(1))$ is non-decreasing in $b(1)$ and $\pi_{\mathrm{D}}(\mathfrak{S} \mid 1)=1$, then there exist values $\tilde{\beta}_{0,1},$ $\tilde{\beta}_{1,1},$ $\hdots$, $\tilde{\beta}_{0,L}$, $\tilde{\beta}_{1,L} \in [0,1]$ and a best response strategy $\tilde{\pi}_{\mathrm{A}} \in \mathscr{B}_{\mathrm{A}}(\pi_{\mathrm{D}})$ for the attacker that satisfies
  \begin{align}
\tilde{\pi}_{\mathrm{A},l}(b(1), 0) = \mathfrak{C} &\iff \pi_{\mathrm{D},l}(\mathfrak{S} \mid b(1)) \geq \tilde{\beta}_{0,l} \label{eq:prop_br_attacker_1}\\
\tilde{\pi}_{\mathrm{A},l}(b(1), 1) = \mathfrak{S} &\iff \pi_{\mathrm{D},l}(\mathfrak{S} \mid b(1)) \geq \tilde{\beta}_{\mathrm{D},l} \label{eq:prop_br_attacker_2}
\end{align}
for $l \in \{1,\hdots, L\}$.
\end{enumerate}
\end{theorem}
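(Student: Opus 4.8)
\textit{Proof strategy.} The plan is to treat the defender's and the attacker's best-response problems as single-agent stochastic-control problems over the belief space $\mathcal{B}=[0,1]$, prove the threshold structure for each in isolation, and, for part~(A), combine them through the minimax value $V^{*}_l$ of~(\ref{eq:bellman_posg_1}).

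\textit{Part (A).} First I would note that, although $T$ is random, $\gamma<1$ together with the absorbing state $\emptyset$ makes $\Gamma$ equivalent to a stationary, infinite-horizon, discounted, zero-sum one-sided \posg with finite $\mathcal{N}$, $\mathcal{S}$, $(\mathcal{A}_i)_{i\in\mathcal{N}}$ and $\mathcal{O}$; existence of a (stationary) mixed Nash equilibrium then follows from the minimax/fixed-point results for such games \cite[Thm.~2.3]{horak_thesis}, \cite{posg_equilibria_existence_finite_horizon}, using that the value $V^{*}_l$ of~(\ref{eq:bellman_posg_1}) exists and coincides with both one-sided values~(\ref{eq:bellman_eq_41})--(\ref{eq:bellman_eq_43}). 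For the pure equilibrium, the hypothesis $s=0\iff b(1)=0$ makes the defender's belief determine the state, so $\mathcal{M}^{P}$ collapses to a fully observed \mdp and the pair of stopping problems for $(\tau_{\mathrm{D},1},\dots,\tau_{\mathrm{D},L})$ against $(\tau_{\mathrm{A},1},\tau_{\mathrm{A},2})$ (Fig.~\ref{fig:stopping_times}) becomes a discrete-time Dynkin-type stopping game with continuous value, which admits a saddle point in pure (threshold) strategies \cite{dynkin_orig_3}, \cite[Thm.~2.3]{horak_solving_one_sided_posgs}.

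\textit{Part (B).} Fix $\pi_{\mathrm{A}}\in\Pi_{\mathrm{A}}$. The defender's best response solves $\mathcal{M}^{P}$, which after the belief recursion~(\ref{eq:belief_upd}) is an \mdp on $[0,1]$ with Bellman equation~(\ref{eq:bellman_eq_41}); write $Q_l(b,\mathfrak{S})$, $Q_l(b,\mathfrak{C})$ for its state--action values and $W_l(b)=Q_l(b,\mathfrak{C})-Q_l(b,\mathfrak{S})$, so that $\mathfrak{S}$ is a best response exactly where $W_l(b)\le0$. The core of the argument is to show, by backward induction on the value-iteration stages, that $W_l$ is single-crossing in $b(1)$, changing sign from positive to negative exactly once. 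The key lever is that $f_{O\mid s}$ being \tpp\ makes the one-step Bayesian update~(\ref{eq:belief_upd}) monotone in $b(1)$ with respect to the \mlr\ order, which on $[0,1]$ is the usual order; hence a larger current belief stochastically dominates a smaller one in every future belief. Together with the reward structure --- a positive $\mathrm{R}_{\mathrm{st}}/l$ for stopping during an intrusion, a negative $\mathrm{R}_{\mathrm{int}}$ per time-step of intrusion, and a negative $\mathrm{R}_{\mathrm{cost}}/l$ for a false stop --- this single-crossing is the multiple-stopping counterpart of the monotone optimal-stopping result for a partially observed Markov chain \cite[Thm.~12.3.4]{krishnamurthy_2016}, \cite[Prop.~4.5--4.8]{Nakai1985}, \cite[Thm.~1]{optimal_multiple_stopping_social_media_1}, \cite[Thm.~1]{hammar_stadler_tnsm}. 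Setting $\tilde{\alpha}_l=\inf\{b(1)\in[0,1]\mid W_l(b)\le0\}$ then gives~(\ref{eq:prop_br_defender}). For the ordering $\tilde{\alpha}_1\ge\dots\ge\tilde{\alpha}_L$ I would couple the $l$- and $(l{+}1)$-stop problems stage by stage to obtain $W_{l+1}(b)\le W_l(b)$ pointwise (intuitively, the ongoing-intrusion cost $\mathrm{R}_{\mathrm{int}}$ does not scale with $l$ while the stopping reward/cost scale as $1/l$, so continuing is relatively less attractive for larger $l$), whence $\{W_{l+1}\le0\}\supseteq\{W_l\le0\}$ and therefore $\tilde{\alpha}_{l+1}\le\tilde{\alpha}_l$.

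\textit{Part (C).} Fix $\pi_{\mathrm{D}}\in\Pi_{\mathrm{D}}$ with $\pi_{\mathrm{D},l}(\mathfrak{S}\mid b(1))$ non-decreasing in $b(1)$ and $\pi_{\mathrm{D},l}(\mathfrak{S}\mid1)=1$. The attacker's best response solves the \mdp $\mathcal{M}$ on $\mathcal{B}\times\mathcal{S}$ with Bellman equation~(\ref{eq:bellman_eq_43}); since the attacker observes $s$, I would treat $s=0$ and $s=1$ separately and use $u:=\pi_{\mathrm{D},l}(\mathfrak{S}\mid b(1))$ --- a non-decreasing function of $b(1)$ --- as the order parameter. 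The plan is to show, again by value iteration, that the attacker's value and its state--action values $Q_l((b,s),\cdot)$ are monotone in $u$ and that the continue-over-stop advantage is single-crossing in $u$: in state $s=1$ it is non-increasing in $u$ (a more aggressive defender lowers the attacker's gain from prolonging the intrusion, whereas terminating always yields $0$ by~(\ref{eq:reward_1})), which gives the threshold $\tilde{\beta}_{1,l}$ --- written $\tilde{\beta}_{\mathrm{D},l}$ in~(\ref{eq:prop_br_attacker_2}); in state $s=0$ it is non-decreasing in $u$ (a defender likely to stop in state $0$ spends a stop ``for free'' from the attacker's viewpoint, which the attacker prefers to let happen before committing), which gives $\tilde{\beta}_{0,l}$ in~(\ref{eq:prop_br_attacker_1}). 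These are monotone-comparative-statics/supermodularity steps of the same kind as in \cite[Thm.~12.3.4]{krishnamurthy_2016} and \cite[Prop.~4.5--4.8]{Nakai1985}; the boundary condition $\pi_{\mathrm{D},l}(\mathfrak{S}\mid1)=1$ is used to ensure $u=1$ is attained (at $b(1)=1$) so that the relevant monotone maps are defined over the whole range of $u$.

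\textit{Main obstacle.} I expect two points to be delicate. First, in~(B) the best response is computed against an \emph{arbitrary} $\pi_{\mathrm{A}}$, which gets folded into the state--transition kernel of $\mathcal{M}^{P}$ and can depend on $b(1)$; one has to check that the \pomdp induced in this way still satisfies the monotonicity hypotheses behind the threshold result, and this is exactly where the structure of $\Gamma$ enters --- the attacker's action in state $0$ only fixes the intrusion start time $\tau_{\mathrm{A},1}$, and the state ordering $0\prec1\prec\emptyset$ keeps $\mathcal{T}$ monotone however $\pi_{\mathrm{A}}$ mixes. Second, the threshold \emph{ordering} in~(B) and the well-posedness of the comparative statics across $l$ in~(C) cannot be read off a single level: they need the families $\{\mathcal{M}^{P}\}_l$ and $\{\mathcal{M}\}_l$ to be compared level by level, using that the reward scaling is precisely $1/l$.
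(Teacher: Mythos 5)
Your treatment of parts (B) and (C) follows essentially the same route as the paper: for (B) the paper simply reduces the defender's best response to the \pomdp $\mathcal{M}^{P}$ and invokes the threshold result of \cite[Thm.~1]{hammar_stadler_tnsm} under the \tpp assumption (your single-crossing/\mlr value-iteration sketch is a re-derivation of that cited result, and your ``main obstacle'' about $\pi_{\mathrm{A}}$ being folded into the kernel is exactly the point the citation absorbs); for (C) the paper does what you propose --- value iteration on $\mathcal{M}$ with monotonicity of $V^{*}_{\pi_{\mathrm{D}},l}(b(1),1)$ in $\pi_{\mathrm{D}}(\mathfrak{S}\mid b(1))$ and single-crossing of the difference $V^{*}_{\pi_{\mathrm{D}},l}(b(1),0)-V^{*}_{\pi_{\mathrm{D}},l}(b(1),1)$ --- plus two auxiliary facts you did not mention but would need: non-negativity of the attacker's value, and the exclusion of the case ``stop in $s=1$ while also stop in $s=0$'' (the paper's Lemma~5), which is what lets the case analysis in the induction close.

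The genuine gap is in part (A), the pure equilibrium. You argue that $s=0\iff b(1)=0$ collapses $\mathcal{M}^{P}$ to a fully observed \mdp and then appeal to Dynkin-game/saddle-point theory for a pure equilibrium. That step does not go through as stated: $\Gamma$ is a \emph{simultaneous-move} zero-sum stochastic game, and full observability alone does not yield pure equilibria for such games (Shapley games generically require mixing), while classical pure-strategy existence results for Dynkin games rely on payoff-ordering conditions that you have not verified and that do not obviously hold for a multiple-stopping, simultaneous-move game. What actually carries the claim is the specific reward structure (\ref{eq:reward_0})--(\ref{eq:reward_5}): the paper \emph{constructs} the equilibrium explicitly, showing that $\bar{\pi}_{\mathrm{D}}(b(1))=\mathfrak{S}\iff b(1)>0$ is a best response to \emph{every} attacker strategy (stopping in state $1$ earns $\mathrm{R}_{\mathrm{st}}/l>0$ versus the cost $\mathrm{R}_{\mathrm{int}}<0$ of continuing, and not stopping in state $0$ avoids $\mathrm{R}_{\mathrm{cost}}/l<0$), and that against $\bar{\pi}_{\mathrm{D}}$ the pure strategy ``never start, terminate immediately if started'' is a best response for the attacker; the pair is then a pure Nash equilibrium by (\ref{eq:minmax_objective}). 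Without some such explicit construction (or a verified ordering condition), your part (A) argument for purity would fail. The existence of a \emph{mixed} equilibrium is handled the same way in both proofs, by citation to \cite{posg_equilibria_existence_finite_horizon} and \cite[Thm.~2.3]{horak_thesis}.
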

\begin{proof}[Proof.]
See Appendix \ref{appendix:proofs}.
\end{proof}
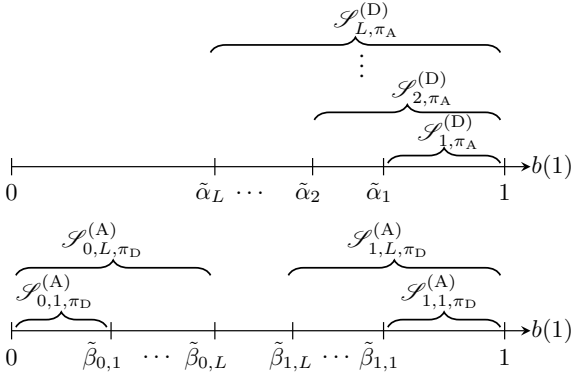
\begin{figure}
  \centering
  \scalebox{1.15}{
    \begin{tikzpicture}[fill=white, >=stealth,
    node distance=3cm,
    database/.style={
      cylinder,
      cylinder uses custom fill,
      shape border rotate=90,
      aspect=0.25,
      draw}]

    \tikzset{
node distance = 9em and 4em,
sloped,
   box/.style = {%
    shape=rectangle,
    rounded corners,
    draw=blue!40,
    fill=blue!15,
    align=center,
    font=\fontsize{12}{12}\selectfont},
 arrow/.style = {%
    line width=0.1mm,
    -{Triangle[length=5mm,width=2mm]},
    shorten >=1mm, shorten <=1mm,
    font=\fontsize{8}{8}\selectfont},
}

\node[scale=1] (system) at (0,0)
{
\begin{tikzpicture}
\draw[->, color=black] (0.0,0) to (6,0);

\node[inner sep=0pt,align=center, scale=0.8] (time) at (6.3,0)
{
  $b(1)$
};

\node[inner sep=0pt,align=center, scale=0.8] (time) at (0.05,-0.3)
{
$0$
};

\node[inner sep=0pt,align=center, scale=0.8] (time) at (5.75,-0.3)
{
$1$
};

\draw[-, color=black] (5.7,0.1) to (5.7,-0.1);

\draw[-, color=black] (0,0.1) to (0,-0.1);

\draw [decorate,decoration={brace,amplitude=5pt,mirror,raise=4pt},yshift=0pt,rotate=180, line width=0.20mm]
(-5.65,0.1) -- (-4.35,0.1) node [black,midway,xshift=0.1cm] {};

\node[inner sep=0pt,align=center, scale=0.8] (time) at (5.1,0.4)
{
$\mathscr{S}^{(\mathrm{D})}_{1,\pi_{\mathrm{A}}}$
};

\draw [decorate,decoration={brace,amplitude=5pt,mirror,raise=4pt},yshift=0pt,rotate=180, line width=0.20mm]
(-5.65,-0.4) -- (-3.5,-0.4) node [black,midway,xshift=0.1cm] {};

\node[inner sep=0pt,align=center, scale=0.8] (time) at (4.8,0.9)
{
$\mathscr{S}^{(\mathrm{D})}_{2,\pi_{\mathrm{A}}}$
};

\node[inner sep=0pt,align=center, scale=0.8] (time) at (4.1,1.25)
{
$\vdots$
};

\draw [decorate,decoration={brace,amplitude=5pt,mirror,raise=4pt},yshift=0pt,rotate=180, line width=0.20mm]
(-5.65,-1.18) -- (-2.3,-1.18) node [black,midway,xshift=0.1cm] {};

\node[inner sep=0pt,align=center, scale=0.8] (time) at (4.15,1.7)
{
$\mathscr{S}^{(\mathrm{D})}_{L,\pi_{\mathrm{A}}}$
};

\draw[-, color=black] (4.3,0.1) to (4.3,-0.1);
\draw[-, color=black] (3.48,0.1) to (3.48,-0.1);

\draw[-, color=black] (2.35,0.1) to (2.35,-0.1);

\node[inner sep=0pt,align=center, scale=0.8] (time) at (4.3,-0.3)
{
$\tilde{\alpha}_{1}$
};
\node[inner sep=0pt,align=center, scale=0.8] (time) at (3.48,-0.3)
{
$\tilde{\alpha}_{2}$
};

\node[inner sep=0pt,align=center, scale=0.8] (time) at (2.35,-0.3)
{
$\tilde{\alpha}_{L}$
};

\node[inner sep=0pt,align=center, scale=0.8] (time) at (2.8,-0.3)
{
$\hdots$
};
\end{tikzpicture}
};

\node[scale=1] (system) at (0,-2.25)
{
\begin{tikzpicture}
\draw[->, color=black] (0.0,0) to (6,0);

\node[inner sep=0pt,align=center, scale=0.8] (time) at (6.3,0)
{
  $b(1)$
};

\node[inner sep=0pt,align=center, scale=0.8] (time) at (0.05,-0.3)
{
$0$
};

\node[inner sep=0pt,align=center, scale=0.8] (time) at (5.75,-0.3)
{
$1$
};

\draw[-, color=black] (5.7,0.1) to (5.7,-0.1);

\draw[-, color=black] (0,0.1) to (0,-0.1);

\draw [decorate,decoration={brace,amplitude=5pt,mirror,raise=4pt},yshift=0pt,rotate=180, line width=0.20mm]
(-5.65,0.1) -- (-4.35,0.1) node [black,midway,xshift=0.1cm] {};

\draw [decorate,decoration={brace,amplitude=5pt,mirror,raise=4pt},yshift=0pt,rotate=180, line width=0.20mm]
(-5.65,-0.5) -- (-3.2,-0.5) node [black,midway,xshift=0.1cm] {};

\node[inner sep=0pt,align=center, scale=0.8] (time) at (5,0.45)
{
$\mathscr{S}^{(\mathrm{A})}_{1,1,\pi_{\mathrm{D}}}$
};
\node[inner sep=0pt,align=center, scale=0.8] (time) at (4.4,1.05)
{
$\mathscr{S}^{(\mathrm{A})}_{1,L,\pi_{\mathrm{D}}}$
};


%

\draw[-, color=black] (4.3,0.1) to (4.3,-0.1);
\draw[-, color=black] (3.25,0.1) to (3.25,-0.1);


\node[inner sep=0pt,align=center, scale=0.8] (time) at (4.3,-0.3)
{
$\tilde{\beta}_{1,1}$
};
\node[inner sep=0pt,align=center, scale=0.8] (time) at (3.28,-0.3)
{
$\tilde{\beta}_{1,L}$
};


\draw[-, color=black] (1.15,0.1) to (1.15,-0.1);

\node[inner sep=0pt,align=center, scale=0.8] (time) at (1.1,-0.3)
{
$\tilde{\beta}_{0,1}$
};

\node[inner sep=0pt,align=center, scale=0.8] (time) at (1.725,-0.3)
{
$\hdots$
};
\draw[-, color=black] (2.35,0.1) to (2.35,-0.1);
\node[inner sep=0pt,align=center, scale=0.8] (time) at (2.3,-0.3)
{
$\tilde{\beta}_{0,L}$
};

\node[inner sep=0pt,align=center, scale=0.8] (time) at (3.8,-0.3)
{
$\hdots$
};

\draw [decorate,decoration={brace,amplitude=5pt,mirror,raise=4pt},yshift=0pt,rotate=180, line width=0.20mm]
(-2.3,-0.5) -- (-0.05,-0.5) node [black,midway,xshift=0.1cm] {};

\draw [decorate,decoration={brace,amplitude=5pt,mirror,raise=4pt},yshift=0pt,rotate=180, line width=0.20mm]
(-1.1,0.1) -- (-0.05,0.1) node [black,midway,xshift=0.1cm] {};

\node[inner sep=0pt,align=center, scale=0.8] (time) at (0.55,0.45)
{
$\mathscr{S}^{(\mathrm{A})}_{0,1,\pi_{\mathrm{D}}}$
};
\node[inner sep=0pt,align=center, scale=0.8] (time) at (1.1,1.05)
{
$\mathscr{S}^{(\mathrm{A})}_{0,L,\pi_{\mathrm{D}}}$
};
\end{tikzpicture}
};

\end{tikzpicture}
  }
  \caption{Illustration of Theorem \ref{thm:best_responses}; the upper part shows $L$ thresholds $\tilde{\alpha}_{1} \geq \tilde{\alpha}_{2}, \hdots, \geq \tilde{\alpha}_{L} \in [0,1]$ that define a best response strategy $\tilde{\pi}_{\mathrm{D}} \in \mathscr{B}_{\mathrm{D}}(\pi_{\mathrm{A}})$ for the defender (\ref{eq:prop_br_defender}); the lower part shows $2L$ thresholds $\tilde{\beta}_{0,1}, \tilde{\beta}_{1,1}, \hdots, \tilde{\beta}_{0,L}, \tilde{\beta}_{1,L} \in [0,1]$ that define a best response strategy $\tilde{\pi}_{\mathrm{A}} \in \mathscr{B}_{\mathrm{A}}(\pi_{\mathrm{D}})$ for the attacker (\ref{eq:prop_br_attacker_1})--(\ref{eq:prop_br_attacker_2}).}
    \label{fig:threshold_policy_3}
  \end{figure}
Theorem \ref{thm:best_responses} tells us that $\Gamma$ has a mixed Nash equilibrium. Further, under assumptions generally met in practice, the best response strategies have threshold properties (see Fig. \ref{fig:threshold_policy_3}). In the following, we describe an algorithm that leverages these properties to efficiently approximate Nash equilibria of $\Gamma$.

\subsection{Finding Nash Equilibria through Fictitious Self-Play}\label{sec:rl_approach}
Computing Nash equilibria for a \posg is generally intractable \cite[Thm. 3.5]{NIPS2007_3435c378} \cite[Thm. 6]{pspace_complexity}. However, approximate solutions can be obtained through iterative methods. One such method is \textit{fictitious self-play}, where both players start from random strategies and continuously update their strategies based on outcomes of played game episodes \cite{brown_fictious_play}.

Fictitious self-play evolves through a sequence of iteration steps, which is illustrated in Fig. \ref{fig:fp_2}. An iteration step includes three stages. First, player $1$ learns a best response strategy against player $2$'s current strategy. The roles are then reversed and player $2$ learns a best response strategy against player $1$'s current strategy. Lastly, each player adopts a new strategy, which is determined by the empirical distribution over its past best response strategies. The sequence of iteration steps continues until the strategies of both players have sufficiently converged to a Nash equilibrium \cite[Thms. 7.2.4-7.2.5]{multiagent_systems_book_1}.
\subsection{Our Self-Play Algorithm: \tfp}\label{sec:t_fp}
We present a fictitious self-play algorithm called Threshold Fictitious Self-Play (\tfp), which efficiently approximates a Nash equilibrium of $\Gamma$ based on Theorem \ref{thm:best_responses}. The pseudocode of \tfp is listed in Algorithm \ref{alg:ne_approximation}.

\begin{figure}
  \centering
  \scalebox{1.25}{
    \input{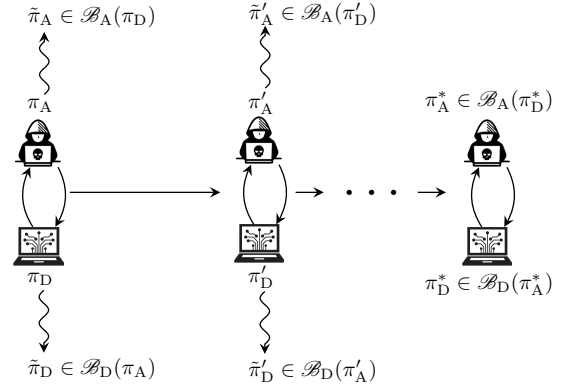}
  }
  \caption{The fictitious self-play process; in every iteration step each player learns a best response strategy $\tilde{\pi}_{i} \in \mathscr{B}_i(\pi_{{-i}})$ and updates its strategy based on the empirical distribution of its past best response strategies; the horizontal arrows indicate iteration steps of self-play and the vertical arrows indicate the learning of best response strategies; the process converges towards a Nash equilibrium $(\pi_{\mathrm{D}}^{*},\pi_{\mathrm{A}}^{*})$.} \label{fig:fp_2}
\end{figure}
\tfp implements the fictitious self-play process described above and generates a sequence of strategy profiles $(\pi_{\mathrm{D}}, \pi_{\mathrm{A}})$, $(\pi^{\prime}_{\mathrm{D}}$, $\pi^{\prime}_{\mathrm{A}})$, $\hdots$ that converges to a Nash equilibrium $(\pi^{*}_{\mathrm{D}}, \pi^{*}_{\mathrm{A}})$ \cite[Thms. 7.2.4-7.2.5]{multiagent_systems_book_1}. During each step of this process, \tfp learns best responses against the players' current strategies and then updates the strategies of both players (see Fig. \ref{fig:fp_2}).

To learn the best response strategies $\tilde{\pi}_{\mathrm{D}}\in \mathscr{B}_{\mathrm{D}}(\pi_{\mathrm{A}})$ and $\tilde{\pi}_{\mathrm{A}} \in \mathscr{B}_{\mathrm{A}}(\pi_{\mathrm{D}})$, \tfp parameterizes $\tilde{\pi}_{\mathrm{D}}$ and $\tilde{\pi}_{\mathrm{A}}$ through threshold vectors according to Theorem \ref{thm:best_responses}. The defender's best response strategy $\tilde{\pi}_{\mathrm{D}}$ is parameterized with the vector $\tilde{\bm{\theta}}^{(\mathrm{D})} \in \mathbb{R}^{L}$ (\ref{eq:smooth_threshold}). Similarly, the attacker's best response strategy $\tilde{\pi}_{\mathrm{A}}$ is parameterized with the vector $\tilde{\bm{\theta}}^{(\mathrm{A})} \in \mathbb{R}^{2L}$ (\ref{eq:smooth_threshold_2}).
\begin{align}
\varphi(a,b) &\triangleq \left(1 + \left(\frac{b(1-\sigma(a))}{\sigma(a)(1-b)}\right)^{-20}\right)^{-1} \label{eq:phi_eq}\\
\tilde{\pi}_{\mathrm{D},\tilde{\bm{\theta}}^{(\mathrm{D})}}\big(\mathfrak{S} \mid b(1)\big) &\triangleq \varphi\left(\tilde{\bm{\theta}}^{(\mathrm{D})}_l, b(1)\right) \label{eq:smooth_threshold}\\
\tilde{\pi}_{\mathrm{A},\tilde{\bm{\theta}}^{(\mathrm{A})}}\big(\mathfrak{S}\mid b(1),s\big) &\triangleq \varphi\left(\tilde{\bm{\theta}}^{(\mathrm{A})}_{sL+l}, \pi_{\mathrm{D}}(\mathfrak{S} \mid b(1))\right)\label{eq:smooth_threshold_2}
\end{align}
The parameterized strategies defined by (\ref{eq:phi_eq})--(\ref{eq:smooth_threshold_2}) are mixed (and differentiable) strategies that approximate threshold strategies (see Fig. \ref{fig:smooth_threshold}). In (\ref{eq:phi_eq})--(\ref{eq:smooth_threshold_2}), $\sigma(\cdot)$ is the sigmoid function, $a \in \mathbb{R}$, and $b \in \mathbb{R}$. Further, $\sigma(\tilde{\bm{\theta}}^{(\mathrm{D})}_{1})$, $\sigma(\tilde{\bm{\theta}}^{(\mathrm{D})}_{2})$, $\hdots$, $\sigma(\tilde{\bm{\theta}}^{(\mathrm{D})}_{L}) \in [0,1]$ are the $L$ thresholds of the defender (see Theorem \ref{thm:best_responses}.B) and $\sigma(\tilde{\bm{\theta}}^{(\mathrm{A})}_{1})$, $\sigma(\tilde{\bm{\theta}}^{(\mathrm{A})}_{2})$, $\hdots$, $\sigma(\tilde{\bm{\theta}}^{(\mathrm{A})}_{2L}) \in [0,1]$ are the $2L$ thresholds of the attacker (see Theorem \ref{thm:best_responses}.C).

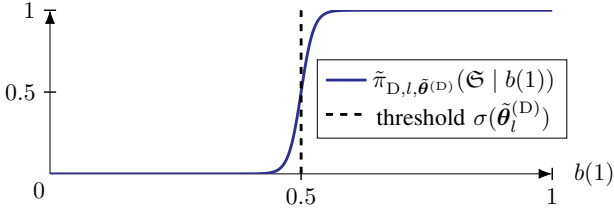
\begin{figure}
  \centering
  \scalebox{0.9}{
    \begin{tikzpicture}[
    dot/.style={
        draw=black,
        fill=blue!90,
        circle,
        minimum size=3pt,
        inner sep=0pt,
        solid,
    },
    ]
\tikzset{
        hatch distance/.store in=\hatchdistance,
        hatch distance=10pt,
        hatch thickness/.store in=\hatchthickness,
        hatch thickness=2pt
      }
\pgfdeclarepatternformonly[\hatchdistance,\hatchthickness]{flexible hatch}
    {\pgfqpoint{0pt}{0pt}}
    {\pgfqpoint{\hatchdistance}{\hatchdistance}}
    {\pgfpoint{\hatchdistance-1pt}{\hatchdistance-1pt}}%
    {
        \pgfsetcolor{\tikz@pattern@color}
        \pgfsetlinewidth{\hatchthickness}
        \pgfpathmoveto{\pgfqpoint{0pt}{0pt}}
        \pgfpathlineto{\pgfqpoint{\hatchdistance}{\hatchdistance}}
        \pgfusepath{stroke}
      }

\node[scale=1] (kth_cr) at (0,2.15)
{
  \begin{tikzpicture}[declare function={sigma(\x)=1/(1+exp(-\x));
      sigmap(\x)=sigma(\x)*(1-sigma(\x));}]
    \pgfmathsetlengthmacro\MajorTickLength{
      \pgfkeysvalueof{/pgfplots/major tick length} * 1.2
    }
    \begin{axis}[
        xmin=0,
        xmax=1,
        ymin=0,
        ymax=1,
        width = 9cm,
        height = 4cm,
        axis lines=center,
        xtick distance=0.5,
        ytick distance=0.5,
        x tick style={color=black},
        y tick style={color=black},
        major tick length=\MajorTickLength,
      every tick/.style={
        black,
        semithick,
      },
        xlabel style={below right},
        ylabel style={above left},
        axis line style={-{Latex[length=2mm]}},
        smooth,
        legend style={at={(1.03,0.7)}},
        legend columns=1,
        legend style={
            /tikz/column 2/.style={
                column sep=5pt,
              }
              }
        ]

\addplot[Blue,mark=none,samples=300,smooth, name path=l1, very thick, domain=0:1]   (x,{1/(1+ ( (x*(1-sigma(0)))/(sigma(0)*(1-x)) )^(-20))});
\addplot[black,dashed,mark=none,smooth, name path=l1, very thick,domain=0:1]   (0.5,x);

\legend{$\tilde{\pi}_{\mathrm{D},l,\tilde{\bm{\theta}}^{(\mathrm{D})}}(\mathfrak{S} \mid b(1))$, threshold $\sigma(\tilde{\bm{\theta}}^{(\mathrm{D})}_l)$}
\end{axis}
\node[inner sep=0pt,align=center, scale=1, rotate=0, opacity=1] (obs) at (8.1,0)
{
  $b(1)$
};

\node[inner sep=0pt,align=center, scale=1, rotate=0, opacity=1] (obs) at (-0.1,-0.25)
{
  $0$
};
\end{tikzpicture}
};

  \end{tikzpicture}
 }
 \caption{A mixed threshold strategy where $\sigma(\tilde{\bm{\theta}}_l^{(\mathrm{D})})$ is the threshold ($0.5$ in this example); the x-axis indicates the defender's belief state $b(1) \in [0,1]$ and the y-axis indicates the probability prescribed by $\tilde{\pi}_{\mathrm{D},\tilde{\bm{\theta}}^{(\mathrm{D})}}$ to the stop action $\mathfrak{S}$.}
    \label{fig:smooth_threshold}
  \end{figure}
  Using this parameterization, \tfp learns the best response strategies $\tilde{\pi}_{\mathrm{D},\tilde{\bm{\theta}}^{(\mathrm{D})}}$ and $\tilde{\pi}_{\mathrm{A},\tilde{\bm{\theta}}^{(\mathrm{A})}}$ by iteratively updating the threshold vectors $\tilde{\bm{\theta}}^{(\mathrm{D})}$ and $\tilde{\bm{\theta}}^{(\mathrm{A})}$ through stochastic approximation. To update the threshold vectors, \tfp simulates $\Gamma$, which allows to evaluate the objective functions $J_{\mathrm{D}}(\tilde{\pi}_{\mathrm{D},\tilde{\bm{\theta}}^{(\mathrm{D})}}, \pi_{\mathrm{A}})$ and $J_{\mathrm{A}}(\pi_{\mathrm{D}}, \tilde{\pi}_{\mathrm{A},\tilde{\bm{\theta}}^{(\mathrm{A})}})$ (\ref{eq:objective_1})--(\ref{eq:objective_2}). The obtained values of $J_{\mathrm{D}}$ and $J_{\mathrm{A}}$ are then used to estimate the gradients $\nabla_{\tilde{\bm{\theta}}^{(\mathrm{D})}}J_{\mathrm{D}}$ and $\nabla_{\tilde{\bm{\theta}}^{(\mathrm{A})}}J_{\mathrm{A}}$ using the Simultaneous Perturbation Stochastic Approximation (\spsa) gradient estimator (lines 10-19 in Algorithm \ref{alg:ne_approximation}) \cite[\S 2]{spsa} \cite[\S 3]{spsa_impl}. Next, the estimated gradients are used to update $\tilde{\bm{\theta}}^{(\mathrm{D})}$ and $\tilde{\bm{\theta}}^{(\mathrm{A})}$ through stochastic gradient ascent (line 20).

  The above procedure of estimating gradients and updating $\tilde{\bm{\theta}}^{(\mathrm{D})}$ and $\tilde{\bm{\theta}}^{(\mathrm{A})}$ continues for a given number of iterations (lines 9-21). After these iterations have finished, the threshold vectors $\tilde{\bm{\theta}}^{(\mathrm{D})}$ and $\tilde{\bm{\theta}}^{(\mathrm{A})}$ are added to buffers $\Theta^{(\mathrm{D})}$ and $\Theta^{(\mathrm{A})}$, which contain the vectors learned in previous iterations of \tfp (line 22). Finally, the \tfp iteration step is completed by having both players update their strategies based on the empirical distributions over the past vectors in the buffers (lines 24-25).

The sequence of iteration steps described above continues until the strategies have sufficiently converged to a Nash equilibrium (lines 6-27). (In Algorithm \ref{alg:ne_approximation}, $\mathcal{U}_{k}(\{-1,1\})$ denotes a $k$-dimensional discrete multivariate uniform distribution on $\{-1,1\}$ and $\pi_{-i}$ denotes the strategy of player $j\in \mathcal{N}\setminus \{i\}$.)

\begin{algorithm}
  \caption{\tfp: Threshold Fictitious Self-Play}\label{alg:ne_approximation}
  \hspace*{\algorithmicindent} \textbf{Input} \\
  \hspace*{\algorithmicindent}  $\Gamma, N$: the \posg and $\#$ best response iterations\\
  \hspace*{\algorithmicindent}  $a,c,\lambda,A,\epsilon,\delta$: scalar coefficients\\
  \hspace*{\algorithmicindent} \textbf{Output} \\
  \hspace*{\algorithmicindent}  $(\pi^{*}_{\mathrm{D}}, \pi^{*}_{\mathrm{A}})$: an approximate Nash equilibrium
\begin{algorithmic}[1]
  \Procedure{t-fp}{}
  \State $\tilde{\bm{\theta}}^{(\mathrm{D})} \sim \mathcal{U}_L(\{-1,1\})$, $\quad \tilde{\bm{\theta}}^{(\mathrm{A})} \sim \mathcal{U}_{2L}(\{-1,1\})$
  \State $\Theta^{(\mathrm{D})} \leftarrow \{\tilde{\bm{\theta}}^{(\mathrm{D})}\}, \quad \Theta^{(\mathrm{A})} \leftarrow \{\tilde{\bm{\theta}}^{(\mathrm{A})}\}, \quad \hat{\delta} \leftarrow \infty$
  \State $\pi_{\mathrm{D}}\leftarrow \text{\textsc{EmpiricalDistribution}($\Theta^{(\mathrm{D})}$)}$
  \State $\pi_{\mathrm{A}}\leftarrow \text{\textsc{EmpiricalDistribution}($\Theta^{(\mathrm{A})}$)}$
  \While{$\hat{\delta} \geq \delta$}
  \For{$i \in \{\mathrm{D}, \mathrm{A}\} = \{1,2\}$}
  \State $\tilde{\bm{\theta}}_{(1)}^{(i)} \sim \mathcal{U}_{iL}(\{-1,1\})$
  \For{$n \in \{1, \hdots, N\}$}
  \State $a_n \leftarrow \frac{a}{(n + A)^{\epsilon}}, \quad c_n \leftarrow \frac{c}{n^{\lambda}}$
  \For{$k \in \{1, \hdots, iL\}$}
  \State $(\Delta_n)_k \sim \mathcal{U}_1(\{-1,1\})$
  \EndFor
  \State $R_{high} \sim J_i(\pi_{i,\tilde{\bm{\theta}}^{(i)}_{(n)}} + c_n\Delta_n, \pi_{-i})$
  \State $R_{low} \sim J_i(\pi_{i,\tilde{\bm{\theta}}^{(i)}_{(n)}} - c_n\Delta_n, \pi_{-i})$
  \For{$k \in \{1, \hdots, iL\}$}
  \State $G \leftarrow \frac{R_{high} - R_{low}}{2c_n(\Delta_n)_{k}}$
  \State $\left(\hat{\nabla}_{\tilde{\bm{\theta}}^{(i)}_{(n)}}J_i(\pi_{i,\tilde{\bm{\theta}}^{(i)}_{(n)}}, \pi_{-i})\right)_{k} \leftarrow G$
  \EndFor
  \State $\tilde{\bm{\theta}}^{(i)}_{(n+1)} = \tilde{\bm{\theta}}^{(i)}_{(n)} + a_n\hat{\nabla}_{\tilde{\bm{\theta}}^{(i)}_{(n)}}J_i(\pi_{i,\tilde{\bm{\theta}}^{(i)}_{(n)}},\pi_{-i})$
  \EndFor
  \State $\Theta^{(i)} \leftarrow \Theta^{(i)} \cup \tilde{\bm{\theta}}^{(i)}_{(N+1)}$
  \EndFor
  \State $\pi_{\mathrm{D}}\leftarrow \text{\textsc{EmpiricalDistribution}($\Theta^{(\mathrm{D})}$)}$
  \State $\pi_{\mathrm{A}}\leftarrow \text{\textsc{EmpiricalDistribution}($\Theta^{(\mathrm{A})}$)}$
  \State $\hat{\delta} = \text{\textsc{Exploitability}}(\pi_{\mathrm{D}}, \pi_{\mathrm{A}})$
  \EndWhile
  \State \Return $(\pi_{\mathrm{D}}, \pi_{\mathrm{A}})$
\EndProcedure
\end{algorithmic}
\end{algorithm}
\section{Emulating the Target Infrastructure\\ to Instantiate the Simulation \\and to Evaluate Learned Strategies}\label{sec:policy_learning_results}
The \tfp algorithm described above approximates a Nash equilibrium of $\Gamma$ by simulating game episodes and updating both players' strategies through stochastic approximation. \tfp requires the observation distribution conditioned on the system state $f_{O\mid s}$ (\ref{eq:obs_1})--(\ref{eq:obs_2}). The emulation system shown in Fig. \ref{fig:method} allows us to estimate this distribution and later to evaluate the learned strategies.

This section describes the emulation system, our method for estimating $f_{O\mid s}$, and our method for evaluating defender strategies.
\subsection{Emulating the Target Infrastructure}\label{sec:emu_target_inf}
The emulation system executes on a cluster of machines that runs a virtualization layer provided by Docker containers and virtual links \cite{docker}. The system implements network isolation and traffic shaping using network namespaces and the \netem module in the Linux kernel \cite{netem}. Resource allocation to containers, e.g. \cpu and memory, is enforced using \cgroups.

The network topology of the emulated infrastructure is shown in Fig. \ref{fig:system2} and its configuration is given in Appendix \ref{appendix:infrastructure_configuration}. The emulation system includes the clients, the attacker, the defender, network connectivity, and $31$ devices of the target infrastructure (e.g. application servers and the gateway). The software functions on the emulation system replicate important components of the target infrastructure, such as, web servers, databases, and the Snort \idps, which is deployed using Snort's community ruleset v2.9.17.1.

We emulate connections between servers as full-duplex loss-less connections of $1$ Gbit/s capacity in both directions. We emulate connections between the gateway and the external client population as full-duplex connections of $100$ Mbit/s capacity and $0.1\%$ packet loss with random bursts of $1\%$ packet loss. (These numbers are based on measurements on enterprise and wide-area networks \cite{packet_losses_decreasing,Paxson97end-to-endinternet,elliott_markov_chain_ref}.)

Technical documentation and application programming interfaces (\textsc{api}s) of the emulation system are available in \cite{csle_docs}.
\subsection{Emulating the Client Population}\label{sec:emu_target_inf}
The \textit{client population} is emulated by processes in Docker containers. Clients interact with application servers through the gateway by performing a sequence of functions on a sequence of servers, both of which are selected uniformly at random from Table \ref{tab:client_profiles}. Client arrivals per time-step are emulated using a stationary Poisson process with mean $\lambda=20$ and exponentially distributed service times with mean $\mu=4$. The duration of a time-step is $30$ seconds.
\begin{table}
\centering
\begin{tabular}{ll} \toprule
  {\textit{Functions}} & {\textit{Application servers}} \\ \midrule
  \http, \ssh, \snmp, \icmp & $N_2,N_3,N_{10},N_{12}$\\
  \irc, \postgres, \snmp & $N_{31},N_{13},N_{14},N_{15},N_{16}$\\
  \ftp, \dns, \telnet & $N_{10}, N_{22}, N_{4}$ \\
  \bottomrule\\
\end{tabular}
\caption{Emulated client population; each client invokes functions on application servers.}\label{tab:client_profiles}
\end{table}
\subsection{Emulating Defender and Attacker Actions}\label{sec:emu_player_actions}
The defender and the attacker observe the infrastructure continuously and take actions at time-steps $t=1,2,\hdots, T$. During each step, the defender and the attacker perform one action each.

The defender executes either a continue action or a stop action. A continue action is virtual in the sense that it does not trigger any function in the emulation. A stop action, however, invokes specific functions in the emulated infrastructure. We have implemented $L=7$ stop actions for the defender, which are listed in Table \ref{tab:defender_stop_actions}. The first stop action revokes user certificates and recovers user accounts expected to be compromised by the attacker. The second stop action updates the firewall configuration of the gateway to drop traffic from IP addresses flagged by the \idps. Stop actions $3$--$6$ trigger the dropping of traffic that generates \idps alerts of priorities $1$--$4$. The final stop action blocks all incoming traffic. (Note that according to Snort's terminology, $1$ is the highest priority. We inverse the labeling in our framework for convenience.)

\begin{table}
  \centering
\resizebox{1\columnwidth}{!}{%
\begin{tabular}{ll} \toprule
  {\textit{Stop index}} & {\textit{Action}} \\ \midrule
  $1$ & Revoke user certificates \\
  $2$ & Blacklist IPs \\
  $3$ & Drop traffic that generates \idps alerts of priority $1$ \\
  $4$ & Drop traffic that generates \idps alerts of priority $2$ \\
  $5$ & Drop traffic that generates \idps alerts of priority $3$ \\
  $6$ & Drop traffic that generates \idps alerts of priority $4$ \\
  $7$ & Block gateway \\
  \bottomrule\\
\end{tabular}
}
\caption{Defender commands executed on the emulation system.}\label{tab:defender_stop_actions}
\end{table}
\begin{table}
\centering
\begin{tabular}{ll} \toprule
  {\textit{Type}} & {\textit{Actions}} \\ \midrule
  Reconnaissance  & \tcpp \syn scan, \udp port scan, \\
                  & \tcpp \nulll scan, \tcpp \xmas scan, \tcpp \finn scan, \\
                  & ping scan, \tcpp connection scan, \vulscan \\
  &\\
  Brute-force attack & \telnet, \ssh, \ftp, \cassandra,\\
                  &  \irc, \mongo, \mysql, \smtp, \postgres\\
                  &\\
  Exploit & \cve-2017-7494, \cve-2015-3306,\\
                  & \cve-2010-0426, \cve-2015-5602, \\
                  &  \cve-2014-6271, \cve-2016-10033\\
                  & \cve-2015-1427, \cwe-89\\
  \bottomrule\\
\end{tabular}
\caption{Attacker commands executed on the emulation system; exploits are identified according to their corresponding vulnerability and its identifier in the Common Vulnerabilities and Exposures (\cve) database \cite{cve} and in the Common Weakness Enumeration (\cwe) list \cite{cwe}.}\label{tab:attacker_actions}
\end{table}

Like the defender, the attacker executes either a stop action or a continue action during each time-step. The attacker can only take two stop actions during a game episode. The first determines when the intrusion starts and the second when it terminates (see \S \ref{sec:formal_model_2}).

During an intrusion, the attacker executes a sequence of commands, drawn randomly from all of the commands listed in Table \ref{tab:attacker_actions}. (Detailed descriptions of all commands are available in Appendix \ref{appendix:attacker actions}). The first command in this sequence is executed when the attacker takes the first stop action. A further command is invoked whenever the attacker takes a continue action.

\subsection{Estimating the \idps Alert Distribution}\label{sec:estimating_dist}
At the end of every time-step, the emulation system collects the number of \idps alerts with priorities $1$--$4$ that occurred during the time-step. These values are then used to compute the metric $o_t$, which contains the total number of \idps alerts, weighted by priority.
\begin{figure}
  \centering
    \scalebox{0.83}{
      \includegraphics{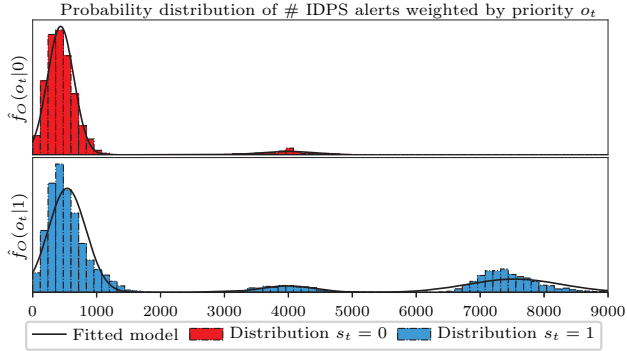}
    }
    \caption{Empirical distributions of $o_t$ when no intrusion occurs ($s_t=0$) and during intrusion ($s_t=1$); the black lines show the fitted Gaussian mixture models.}
    \label{fig:ids_distribution}
\end{figure}

For the evaluation reported in this paper we collect measurements from $23,000$ time-steps. Using these measurements, we apply expectation-maximization \cite{em_demp_77} to fit Gaussian mixture distributions $\hat{f}_{O\mid 0}$ and $\hat{f}_{O\mid 1}$ as estimates of $f_{O\mid 0}$ and $f_{O\mid 1}$ (\ref{eq:obs_1})--(\ref{eq:obs_2}).

Fig. \ref{fig:ids_distribution} shows the empirical distributions and the fitted models over the discrete observation space $\mathcal{O} = \{1,2,\hdots,9000\}$. $\hat{f}_{O \mid 0}$ and $\hat{f}_{O \mid 1}$ are Gaussian mixtures with two and three components, respectively. Both mixtures have most probability mass within $0$--$1000$. $\hat{f}_{O \mid 1}$ also has substantial probability mass at larger values.

The stochastic matrix with the rows $\hat{f}_{O \mid 0}$ and $\hat{f}_{O \mid 1}$ has about $72 \times 10^{6}$ second-order minors, which are almost all non-negative. This suggests to us that the \tpp assumption in Theorem \ref{thm:best_responses} can be made.
\subsection{Running a Game Episode}\label{sec:simulation_episode}
During a game episode, the state evolves according to the dynamics defined by (\ref{eq:tp_1})--(\ref{eq:tp_7}), the defender's belief state evolves according to (\ref{eq:belief_upd}), the players' rewards are calculated using the reward function $\mathcal{R}$ (\ref{eq:reward_0})--(\ref{eq:reward_5}), the defender's observations are obtained from $f_O$ (\ref{eq:obs_1})--(\ref{eq:obs_2}), and the actions of both players are determined by their respective strategies. If the game runs in the emulation system, the players' actions include executing networking and computing functions (see Tables \ref{tab:defender_stop_actions}--\ref{tab:attacker_actions}), and the observations from $f_O$ are obtained through reading log files and metrics of the emulated infrastructure. (To collect the logs and system metrics from the emulation, we run software sensors that write to a distributed queue implemented with Kafka \cite{kafka}.) In the case of a game in the simulation system, the observations are instead sampled from the estimated distribution $\hat{f}_O$.
\section{Learning Nash Equilibrium Strategies for the Target Infrastructure}\label{sec:eval}
Our approach to finding near-optimal defender strategies includes: (\textit{i}) emulating the target infrastructure to obtain statistics for instantiating the simulation system; (\textit{ii}) learning Nash equilibrium strategies using the \tfp algorithm in \S \ref{sec:game_analysis}; and (\textit{iii}) evaluating learned strategies on the emulation system in \S \ref{sec:policy_learning_results} (see Fig. \ref{fig:method}). This section describes the learning process and the evaluation results of the intrusion response use case.

\subsection{Learning Equilibrium Strategies through Self-Play}
\begin{figure*}
\centering
\scalebox{0.94}{
      \includegraphics{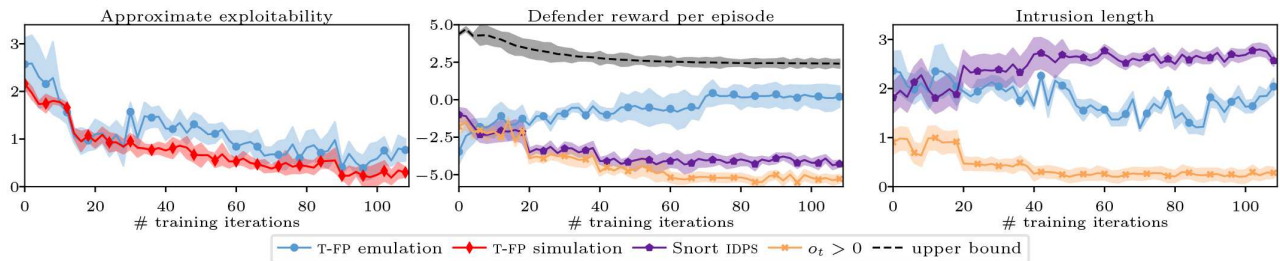}
}
\caption{Learning curves from the self-play process with \tfp; the red curve shows simulation results and the blue curves show emulation results; the purple, orange, and black curves relate to baseline strategies; the figures show different performance metrics: exploitability (\ref{eq:approx_exp}), episodic reward, and the length of intrusion; the curves indicate the mean and the $95\%$ confidence interval over four training runs with different random seeds.}
    \label{fig:exploitability_curve}
  \end{figure*}
We run \tfp for $500$ iteration steps to estimate a Nash equilibrium using the iterative method in \S \ref{sec:rl_approach}, which is sufficient to meet the termination condition (line 6 in Algorithm \ref{alg:ne_approximation}). These iteration steps generate a sequence of strategy pairs $(\pi_{\mathrm{D}}, \pi_{\mathrm{A}})_1,\allowbreak (\pi_{\mathrm{D}}, \pi_{\mathrm{A}})_2,\allowbreak \hdots,\allowbreak (\pi_{\mathrm{D}}, \pi_{\mathrm{A}})_{500}$.

At the end of each iteration step, we evaluate the current strategy pair $(\pi_{\mathrm{D}}, \pi_{\mathrm{A}})$ by running $500$ evaluation episodes in the simulation system and $5$ evaluation episodes in the emulation system. This process allows us to produce learning curves for different performance metrics (see Fig. \ref{fig:exploitability_curve}).

The $500$ training iterations and the associated evaluations constitute one \textit{training run}. We run four training runs with different random seeds. A single training run takes about $5$ hours of processing time in the simulation system. In addition, it takes around $12$ hours to evaluate the strategies on the emulation system. The hyperparameters of \tfp are listed in Appendix \ref{appendix:hyperparameters}.

\textbf{Computing environment for simulation and emulation.} The environment for running simulations and training strategies is a \textsc{tesla} \textsc{p}\small 100 \normalsize\textsc{gpu}.

The emulated infrastructure is deployed on a server with a $24$-core \textsc{intel} \textsc{xeon} \textsc{gold} \small $2.10$ GHz \normalsize \textsc{cpu} and $768$ \textsc{gb} \textsc{ram}. Documentation of the emulation system is available in \cite{csle_docs}.

The code for the simulation system and the measurement traces for the intrusion response use case are available at \cite{github_cnsm_21_hammar_stadler}. They can be used to validate our results and to extend this research.

\textbf{Convergence metric for \tfp.} To estimate the convergence of the sequence of strategy pairs generated by \tfp, we use the \textit{approximate exploitability} metric $\hat{\delta}$ \cite{approx_br}:
\begin{align}
\hat{\delta} = J_{\mathrm{D}}(\hat{\pi}_{\mathrm{D}}, \pi_{\mathrm{A}}) + J_{\mathrm{A}}(\pi_{\mathrm{D}}, \hat{\pi}_{\mathrm{A}}) \label{eq:approx_exp}
\end{align}
where $\hat{\pi}_{i}$ denotes an approximate best response strategy for player $i$ and the objective functions $J_{\mathrm{D}}$ and $J_{\mathrm{A}}$ are defined in (\ref{eq:objective_1}) and (\ref{eq:objective_2}), respectively. The closer $\hat{\delta}$ becomes to $0$, the closer $(\pi_{\mathrm{D}},\pi_{\mathrm{A}})$ is to a Nash equilibrium.

\textbf{Baseline algorithms.} We compare the performance of \tfp with that of two popular algorithms in previous work that use reinforcement learning and study use cases similar to ours \cite{nfsp_jamming_1_sim,nfsp_security_2,horak_bosansky_hsvi,horak_solving_one_sided_posgs,posg_sequetial_attacks_bosansky}. The first algorithm is Neural Fictitious Self-Play (\nfsp) \cite{heinrich_1}, which is a general fictitious self-play algorithm that does not exploit the threshold structures expressed in Theorem \ref{thm:best_responses}. The second algorithm is Heuristic Search Value Iteration (\hsvi) for one-sided \posgs \cite{horak_bosansky_hsvi}, which is a state-of-the-art dynamic programming algorithm for one-sided \posgs.

\textbf{Defender baseline strategies.} We compare the dynamic defender strategies learned through \tfp with three static baseline strategies. The first baseline prescribes the stop action when an \idps alert occurs, i.e., when $o_t > 0$. The second baseline is derived from the Snort \idps, which is a de-facto industry standard and can be considered state-of-the-art for our use case. This baseline uses the Snort \idps's recommendation system and takes a stop action when Snort has dropped $100$ IP packets (see Appendix \ref{appendix:infrastructure_configuration} for the Snort configuration). The third baseline assumes prior knowledge of the intrusion time and performs all $L$ stops during the $L$ subsequent time-steps.

Although a growing body of work uses reinforcement learning and game theory to find intrusion response strategies (see \S \ref{sec:related_work} for a review of the related work), a direct comparison between the defender strategies learned in our framework and those found in previous work is not feasible for two reasons. First, nearly all of the prior works have developed defender strategies for custom simulations \cite{hammar_stadler_cnsm_20,hammar_stadler_cnsm_21, elderman, schwartz_2020, oslo_pentest_rl, kurt_rl, microsoft_red_teaming, ridley_ml_defense, rl_cyberdefense_heartbleed, deep_hierarchical_rl_pentest, pentest_rl_rohit, adaptive_cyber_defense_pomdp_rl,game_cyber_rl_sim,cmu_ppo_selfplay,al_shaer_book_ppo_simulation,mec_game_rl_q_learning_sim,nfsp_jamming_1_sim,9923774,9916301,7573127,7568529,miehling_attack_graph,Wang2021GameTheoreticAI,HUANG2020101660,Nguyen2018,8750848,9559403,9754705,9096400, dynamic_game_linan_zhu,general_sum_markov_games_for_strategic_detection_of_apt,honeypot_game,DBLP:journals/compsec/HorakBTKK19,game_theoretic_modeling_ofhoneypot_selection,serkan_gyorgy_game,posg_cyber_deception_network_epidemic,stocahstic_games_security_indep_nodes_nguyen_alpcan_basar, hammar_stadler_cnsm_20,optimal_thresholds_for_ids,a_game_theoretic_ids_control_sys_alpcan_basar,zhu_basar_dynamic_policy_ids_config,zhang2019,fog_computing_irs,altman_jamming_1} and there is no obvious way to map their solutions to an emulated environment like ours (see Fig. \ref{fig:system2} and Appendix \ref{appendix:infrastructure_configuration}). Second, the few prior works that study emulated infrastructures similar to ours either consider static attackers in fully observed environments \cite{muzero_sdn,atmos,sdn_rl_ddos,9328143,deep_air,YUNGAICELANAULA2022103444,sdn_zolo,Iannucci2021AnIR,Wang2020AnID} or focus on use cases that are different from the one considered in this article \cite{9328143,5270307}.
\subsection{Evaluating the Learned Strategies}\label{sec:one_stop_evaluation}
\begin{figure}
\centering
\scalebox{0.86}{
      \includegraphics{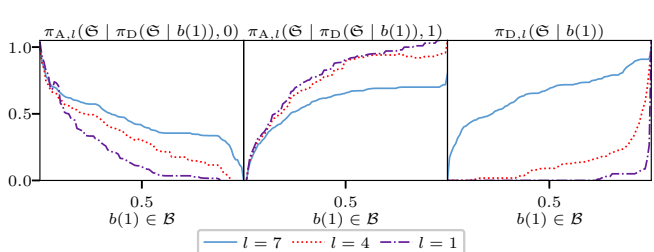}
}
\caption{Probability of the stop action $\mathfrak{S}$ by the learned equilibrium strategies in function of $b(1)$ and $l$; the left and middle plots show the attacker's stopping probability when $s=0$ and $s=1$, respectively; the right plot shows the defender's stopping probability.}
    \label{fig:stop_prob2}
  \end{figure}

\begin{figure}
\centering
\scalebox{0.82}{
      \includegraphics{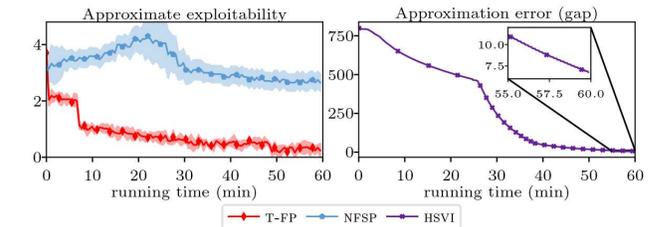}
}
\caption{Comparison between \tfp and two baseline algorithms: \nfsp and \hsvi; all curves show simulation results; the red curve relates to \tfp; the blue curve to \nfsp; the purple curve to \hsvi; the left plot shows the approximate exploitability metric (\ref{eq:approx_exp}) and the right plot shows the \hsvi approximation error \cite{horak_bosansky_hsvi}; the curves depicting \tfp and \nfsp show the mean and the $95\%$ confidence interval over four training runs with different random seeds.}
    \label{fig:converge_times}
  \end{figure}
Figure \ref{fig:exploitability_curve} shows the learning curves of the strategies obtained during the \tfp self-play process and the baselines introduced above. The red curve represents the results from the simulator; the blue curves show the results from the emulation system; the purple curves give the performance of the Snort \idps baseline; the orange curves relate to the baseline strategy that mandates a stop action when an \idps alert occurs; and the dashed black curve gives the performance of the baseline strategy that assumes prior knowledge of the intrusion time.

We note that all learning curves in Fig. \ref{fig:exploitability_curve} converge, which suggests that the learned strategies converge as well. (Fig. \ref{fig:exploitability_curve} only shows the first $120$ iterations of the $500$ iterations we performed, as the curves converge after $100$ iterations.) Specifically, we observe that the approximate exploitability (\ref{eq:approx_exp}) of the learned strategies converges to small values (left plot), which indicates that the learned strategies approximate a Nash equilibrium both in the simulator and in the emulation system. Further, we see from the plot in the middle that both baseline strategies show decreasing performance as the attacker updates its strategy. In contrast, the defender strategy learned through \tfp improves its performance over time. This shows the benefit of a game-theoretic approach where the defender strategy is optimized against a dynamic attacker. Lastly, we notice that the average intrusion length when the defender follows the learned defender strategy and the Snort \idps baseline strategy is $2$ and $3$, respectively (right plot). In comparison, the average intrusion length when the defender follows the baseline strategy $o_t> 0$ is close to $0$, which indicates that it tends to prescribe all stop actions before an intrusion occurs.

Figure \ref{fig:stop_prob2} represents the strategies learned through \tfp in a simple form. The y-axis shows the probability of a stop action and the x-axis shows the defender's belief $b(1) \in \mathcal{B}$ that an intrusion occurs. The strategies are clearly stochastic. This is consistent with Theorem \ref{thm:best_responses}.A, which predicts a mixed Nash equilibrium. Further, Theorem \ref{thm:best_responses}.B predicts that the defender's stopping probability is increasing with respect to $b(1)$ and decreasing with $l$, which is visible in the right plot. Similarly, Theorem \ref{thm:best_responses}.C predicts that the attacker's stopping probability decreases with the defender's stopping probability when $s=0$ and increases when $s=1$, which can be seen in the left and the middle plot.

Figure \ref{fig:converge_times} compares \tfp with the two baseline algorithms \nfsp and \hsvi on the simulator. \nfsp implements fictitious self-play and can thus be compared with \tfp with respect to approximate exploitability (\ref{eq:approx_exp}). We observe in the left plot that \tfp converges much faster than \nfsp. We explain the rapid convergence of \tfp by its design, which exploits structural properties of the stopping game.

The right plot shows that \hsvi reaches an \hsvi approximation error below $5$ within an hour of processing time. Based on the recent literature we anticipated a much longer processing time \cite{horak_thesis, horak_solving_one_sided_posgs}. This suggests to us that \tfp and \hsvi have similar convergence properties. A more detailed comparison between \tfp and \hsvi is hard to perform due to the different nature of the two algorithms.

Figure \ref{fig:value_fun} shows the estimated value function of the game $\hat{V_l^{*}}: \mathcal{B} \rightarrow \mathbb{R}$ (\ref{eq:bellman_posg_1}), where $\hat{V_l^{*}}(b(1))$ is the expected cumulative reward when the game starts in the belief state $b(1)$, the defender has $l$ stops remaining, and both players follow optimal (equilibrium) strategies.

We see in Fig. \ref{fig:value_fun} that $\hat{V_l^{*}}$ is piece-wise linear and convex, as expected from the theory of one-sided \posgs \cite{horak_thesis}. The figure indicates that $\hat{V_l^{*}}(b(1)) \leq 0$ for all $b(1) \in \mathcal{B}$ and that $\hat{V_l^{*}}(1)=0$ for all $l \in \{1,\hdots,L\}$. Further, we note that the value of $\hat{V_l^{*}}$ is minimal when $b(1)$ is around $0.25$ and that the values for $l=1$ and $l=7$ are very close.

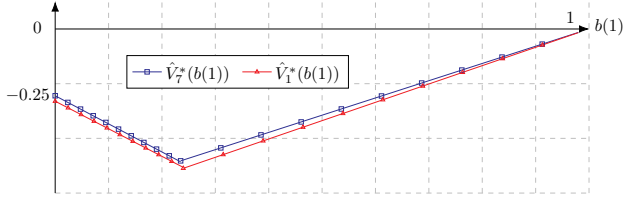
\begin{figure}
  \centering
  \scalebox{0.65}{
    \begin{tikzpicture}[
    dot/.style={
        draw=black,
        fill=blue!90,
        circle,
        minimum size=3pt,
        inner sep=0pt,
        solid,
    },
    ]

\node[scale=1] (kth_cr) at (0,2.15)
{
  \begin{tikzpicture}
    \begin{axis}[
      xmin=0,
      grid=major,
      grid style={dashed},
      xmax=1,
        ymin=-0.6,
        ymax=0.1,
        axis lines=center,
       ticks=none,
        xlabel style={below right},
        ylabel style={above left},
        axis line style={-{Latex[length=2mm]}},
        smooth,
        legend style={at={(0.55,0.72)}},
        legend columns=2,
        legend style={
            /tikz/column 2/.style={
                column sep=5pt,
              }
            },
            width=12.5cm,
            height=5.5cm
        ]


\addplot[Blue,mark=square,mark repeat=10,mark size=1.3pt,samples=100,smooth, name path=l1, domain=0:0.235]   (x,{(1-x)*-0.244144 + x*-1.27207});

\addplot[Red,mark=triangle,mark repeat=10,mark size=1.3pt,samples=100,smooth, name path=l2,  domain=0:0.24]   (x,{(1-x)*-0.264144 + x*-1.27207});

\addplot[Blue,mark=square,mark repeat=10,mark size=1.3pt,samples=100,smooth, name path=l1, domain=0.235:0.98]   (x,{(1-x)*-0.63033 + x*0});
\addplot[Red,mark=triangle,mark repeat=10,mark size=1.3pt,samples=100,smooth, name path=l2,  domain=0.24:0.98]   (x,{(1-x)*-0.67033 + x*0});

\legend{$\hat{V}^{*}_{7}(b(1))$, $\hat{V}^{*}_{1}(b(1))$}
\end{axis}
\node[inner sep=0pt,align=center, scale=1, rotate=0, opacity=1] (obs) at (10.6,3.6)
{
  $1$
};

\node[inner sep=0pt,align=center, scale=1, rotate=0, opacity=1] (obs) at (11.4,3.4)
{
  $b(1)$
};

\node[inner sep=0pt,align=center, scale=1, rotate=0, opacity=1] (obs) at (-0.3,3.4)
{
  $0$
};

\node[inner sep=0pt,align=center, scale=1, rotate=0, opacity=1] (obs) at (-0.5,2)
{
  $-0.25$
};

\end{tikzpicture}
};

  \end{tikzpicture}
 }
    \caption{The value function $\hat{V_l^{*}}(b(1))$ (\ref{eq:bellman_posg_1}) computed through the \hsvi algorithm with approximation error $4$; the blue and red curves relate to $l=7$ and $l=1$, respectively.}
    \label{fig:value_fun}
\end{figure}

That $\hat{V_l^{*}}(b(1)) \leq 0$ for all $b(1) \in \mathcal{B}$ and all $l \in \{1,\hdots,L\}$ has an intuitive explanation. For any $b(1)$, the attacker has the option to never attack if $s=0$ or to abort an attack if $s=1$. Both options yield a cumulative reward less than or equal to $0$ (\ref{eq:reward_0})--(\ref{eq:reward_5}). As a consequence, $\hat{V_l^{*}}(b(1)) \leq 0$ for any optimal attacker strategy and all $b(1) \in \mathcal{B}$ and $l \in \{1,\hdots,L\}$. (Recall that the attacker aims to minimize reward.)

The fact that $\hat{V_l^{*}}(b(1))=0$ when $b(1)=1$ can be understood as follows. $b(1)=1$ means that the defender knows that an intrusion occurs and will take defensive actions (see Theorem \ref{thm:best_responses}.B). Hence, when $b(1)=1$, the only way for the attacker to avoid detection is to abort the intrusion, which causes the game to end and yields a reward of zero, i.e. $\hat{V_l^{*}}(1)=0$ for all $l \in \{1,\hdots,L\}$.

We interpret the fact that $\argmin_{b(1)}\hat{V_l^{*}}(b(1))$ is around $0.25$ as follows. The value of $b(1)$ that obtains the minimum corresponds to the belief state where the attacker achieves the lowest expected reward in the game. Negative rewards in the game are obtained when the defender mistakes an intrusion for no intrusion and vice versa (\ref{eq:reward_0})--(\ref{eq:reward_5}). As a consequence, the attacker prefers belief states where the defender has a high uncertainty, e.g. $b(1)=0.5$. At the same time, the attacker does not want $b(1)$ to be so large that the defender performs all its defensive actions before it gets a chance to attack, which can explain why we find the minimum to be around $0.25$ rather than $0.5$.

Lastly, Fig. \ref{fig:client_data} shows the percentage of blocked attacker and client traffic when running repeated game episodes in the emulation system with different defender strategies. The x-axis shows the running time and the y-axis shows the percentage of blocked traffic per second.

We observe in the upper plot that all defender strategies block some client traffic, which is expected considering the false \idps alarms generated by the clients (see Fig. \ref{fig:ids_distribution}). (The defender actions that cause traffic to be dropped are listed in Table \ref{tab:defender_stop_actions}.) The $o_t>0$ baseline strategy blocks the most client traffic and the Snort \idps baseline strategy blocks the least, slightly less than the equilibrium strategy learned through \tfp.

We further observe in the lower plot that the equilibrium strategy learned by \tfp blocks the most attacker traffic and that the $o_t>0$ baseline strategy blocks the least. This suggests to us that the equilibrium strategy balances well the trade-off between blocking clients and the attacker based on the reward function (\ref{eq:reward_0})--(\ref{eq:reward_5}). In comparison, the $o_t > 0$ baseline implements a trivial defense strategy that blocks nearly all traffic, and the Snort \idps baseline blocks too little traffic, failing to stop the intrusion.
\begin{figure}
  \centering
    \scalebox{0.6}{
      \includegraphics{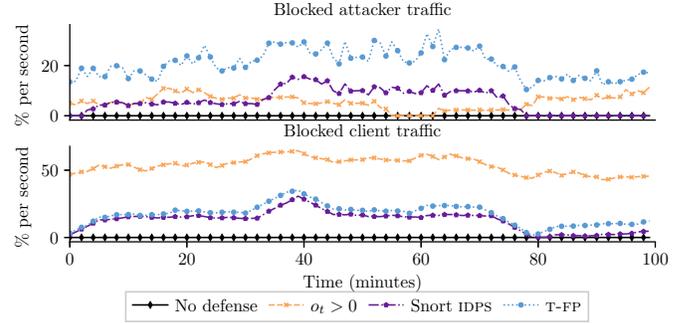}
    }
    \caption{Percentage of blocked attacker and client traffic in the emulation system; the blue curves show results from the equilibrium strategy learned via \tfp; the purple, orange, and black curves relate to baseline strategies.}
    \label{fig:client_data}
\end{figure}
\subsection{Discussion of the Evaluation Results}\label{sec:discussion}
In this work, we propose a framework for analyzing and solving the intrusion response use case, which we validate both theoretically and experimentally via simulation and emulation. The key findings can be summarized as follows:

(\textit{i}) Our framework is able to efficiently approximate optimal defender strategies for a practical IT infrastructure (see Fig. \ref{fig:exploitability_curve}). While we have not evaluated the learned strategies in the target infrastructure due to safety reasons, the fact that they achieve almost the same performance in the emulated infrastructure as in the simulator gives us confidence that the obtained strategies would perform as expected in the target infrastructure.

(\textit{ii}) The theory of optimal stopping provides insight about optimal strategies for attackers and defenders, which enables efficient computation of near-optimal strategies through self-play reinforcement learning (see Fig. \ref{fig:converge_times}). This finding can be explained by the threshold structures of the optimal stopping strategies, which drastically reduce the search space of possible strategies (see Theorem \ref{thm:best_responses} and Algorithm \ref{alg:ne_approximation}).

(\textit{iii}) The learned strategies can be efficiently implemented using the threshold properties. The computational complexity, which is dominated by the computation of the belief state, is upper bounded by $O(k|\mathcal{S}|^2|\mathcal{A}_{\mathrm{A}}|)$ where $k$ is a constant (\ref{eq:belief_upd}).

(\textit{iv}) Static defender strategies' performance deteriorate against a dynamic attacker, whereas defender strategies obtained through \tfp improve over time (see the middle plot in Fig. \ref{fig:exploitability_curve}). This finding is consistent with previous studies that use game-theoretic approaches (e.g. \cite{flipit,dynamic_game_linan_zhu}) and suggests limitations of static defense systems, such as the Snort \idps.
\section{Related Work}\label{sec:related_work}
Since the early 1990s, there has been a broad interest in automating network security functions, especially in the areas of intrusion detection, intrusion prevention, and intrusion response.

In the area of intrusion detection, the traditional approach has been to use packet inspection and static rules for detection of intrusions \cite{snort,ids_survey, int_prevention}. The main drawback of this approach lies in the need for domain experts to configure the rule sets. As a consequence, much effort has been devoted to developing statistical methods for detecting intrusions. Examples of statistical methods include anomaly detection methods (e.g. \cite{anomaly_ddetection}), change-point detection methods (e.g. \cite{tartakovsky_1}), Bayesian methods (e.g. \cite{fung_ids_distributed}), hidden Markov modeling methods (e.g. \cite{hmm_holgado}), deep learning methods (e.g. \cite{ml_anomaly_detection,deep_autoencoder_ids}), and threat intelligence methods (e.g. \cite{threat_intel_misp}). As a result of this effort, all mainstream \idss today have statistical components for automated detection of intrusions \cite{anderson_nides,snort,security_onion,wazuh,paxson1999bro}.

In contrast to intrusion detection, intrusion prevention and response usually remains a manual process performed by network administrators. Current \ipss and \irss can be configured with rules to automatically match response actions to known intrusion types, but they have no means to find effective response actions in an automatic way \cite{snort,trellix,932194,adepts_irs}. The problem of automatically finding response actions is an active area of research that uses concepts and methods from various fields, most notably from reinforcement learning (see surveys \cite{deep_rl_cyber_sec,control_rl_reviews,9272624} and textbook \cite{cybenko_acd}), control theory (see surveys \cite{dp_security_1,Miehling_control_theoretic_approaches_summary,feedback_control_computing_systems} and example \cite{Kreidl2004FeedbackCA}), causal modeling (see example \cite{causal_neil_agent}), game theory (see textbooks \cite{nework_security_alpcan,tambe,carol_book_intrusion_detection,levente_book}), graph theory (see examples \cite{acd_g_graph, purvine_graph_2}), fuzz testing (see examples \cite{fuzzing_rl_nddss_2021_wang,mayhem}), formal synthesis (see example \cite{formal_synthesis}), attack graphs (see example \cite{adepts_irs}), artificial intelligence (see surveys \cite{ai_survey,emanuello_math_cyber_defense} and textbook \cite{al_shaer_book}), and evolutionary methods (see examples \cite{armsrace_malware,hemberg_oreily_evo}).

While the research reported in this paper is informed by all the above works, we limit the following discussion to prior work that uses game-theoretic models and centers around finding security strategies through automatic control and reinforcement learning.
\subsection{Game-Theoretic Modeling in Network Security}
Since the early 2000s, researchers have studied automated security through modeling attacks and response actions on an IT infrastructure as a game between an attacker and a defender (see textbooks \cite{nework_security_alpcan,tambe,carol_book_intrusion_detection,levente_book}). The game is modeled in different ways depending on the use case. Examples from the literature include: advanced persistent threat games \cite{flipit,dynamic_game_linan_zhu,general_sum_markov_games_for_strategic_detection_of_apt,8750848,9559403,HUANG2020101660,apt_rl_simulation,HAN2022113912}, honeypot placement games \cite{honeypot_game,DBLP:journals/compsec/HorakBTKK19,game_theoretic_modeling_ofhoneypot_selection}, resource allocation games \cite{game_resource_alloc_malicious_packet,9923774}, authentication games \cite{serkan_gyorgy_game}, distributed denial-of-service games \cite{9328143,posg_cyber_deception_network_epidemic}, situational awareness games \cite{Brynielsson473862,Franke2014CyberSA}, moving target defense games \cite{sengupta_marl_stackel_bayesian,Nguyen2018}, jamming games \cite{altman_jamming_1}, and intrusion response games \cite{stocahstic_games_security_indep_nodes_nguyen_alpcan_basar, hammar_stadler_cnsm_20, muzero_sdn,optimal_thresholds_for_ids,a_game_theoretic_ids_control_sys_alpcan_basar,zhu_basar_dynamic_policy_ids_config,9096400,5270307,LIU2021102480,stochastic_game_approach_2,Wang2021GameTheoreticAI,zhang2019,fog_computing_irs}. These games are formulated using various models from the game-theoretic literature. For example: Stochastic Games (\sgs) (see e.g. \cite{serkan_gyorgy_game,general_sum_markov_games_for_strategic_detection_of_apt,stocahstic_games_security_indep_nodes_nguyen_alpcan_basar,zhu_basar_dynamic_policy_ids_config,zhang2019,altman_jamming_1}), extensive-form games (see e.g. \cite{a_game_theoretic_ids_control_sys_alpcan_basar,nework_security_alpcan}), Blotto games (see e.g. \cite{9923774}), differential games (see e.g. \cite{9754705,fog_computing_irs}), hypergames (see e.g. \cite{8750848,9559403}), \posgs (see e.g. \cite{posg_cyber_deception_network_epidemic,hammar_stadler_cnsm_20,muzero_sdn,LIU2021102480,stochastic_game_approach_2}), Stackelberg games (see e.g. \cite{posg_cyber_deception_network_epidemic,optimal_thresholds_for_ids,5270307}), graph-based games (see e.g. \cite{game_resource_alloc_malicious_packet,Nguyen2018}), evolutionary games (see e.g. \cite{9754705,9096400}), continuous-kernel games (see e.g. \cite{a_game_theoretic_ids_control_sys_alpcan_basar}), rivalry games (see e.g. \cite{apt_rl_simulation}), and Bayesian games (see e.g. \cite{sengupta_marl_stackel_bayesian}).

This paper differs from the works referenced above in two main ways. First, we model the intrusion response use case as an optimal stopping game. The benefit of our model is that it provides insight into the structure of best response strategies through the theory of optimal stopping. Second, we evaluate obtained strategies on an emulated IT infrastructure. This contrasts with most of the prior works that use game-theoretic approaches, which either evaluate strategies analytically or in simulation \cite{flipit,dynamic_game_linan_zhu,general_sum_markov_games_for_strategic_detection_of_apt,honeypot_game,DBLP:journals/compsec/HorakBTKK19,game_theoretic_modeling_ofhoneypot_selection,serkan_gyorgy_game,posg_cyber_deception_network_epidemic,stocahstic_games_security_indep_nodes_nguyen_alpcan_basar, hammar_stadler_cnsm_20,optimal_thresholds_for_ids,a_game_theoretic_ids_control_sys_alpcan_basar,zhu_basar_dynamic_policy_ids_config,9096400,9754705,9559403,8750848,Nguyen2018,HUANG2020101660,Wang2021GameTheoreticAI,apt_rl_simulation,HAN2022113912,zhang2019,altman_jamming_1}.

Game-theoretic formulations based on optimal stopping theory can be found in prior research on Dynkin games \cite{dynkin_orig_3,dynkin_orig_2,dynkin_example_1,dynkin_example_2,dynkin_example_3}. Compared to these articles, our approach is more general by (\textit{i}) allowing each player to take multiple stop actions within an episode; and (\textit{ii}) by not assuming a game of perfect information. Another difference is that the referenced articles either study purely mathematical problems or problems in mathematical finance. To the best of our knowledge, we are the first to apply the stopping game formulation to the use case of intrusion response.

Our stopping game has similarities with the \flipit game \cite{flipit} and signaling games \cite{signaling_game_original}, both of which are commonplace in the security literature (see survey \cite{game_t_sec_survey} and textbooks \cite{nework_security_alpcan,tambe,carol_book_intrusion_detection,levente_book}). Signaling games have the same information asymmetry as our game and \flipit uses the same binary state space to model the state of an attack. The main differences are as follows. \flipit models the use case of advanced persistent threats and is a symmetric non-zero-sum game. In contrast, our game models an intrusion response use case and is an asymmetric zero-sum game. Lastly, compared to signaling games, the main difference is that our game is a sequential and simultaneous-move game. Signaling games, in comparison, are typically two-stage games where one player moves in each stage.

Previous game-theoretic studies that use emulation systems similar to ours are \cite{9328143} and \cite{5270307}. Specifically, in \cite{9328143}, a denial-of-service use case is formulated as a signaling game, for which a Nash equilibrium is derived. The equilibrium is then used to design a defense mechanism that is evaluated in a software-defined network emulation based on \mininet \cite{mininet}. Compared to this paper, the main differences are that we focus on a different use case than \cite{9328143} and that our solution method is based on reinforcement learning.

Similar to this paper, the authors of \cite{5270307} formulate an intrusion response use case as a \posg where the defender observes alerts from a Snort \idps \cite{snort}. In contrast to our approach, however, the approach of \cite{5270307} assumes access to attack-defense trees designed by human experts. Another difference between this paper and \cite{5270307} is the \posg. The \posg in \cite{5270307} has a larger state space than the \posg considered in this paper. Although this makes the \posg in \cite{5270307} more expressive than ours, it also makes computation of optimal defender strategies intractable. In fact, to estimate optimal defender strategies, the authors of \cite{5270307} are forced to approximate their model with one that has a smaller state space and is fully observed. In comparison, we are able to efficiently approximate equilibria of our game, without relying on model simplifications and without assuming access to attack-defense trees designed by human experts.
\subsection{Control Theory for Automated Intrusion Response}
Control theory provides a well-established mathematical framework for studying automatic systems. Classical control systems involve actuators in the physical world (e.g. electric power systems \cite{scada_control_example}) and many studies have focused on applying control theory to automate intrusion responses in cyber-physical systems (see surveys \cite{7011201,7011179,8795652}).

The control framework can also be applied to computing systems and interest in control theory among researchers in IT security is growing (see survey \cite{Miehling_control_theoretic_approaches_summary}). As opposed to classical control theory, which is focused on \textit{continuous-time} systems, the research on applying control theory to computing systems is focused almost entirely on \textit{discrete-time} systems. The main reason being that measurements from computer systems are solicited on a sampled basis, which is best described by a discrete-time model \cite{feedback_control_computing_systems,control_os_1}.

Previous works that apply control theory to the use case of intrusion response include: \cite{7568529,Kreidl2004FeedbackCA,7573127,miehling_attack_graph,miehling_control_security_2,miehling_control_security_3,miehling_control_security_4}. All of which model the problem of selecting response actions as the problem of controlling a discrete-time dynamical system and obtain optimal defender strategies through dynamic programming.

The main limitation of the works referenced above is that dynamic programming does not scale to problems of practical size due to the curse of dimensionality \cite{bellman_eq,BertsekasTsitsiklis96}.

\subsection{Reinforcement Learning for Automated Intrusion Response}
Reinforcement learning has emerged as a promising approach to approximate optimal control strategies in scenarios where exact dynamic programming is not applicable, and fundamental breakthroughs demonstrated by systems like \alphago in 2016 \cite{deepmind_2} and \openai \five in 2019 \cite{dota_openai_1} have inspired us and other researchers to study reinforcement learning with the goal to automate security functions (see surveys \cite{deep_rl_cyber_sec,control_rl_reviews}).

A large number of studies have focused on applying reinforcement learning to use cases similar to the intrusion response use case we discuss in this paper \cite{hammar_stadler_cnsm_20,hammar_stadler_cnsm_21, hammar_stadler_cnsm_22, elderman, schwartz_2020, oslo_pentest_rl, kurt_rl, microsoft_red_teaming, ridley_ml_defense, rl_cyberdefense_heartbleed, deep_hierarchical_rl_pentest, pentest_rl_rohit, adaptive_cyber_defense_pomdp_rl, muzero_sdn,atmos,sdn_rl_ddos,deep_air,noms_demo_preprint,game_cyber_rl_sim,al_shaer_book_ppo_simulation,cmu_ppo_selfplay,YUNGAICELANAULA2022103444,sdn_zolo,9923774,9916301, 9893186,rl_qcd_kalle,MAEDA2021102108,BLAND2020101738, HUANG2022102844,LIU2021102480,9096506,Khoury2020AHG,Wang2020AnID, han_yi_sdn,stochastic_game_approach_2,Iannucci2021AnIR,Iannucci2019APE,Wang2021GameTheoreticAI,hammar_stadler_tnsm,3560830.3563732,zhang2019}. These works use a variety of models, including \mdps \cite{oslo_pentest_rl,ridley_ml_defense,deep_hierarchical_rl_pentest,pentest_rl_rohit,deep_air,al_shaer_book_ppo_simulation,YUNGAICELANAULA2022103444,MAEDA2021102108,Iannucci2021AnIR,Iannucci2019APE,3560830.3563732}, \sgs \cite{elderman, hammar_stadler_cnsm_20, muzero_sdn,game_cyber_rl_sim,LIU2021102480,Wang2021GameTheoreticAI,model_free_rl_irs}, attack graphs \cite{cmu_ppo_selfplay}, Petri nets \cite{BLAND2020101738}, and \pomdps \cite{hammar_stadler_cnsm_21,adaptive_cyber_defense_pomdp_rl,kurt_rl,hammar_stadler_tnsm}, as well as various reinforcement learning algorithms, including Q-learning \cite{elderman,oslo_pentest_rl,ridley_ml_defense,mec_game_rl_q_learning_sim,9893186,BLAND2020101738,Wang2020AnID,3560830.3563732}, \sarsa \cite{kurt_rl}, \ppo \cite{hammar_stadler_cnsm_20,hammar_stadler_cnsm_21,cmu_ppo_selfplay,al_shaer_book_ppo_simulation,sdn_zolo}, hierarchical reinforcement learning \cite{deep_hierarchical_rl_pentest}, \dqn \cite{pentest_rl_rohit,YUNGAICELANAULA2022103444,sdn_zolo,9923774,9916301,LIU2021102480,Iannucci2021AnIR}, Thompson sampling \cite{adaptive_cyber_defense_pomdp_rl}, \muzero \cite{muzero_sdn}, \nfq \cite{atmos}, \ddqn \cite{deep_air,stochastic_game_approach_2}, \nfsp \cite{nfsp_jamming_1_sim,nfsp_security_2}, \ac \cite{MAEDA2021102108}, \acc \cite{han_yi_sdn}, and \ddpg \cite{sdn_rl_ddos,game_cyber_rl_sim}.

This paper differs from the works referenced above in three main ways. First, we model the intrusion response use case as a partially observed stochastic game. Most of the other works model the use case as a single-agent \mdp or \pomdp. The advantage of the game-theoretic model is that it allows finding defender strategies that are effective against a dynamic attacker, i.e. an attacker that adapts its strategy in response to the defender strategy.

Second, in a novel approach, we derive structural properties of optimal defender strategies in the game using optimal stopping theory.

Third, our method to find effective defender strategies includes using an emulation system in addition to a simulation system. The advantage of our method compared to the simulation-only approaches \cite{hammar_stadler_cnsm_20,hammar_stadler_cnsm_21, elderman, schwartz_2020, oslo_pentest_rl, kurt_rl, microsoft_red_teaming, ridley_ml_defense, rl_cyberdefense_heartbleed, deep_hierarchical_rl_pentest, pentest_rl_rohit, adaptive_cyber_defense_pomdp_rl,game_cyber_rl_sim,cmu_ppo_selfplay,al_shaer_book_ppo_simulation,mec_game_rl_q_learning_sim,nfsp_jamming_1_sim,9923774,9916301,HUANG2022102844,LIU2021102480,9096506,stochastic_game_approach_2,Iannucci2019APE,Iannucci2019APE,Wang2021GameTheoreticAI,3560830.3563732,zhang2019} is that the parameters of our simulation system are determined by measurements from an emulation system instead of being chosen by a human expert. Further, the learned strategies are evaluated in the emulation system, not in the simulation system. As a consequence, the evaluation results give higher confidence of the obtained strategies' performance in the target infrastructure than what simulation results would provide.

Some prior work on automated learning of security strategies that make use of emulation are: \cite{Wang2020AnID}, \cite{muzero_sdn}, \cite{atmos}, \cite{sdn_rl_ddos}, \cite{deep_air}, \cite{YUNGAICELANAULA2022103444}, \cite{Khoury2020AHG}, \cite{han_yi_sdn}, \cite{Iannucci2021AnIR}, \cite{beyond_cage}, and \cite{sdn_zolo}. They either emulate software-defined networks based on \mininet \cite{mininet} or use custom testbeds. The main differences between these efforts and the work described in this article are: (\textit{i}) we develop our own emulation system which allows for experiments with a large variety of exploits; (\textit{ii}) we focus on a different use case (most of the referenced works study denial-of-service attacks); (\textit{iii}) we do not assume that the defender has perfect observability; (\textit{iv}) we do not assume a static attacker; and (\textit{v}) we use an underlying theoretical framework to formalize the use case, derive structural properties of optimal strategies, and test these properties in an emulation system.

Finally, \cite{cyborg}, \cite{cygil}, and \cite{farland} describe efforts in building emulation platforms for reinforcement learning and cyber defense, which resemble our emulation system. In contrast to these articles, our emulation system has been built to investigate the specific use case of intrusion response and forms an integral part of our general solution method (see Fig. \ref{fig:method}).

\section{Conclusion and Future Work}\label{sec:conclusions}
In this work, we combine a formal framework with a practical evaluation to address the problem of automated intrusion response. We formulate the interaction between an attacker and a defender as an optimal stopping game. This formulation gives us insight into the structure of optimal strategies, which we prove to have threshold properties. Based on this knowledge, we develop a fictitious self-play algorithm, Threshold Fictitious Self-Play (\tfp), which learns near-optimal strategies in an efficient way. The results from running \tfp show that the learned strategies converge to an approximate Nash equilibrium and thus to near-optimal strategies (see Fig. \ref{fig:exploitability_curve}). The results also demonstrate that \tfp converges faster than a state-of-the-art fictitious self-play algorithm by taking advantage of threshold properties of optimal strategies (see Fig. \ref{fig:converge_times}). The threshold properties further enable us to provide a graphic representation of the learned strategies in a simple form (see Fig. \ref{fig:threshold_policy_3} and Fig. \ref{fig:stop_prob2}).

To assess the learned strategies in a real environment, we evaluate them in a system that emulates our target infrastructure (see Fig. \ref{fig:system2}). The results show that the strategies achieve almost the same performance in the emulated infrastructure as in the simulation. This gives us confidence that the obtained strategies would perform as expected in the target infrastructure, which is not feasible to evaluate directly.

We plan to continue this work in several directions. First, we will extend the current model of the attacker and the defender, which currently captures only timing of actions, to include decisions about a range of attacker and defender actions. Second, we plan to combine the strategies learned through our framework with techniques for online play, such as rollout \cite{bertsekas2021rollout}. Third, we plan to study techniques that allow to obtain defender strategies that generalize to a variety of infrastructure configurations and topologies. Fourth, we intend to extend our framework to an online-learning setting where the defender strategies co-evolve with changes in the target infrastructure.
\section{Acknowledgments}
This research has been supported in part by the Swedish armed forces and was conducted at KTH Center for Cyber Defense and Information Security (CDIS). The authors would like to thank Pontus Johnson for his useful input to this research, and Forough Shahab Samani and Xiaoxuan Wang for their constructive comments on a draft of this paper. The authors are also grateful to Branislav Bosansk{\'{y}} for sharing the code of the \hsvi algorithm for one-sided \posgs and to Jakob Stymne for contributing to our implementation of \nfsp.
\appendices
\section{Proofs}\label{appendix:proofs}
\subsection{Proof of Theorem \ref{thm:best_responses}.A}\label{appendix:theorem_1_a}
\begin{proof}[Proof.]
Since the \posg $\Gamma$ in (\ref{eq:game_def}) is finite and $\gamma \in (0,1)$, the existence proofs in \cite[\S 3]{posg_equilibria_existence_finite_horizon} and \cite[Thm. 2.3]{horak_thesis} apply, which state that a mixed Nash equilibrium exists. For the sake of brevity we do not restate the proofs, which are based on formulating the \posg as a finite strategic form game and appealing to Nash's theorem \cite[Thm. 1]{nash51}.

We prove that a pure Nash equilibrium exists when $s=0 \iff b(1) = 0$ using a proof by construction. It follows from (\ref{eq:reward_0})--(\ref{eq:reward_5}) and (\ref{eq:br_defender}) that the pure strategy defined by $\bar{\pi}_{\mathrm{D}}(0)=\mathfrak{C}$ and $\bar{\pi}_{\mathrm{D}}(b(1))=\mathfrak{S} \iff b(1)>0$ is a best response for the defender against any attacker strategy when $s=0 \iff b(1) = 0$. Similarly, given $\bar{\pi}_{\mathrm{D}}$, we conclude from (\ref{eq:reward_0})--(\ref{eq:reward_5}) and (\ref{eq:br_attacker}) that the pure strategy defined by $\bar{\pi}_{\mathrm{A}}(b(1), 0)=\mathfrak{C}$ and $\bar{\pi}_{\mathrm{A}}(b(1),1)=\mathfrak{S}$ for all $b(1) \in [0,1]$ is a best response for the attacker. Hence, $(\bar{\pi}_{\mathrm{D}},\bar{\pi}_{\mathrm{A}})$ is a pure Nash equilibrium (\ref{eq:minmax_objective}).
\end{proof}
\subsection{Proof of Theorem \ref{thm:best_responses}.B.}\label{appendix:theorem_1_b}
\begin{proof}[Proof.]
Given the \posg $\Gamma$ (\ref{eq:game_def}) and a fixed attacker strategy $\pi_{\mathrm{A}}$, any best response strategy for the defender $\tilde{\pi}_{\mathrm{D}}\in \mathscr{B}_{\mathrm{D}}(\pi_{\mathrm{A}})$ is an optimal strategy in a \pomdp $\mathcal{M}^{P}$ (see \S \ref{sec:game_analysis}). Hence, it is sufficient to show that there exists an optimal strategy $\pi_{\mathrm{D}}^{*}$ in $\mathcal{M}^{P}$ that satisfies (\ref{eq:prop_br_defender}). Conditions for (\ref{eq:prop_br_defender}) to hold and the existence proof are given in our previous work \cite{hammar_stadler_tnsm}[Thm 1.C]. Since $f_{O\mid s}$ is \tpp by assumption and all of the remaining conditions hold by definition of $\Gamma$ (\ref{eq:game_def}), the result follows.
\end{proof}
\subsection{Proof of Theorem \ref{thm:best_responses}.C.}\label{appendix:theorem_1_c}
Given the \posg $\Gamma$ (\ref{eq:game_def}) and a fixed defender strategy $\pi_{\mathrm{D}}$, any best response strategy for the attacker $\tilde{\pi}_{\mathrm{A}} \in \mathscr{B}_{\mathrm{A}}(\pi_{\mathrm{D}})$ is an optimal strategy in an \mdp $\mathcal{M}$ (see \S \ref{sec:game_analysis}). Hence, it is sufficient to show that there exists an optimal strategy $\pi_{\mathrm{A}}^{*}$ in $\mathcal{M}$ that satisfies (\ref{eq:prop_br_attacker_1})--(\ref{eq:prop_br_attacker_2}). To prove this, we use properties of $\mathcal{M}$'s value function $V_{\pi_{\mathrm{D}},l}^{*}$ (\ref{eq:bellman_eq_43}).

We use the value iteration algorithm to establish properties of $V_{\pi_{\mathrm{D}},l}^{*}$ \cite{puterman,krishnamurthy_2016}. Let $V_{\pi_{\mathrm{D}},l}^{k}$, $\mathscr{S}^{k,(\mathrm{A})}_{s,l,\pi_{\mathrm{D}}}$, and $\mathscr{C}^{k,(\mathrm{A})}_{s,l,\pi_{\mathrm{D}}}$, denote the value function, the stopping set (\ref{eq:stopping_set_at}), and the continuation set (\ref{eq:continue_set_at}) at iteration $k$ of the value iteration algorithm, respectively. Then, $\lim_{k\rightarrow \infty}V_{\pi_{\mathrm{D}},l}^k=V_{\pi_{\mathrm{D}},l}^{*}$, $\lim_{k\rightarrow \infty}\mathscr{S}^{k,(\mathrm{A})}_{s,l,\pi_{\mathrm{D}}}=\mathscr{S}^{(\mathrm{A})}_{s,l,\pi_{\mathrm{D}}}$, and $\lim_{k\rightarrow \infty}\mathscr{C}^{k,(\mathrm{A})}_{s,l,\pi_{\mathrm{D}}}$ $=\mathscr{C}^{(\mathrm{A})}_{s,l,\pi_{\mathrm{D}}}$ \cite[Thm 6.3.1]{puterman}\cite[Thm. 7.6.2-7.6.3]{krishnamurthy_2016}. We define $V_{\pi_{\mathrm{D}},l}^0\big($$(s,b(1))$$\big)=0$ for all $b(1)\in [0,1]$, $s \in \mathcal{S}$ and $l\in \{1,\hdots,L\}$.

Towards the proof of Theorem \ref{thm:best_responses}.C, we state the following six lemmas.
\begin{lemma}\label{lemma:V_geq_zero}
Given any defender strategy $\pi_{\mathrm{D}}$, $V^{*}_{\pi_{\mathrm{D}},l}\big($$s,b(1)\big)$ $\geq 0$ for all $s \in \mathcal{S}$ and $b(1) \in [0,1]$.
\end{lemma}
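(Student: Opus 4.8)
The plan is to read off $V^{*}_{\pi_{\mathrm{D}},l}$ as the optimal value function of the attacker's best-response \mdp $\mathcal{M}$ (cf.\ \S\ref{sec:game_analysis} and (\ref{eq:bellman_eq_43})) and to bound it below by the return of a single, explicitly constructed attacker strategy. Since the attacker's per-step payoff in $\mathcal{M}$ is $-r_t$ and the attacker optimizes this quantity, $V^{*}_{\pi_{\mathrm{D}},l}(s,b(1))$ is the best value the attacker can attain from $(s,b(1))$ with $l$ stops remaining, hence it is at least the value of any fixed attacker strategy started there. So it suffices to exhibit one attacker strategy whose return is non-negative from every such starting point.

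The natural candidate is the \emph{passive} attacker strategy $\bar{\pi}_{\mathrm{A}}$: play $\mathfrak{C}$ in every state with $s=0$ and play $\mathfrak{S}$ in every state with $s=1$. I would then trace the reachable states using the transition law (\ref{eq:tp_1})--(\ref{eq:tp_7}). Starting in $s=0$: since the only transition $0\to1$ requires the attacker to play $\mathfrak{S}$ (\ref{eq:tp_3}), under $\bar{\pi}_{\mathrm{A}}$ the game stays in $s=0$ until -- possibly -- the defender exhausts its stops and it jumps to $\emptyset$ (\ref{eq:tp_7}); in particular the intrusion never starts. Hence every reward collected is one of $\mathcal{R}(0,(\mathfrak{C},\mathfrak{C}))=0$, $\mathcal{R}(0,(\mathfrak{S},\cdot))=\mathrm{R}_{\mathrm{cost}}/l_t<0$, or $\mathcal{R}(\emptyset,\cdot)=0$ (\ref{eq:reward_0})--(\ref{eq:reward_5}), so $r_t\le 0$ for all $t$ and the attacker's discounted return $\mathbb{E}_{\bar{\pi}_{\mathrm{A}}}\!\left[\sum_t\gamma^{t-1}(-r_t)\right]$ is $\ge 0$. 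Starting in $s=1$: $\bar{\pi}_{\mathrm{A}}$ plays $\mathfrak{S}$, which by (\ref{eq:tp_7}) drives the game to $\emptyset$ with $\mathcal{R}(1,(\cdot,\mathfrak{S}))=0$ (\ref{eq:reward_1}) and reward $0$ thereafter; and starting in $s=\emptyset$ all rewards are $0$. In all three cases the return of $\bar{\pi}_{\mathrm{A}}$ is $\ge 0$, which yields $V^{*}_{\pi_{\mathrm{D}},l}(s,b(1))\ge 0$.

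An alternative route, better aligned with the value-iteration machinery used in the proof of Theorem \ref{thm:best_responses}.C, is induction over the iterates $V^{k}_{\pi_{\mathrm{D}},l}$ on the augmented state $(s,b(1),l)$: with $V^{0}_{\pi_{\mathrm{D}},l}\equiv 0$, the inductive step uses that the attacker can always select the ``safe'' action ($\mathfrak{C}$ when $s=0$, $\mathfrak{S}$ when $s=1$), whose immediate contribution $-\mathcal{R}(\cdot)$ is $\ge 0$ and whose successor state -- either $\emptyset$, or $s\in\{0,1\}$ with $l$ unchanged or decremented by one -- has non-negative value by the induction hypothesis; since the attacker takes the best action, $V^{k+1}_{\pi_{\mathrm{D}},l}\ge 0$, and $V^{k}_{\pi_{\mathrm{D}},l}\to V^{*}_{\pi_{\mathrm{D}},l}$ \cite[Thm. 6.3.1]{puterman}\cite[Thm. 7.6.2]{krishnamurthy_2016} closes the argument.

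I do not expect a genuine obstacle: this is essentially the base case among the lemmas supporting Theorem \ref{thm:best_responses}.C. The only points demanding care are (i) the transition bookkeeping -- verifying from (\ref{eq:tp_1})--(\ref{eq:tp_7}) that the passive / ``safe'' attacker never steers the game into a state the argument does not cover (crucially, that no $0\to1$ transition can occur without an attacker $\mathfrak{S}$) -- and (ii) the dependence on the number of remaining stops $l$, so that the induction is carried out over the whole family $\{V^{k}_{\pi_{\mathrm{D}},l}\}_{l\in\{1,\hdots,L\}}$ and every successor value invoked in the inductive step is one already covered by the hypothesis.
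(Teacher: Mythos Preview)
Your proposal is correct and takes essentially the same approach as the paper: construct the passive attacker strategy $\bar{\pi}_{\mathrm{A}}$ (play $\mathfrak{C}$ in $s=0$, $\mathfrak{S}$ in $s=1$), verify from the reward definitions (\ref{eq:reward_0})--(\ref{eq:reward_5}) that its return is non-negative from every state, and invoke optimality of $V^{*}_{\pi_{\mathrm{D}},l}$ in the attacker's \mdp. Your write-up is in fact more explicit than the paper's about the transition bookkeeping and about the sign convention (attacker maximizes $-r_t$), and the alternative value-iteration route you sketch is a sound variant but not needed.
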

\begin{proof}[Proof.]
Consider $\bar{\pi}_{\mathrm{A}}$ defined by $\bar{\pi}_{\mathrm{A}}(0,\cdot)=\mathfrak{C}$ and $\bar{\pi}_{\mathrm{A}}(1,\cdot)=\mathfrak{S}$. Then it follows from (\ref{eq:reward_0})--(\ref{eq:reward_5}) that for any $\pi_{\mathrm{D}} \in \Pi_{\mathrm{D}}$, any $s \in \mathcal{S}$, and any $b(1) \in [0,1]$, the following holds: $V^{\bar{\pi}_{\mathrm{A}}}_{\pi_{\mathrm{D}},l}(s,b(1)) \geq 0$. By optimality, $V^{\bar{\pi}_{\mathrm{A}}}_{\pi_{\mathrm{D}},l}(s,b(1)) \leq V^{*}_{\pi_{\mathrm{D}},l}(s,b(1))$. Hence, $V^{*}_{\pi_{\mathrm{D}},l}(s,b(1)) \geq 0$.
\end{proof}

\begin{lemma}\label{lemma:decreasing_v}
$V^{*}_{\pi_{\mathrm{D}},l}\big($$b(1),1\big)$ is non-increasing with $\pi_{\mathrm{D}}(\mathfrak{S} \mid b(1))$ and non-decreasing with $l \in \{1,\hdots,L\}$.
\end{lemma}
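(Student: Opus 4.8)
The plan is to prove both monotonicity claims first for the finite-horizon value functions $V^{k}_{\pi_{\mathrm{D}},l}$ produced by value iteration and then pass to the limit, since $V^{k}_{\pi_{\mathrm{D}},l}\to V^{*}_{\pi_{\mathrm{D}},l}$ and weak monotonicity survives pointwise limits. Fix $\pi_{\mathrm{D}}$; a best response for the attacker is an optimal policy of the \mdp $\mathcal{M}$, so $V^{*}_{\pi_{\mathrm{D}},l}$ obeys (\ref{eq:bellman_eq_43}). I would induct on the value-iteration index $k$ while carrying the joint hypothesis that, for all $b(1)\in[0,1]$ and $l$, the quantity $V^{k}_{\pi_{\mathrm{D}},l}(b(1),1)$ is non-increasing in $\pi_{\mathrm{D}}(\mathfrak{S}\mid b(1))$, is non-decreasing in $l$, and is nonnegative; the base case $V^{0}\equiv 0$ is immediate.

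For the inductive step I would write out (\ref{eq:bellman_eq_43}) at a state with $s=1$. There the attacker either terminates the intrusion (action $\mathfrak{S}$), which yields reward $0$ and moves the game to $\emptyset$, or continues (action $\mathfrak{C}$), so $V^{k+1}_{\pi_{\mathrm{D}},l}(b(1),1)=\max\big(0,\,Q^{k+1}_{\mathfrak{C}}(b(1),l)\big)$; this is automatically nonnegative, consistent with Lemma \ref{lemma:V_geq_zero}. The structural fact to exploit is that, when the attacker continues, the transition out of $s=1$ is the $\emptyset$-versus-$1$ split governed only by $\phi$ (see (\ref{eq:tp_2})--(\ref{eq:tp_4})), so it, and hence the induced observation law and belief update, do not depend on the defender's action; the defender's action only changes the immediate reward and decrements $l$. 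Consequently $Q^{k+1}_{\mathfrak{C}}(b(1),l)$ is affine in $p:=\pi_{\mathrm{D}}(\mathfrak{S}\mid b(1))$, of the form $(1-p)A_{l}+pB_{l}$, where (values from the attacker's viewpoint) $A_{l}$ combines the running gain $-\mathrm{R}_{\mathrm{int}}>0$ with a term $\gamma(1-\phi_{l})\mathbb{E}[V^{k}_{\pi_{\mathrm{D}},l}(B_{t+1},1)]$, and $B_{l}$ combines the loss $-\mathrm{R}_{\mathrm{st}}/l<0$ with a term $\gamma(1-\phi_{l})\mathbb{E}[V^{k}_{\pi_{\mathrm{D}},l-1}(B_{t+1},1)]$ for $l>1$, while for $l=1$ the defensive stop ends the game so $B_{1}=-\mathrm{R}_{\mathrm{st}}$.

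Monotonicity in $p$ then reduces to the single inequality $B_{l}\le A_{l}$ (so the affine map has non-positive slope): the immediate terms already satisfy $-\mathrm{R}_{\mathrm{st}}/l<0<-\mathrm{R}_{\mathrm{int}}$, and the continuation terms are ordered because $\mathbb{E}[V^{k}_{\pi_{\mathrm{D}},l-1}(\cdot,1)]\le\mathbb{E}[V^{k}_{\pi_{\mathrm{D}},l}(\cdot,1)]$ by the inductive hypothesis and both expectations are nonnegative; since $\max(0,\cdot)$ is monotone, $V^{k+1}_{\pi_{\mathrm{D}},l}(b(1),1)$ inherits non-increasingness in $p$. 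Monotonicity in $l$ follows once one checks $A_{l}\le A_{l+1}$ and $B_{l}\le B_{l+1}$ — each a consequence of the inductive hypothesis together with $\phi_{l}\ge\phi_{l+1}$ (the intrusion-stopping probability grows as defensive actions are spent), $1/l\ge 1/(l+1)$, and nonnegativity — so that $Q^{k+1}_{\mathfrak{C}}(b(1),l)\le Q^{k+1}_{\mathfrak{C}}(b(1),l+1)$ by comparing the two convex combinations termwise, and $\max(0,\cdot)$ carries this to $V^{k+1}$. Letting $k\to\infty$ finishes the proof.

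The step I expect to require the most care is getting the Bellman recursion in $s=1$ exactly right, rather than the ensuing algebra: one must pin down whether the stopping probability attached to the current step is indexed by $l_{t}$ or by $l_{t}-a^{(\mathrm{D})}_{t}$ (this shifts which $\phi_{\bullet}$ appears in $B_{l}$ and hence how the inductive hypothesis enters), track the boundary level $l=1$ where $\mathfrak{S}$ by the defender ends the game, and — if $\pi_{\mathrm{D}}(\mathfrak{S}\mid b(1))$ is permitted to vary with $l$ — verify that the termwise comparison of the two convex combinations still closes, which then requires the ordering $B_{l}\le A_{l}$ in addition to $A_{l}\le A_{l+1}$ and $B_{l}\le B_{l+1}$.
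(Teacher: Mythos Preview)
Your plan is correct and matches the paper's own proof: both argue by induction on the value-iteration index, write $V^{k}_{\pi_{\mathrm{D}},l}(b(1),1)=\max\big(0,\,Q_{\mathfrak{C}}\big)$ at $s=1$, and show the continue-branch is monotone in $\pi_{\mathrm{D}}(\mathfrak{S}\mid b(1))$ and in $l$ via the reward signs and the induction hypothesis. Your version is simply more explicit---decomposing $Q_{\mathfrak{C}}$ as the affine combination $(1-p)A_{l}+pB_{l}$ and carrying nonnegativity in the hypothesis---whereas the paper compresses these into one line; the caution you flag about which $\phi_{\bullet}$ appears is warranted but does not change the argument, since the paper indexes the current-step stopping probability by $l_t$.
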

\begin{proof}[Proof.]
We prove this statement by mathematical induction. For $k=1$, we know from (\ref{eq:reward_0})--(\ref{eq:reward_5}) that $V^1_{\pi_{\mathrm{D}},l}\big($$1$, $b(1)\big)$ is non-increasing with $\pi_{\mathrm{D}}(\mathfrak{S} \mid b(1))$ and non-decreasing with $l$.

For $k > 1$, $V^{k}_{\pi_{\mathrm{D}},l}$ is given by:
\begin{align}
  &V^{k}_{\pi_{\mathrm{D}},l}\big(b(1),1\big) = \max\Big[0, -\mathcal{R}\big(1,(\mathfrak{C},a^{(\mathrm{D})})\big) \label{eq:decreasing_v}\\
  &+ (1-\phi_{l})\sum_of_O(o\mid 1)V^{k-1}_{l-a^{(\mathrm{D})}}\big(b(1),1\big)\Big]\nonumber
\end{align}
The first term inside the maximization in (\ref{eq:decreasing_v}) is trivially non-increasing with $\pi_{\mathrm{D}}(\mathfrak{S} \mid b(1))$ and non-decreasing with $l$. Assume by induction that the statement of Lemma \ref{lemma:decreasing_v} holds for $V^{k-1}_{\pi_{\mathrm{D}},l}\big($$s$, $b(1)\big)$. Then the second term inside the maximization in (\ref{eq:decreasing_v}) is non-increasing with $\pi_{\mathrm{D}}(\mathfrak{S}\mid b(1))$ and non-decreasing with $l$ by (\ref{eq:reward_0})--(\ref{eq:reward_5}) and the induction hypothesis. Hence, $V^{k}_{\pi_{\mathrm{D}},l}\big($$s$, $b(1)\big)$ is non-increasing with $\pi_{\mathrm{D}}(\mathfrak{S}\mid b(1))$ and non-decreasing with $l$ for all $k\geq 0$.
\end{proof}

\begin{lemma}\label{lemma:dec_stopping}
If $f_{O}$ is \tpp and $\pi_{\mathrm{D}}(\mathfrak{S} \mid b(1))$ is increasing with $b(1)$, then $V_{\pi_{\mathrm{D}},l}(b(1),1) \geq \sum_{o}f_O(o \mid 1)V_{\pi_{\mathrm{D}},l}(b^{o}(1),1)$, where $b^{o}(1)$ denotes $b(1)$ updated with (\ref{eq:belief_upd}) after observing $o \in \mathcal{O}$.
\end{lemma}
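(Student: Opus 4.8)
The plan is to reduce the claim to a monotonicity property of $V_{\pi_{\mathrm{D}},l}(\cdot,1)$ and then to exploit the \tpp assumption on $f_{O}$ twice: once to control how a single Bayesian update moves the belief, and once to compare the observation law in state $1$ with the predictive law. First I would record, by combining Lemma~\ref{lemma:decreasing_v} with the hypothesis that $\pi_{\mathrm{D}}(\mathfrak{S}\mid b(1))$ is non-decreasing in $b(1)$, that $b(1)\mapsto V_{\pi_{\mathrm{D}},l}(b(1),1)$ is non-increasing. Next, since $f_{O}$ is \tpp the likelihood ratio $f_{O}(o\mid 1)/f_{O}(o\mid 0)$ is non-decreasing in $o$; feeding this into the belief recursion (\ref{eq:belief_upd}) — whose $o$-dependence enters only through the factor $\mathcal{Z}(o,\cdot,\cdot)=f_{O}(o\mid\cdot)$ (\ref{eq:obs_1})--(\ref{eq:obs_2}) — shows that the updated belief $b^{o}(1)$ is non-decreasing in $o$. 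Composing the two monotonicities, $o\mapsto V_{\pi_{\mathrm{D}},l}(b^{o}(1),1)$ is non-increasing.

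The second use of \tpp is that $f_{O}(\cdot\mid 1)$ dominates $f_{O}(\cdot\mid 0)$ in the monotone-likelihood-ratio order, hence also dominates the predictive mixture $\bar{f}(o)\triangleq b(1)f_{O}(o\mid 1)+(1-b(1))f_{O}(o\mid 0)$ (the reciprocal of $\lambda+(1-\lambda)f_{O}(o\mid 0)/f_{O}(o\mid 1)$ is non-decreasing in $o$). Taking expectations of the non-increasing map $o\mapsto V_{\pi_{\mathrm{D}},l}(b^{o}(1),1)$ against these two laws yields $\sum_{o}f_{O}(o\mid 1)V_{\pi_{\mathrm{D}},l}(b^{o}(1),1)\le\sum_{o}\bar{f}(o)V_{\pi_{\mathrm{D}},l}(b^{o}(1),1)$. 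It therefore remains to prove $\sum_{o}\bar{f}(o)V_{\pi_{\mathrm{D}},l}(b^{o}(1),1)\le V_{\pi_{\mathrm{D}},l}(b(1),1)$, i.e. that a one-step look-ahead under the honest predictive law — for which $\sum_{o}\bar{f}(o)b^{o}(1)=b(1)$ by the martingale property of Bayesian updating — cannot raise the attacker's value.

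For this last inequality I would argue by induction on the value-iteration index $k$, mirroring the scheme used for Lemmas~\ref{lemma:V_geq_zero}--\ref{lemma:decreasing_v}: the base case is immediate since $V^{0}_{\pi_{\mathrm{D}},l}\equiv 0$, and in the inductive step one expands $V^{k}_{\pi_{\mathrm{D}},l}(b(1),1)$ with the recursion (\ref{eq:decreasing_v}). The immediate term is affine and non-increasing in $\pi_{\mathrm{D}}(\mathfrak{S}\mid b(1))$, so the desired comparison for that term follows from the monotonicity of $b^{o}(1)$ in $o$ and the martingale identity; the continuation term is handled by the induction hypothesis together with the monotonicity in $l$ from Lemma~\ref{lemma:decreasing_v} (which lets one replace $V^{k-1}_{l-a^{(\mathrm{D})}}$ by $V^{k-1}_{l}$) and the fact that Bayesian updates compose (an update by $o$ followed by an update by $o'$ is a single update by the product likelihood, again \tpp-monotone). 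Passing $k\to\infty$ with $V^{k}_{\pi_{\mathrm{D}},l}\to V_{\pi_{\mathrm{D}},l}$ gives the claim, with Lemma~\ref{lemma:V_geq_zero} ensuring the iterates stay nonnegative and bounded.

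The hard part is precisely this last step: the Bellman operator contains a $\max\{0,\cdot\}$ (the attacker's option to abort), and $\max\{0,\cdot\}$ is convex, so a naive Jensen bound on the continuation term alone points the wrong way. The remedy is to carry the inequality itself inside the inductive invariant — so that it is already available at level $k-1$ underneath the $\max$ — and to tame the convex kink using the single-crossing structure implied by \tpp: the set of observations $o$ at which the attacker's best response switches between $\mathfrak{C}$ and $\mathfrak{S}$ is an interval, so the ``low-$o$'' contributions (where the posterior drops and the attacker's value rises) are pinned down by the interval endpoint rather than requiring global concavity of $V_{\pi_{\mathrm{D}},l}(\cdot,1)$, which does not hold for arbitrary monotone $\pi_{\mathrm{D}}$.
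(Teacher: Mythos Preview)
Your approach is considerably more elaborate than the paper's. The paper's argument is essentially two sentences: it records the standard \tpp consequences (monotonicity of $b^{o}(1)$ in both the prior and in $o$, and stochastic monotonicity of the observation law in the prior), invokes Lemma~\ref{lemma:decreasing_v} together with the assumed monotonicity of $\pi_{\mathrm{D}}$ to get that $V_{\pi_{\mathrm{D}},l}(\cdot,1)$ is non-increasing in $b(1)$, asserts $\mathbb{E}_{o}[b^{o}(1)]\geq b(1)$ under $f_{O}(\cdot\mid 1)$, and then writes ``thus'' to conclude. It does not spell out the passage from ``$\mathbb{E}[b^{o}]\geq b$ and $V$ non-increasing'' to ``$\mathbb{E}[V(b^{o})]\leq V(b)$'', so you have correctly identified that this step is not automatic for a merely non-increasing $V$.

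Your decomposition --- reduce from $f_{O}(\cdot\mid 1)$ to the predictive mixture $\bar f$ via MLR/first-order dominance, and then attack $\mathbb{E}_{\bar f}[V(b^{o},1)]\leq V(b,1)$ by value-iteration induction --- is a genuinely different route, and the first reduction is clean. The second step, however, does not go through as sketched. Together with the martingale identity $\mathbb{E}_{\bar f}[b^{o}]=b$, the target inequality is exactly Jensen for a concave $V(\cdot,1)$, and the $\max\{0,\cdot\}$ in the Bellman recursion is convex, as you note. Your proposed fix --- carrying the inequality as the inductive invariant and using single-crossing to localise the kink --- does not help at a belief where the attacker aborts: there $V^{k}(b,1)=0$, so you would need $V^{k}(b^{o},1)=0$ for $\bar f$-almost every $o$; but for low observations one has $b^{o}<b$ landing in the continuation region with $V^{k}(b^{o},1)>0$, and single-crossing only tells you \emph{where} that region sits in $o$, not that its $\bar f$-mass vanishes. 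So the induction breaks precisely at the abort boundary, and the sketch does not supply the missing ingredient.
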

\begin{proof}[Proof.]
Since $f_O$ is \tpp, it follows from \cite[Thm. 10.3.1, pp. 225 and 238]{krishnamurthy_2016} and \cite[Lemma 4, pp. 12]{hammar_stadler_tnsm} that given two beliefs $b^{\prime}(1) \geq b(1)$ and given two observations $o \geq \bar{o}$, the following holds for any $k \in \mathcal{O}$ and $l \in \{1,\hdots, L\}$: $b^{\prime,o}(1) \geq b^{o}(1)$, $\mathbb{P}[o \geq k \mid b^{\prime}(1)] \geq \mathbb{P}[o \geq k \mid b(1)]$, and $b_{a}^{o}(1) \geq b_{a}^{\bar{o}}(1)$.

Since $\pi_{\mathrm{D}}$ is increasing with $b(1)$ and $V_{\pi_{\mathrm{D}},l}(b(1),1)$ is decreasing with $b(1)$ (Lemma \ref{lemma:decreasing_v}), it follows that $\mathbb{E}_{o}[b^{o}(1)] \geq b(1)$, and thus $V_{\pi_{\mathrm{D}},l}(b(1),1) \geq \sum_{o}f_O(o\mid 1)V_{\pi_{\mathrm{D}},l}(b^{o}(1),1)$.
\end{proof}

\begin{lemma}\label{lemma:stop_if_defender_knows}
If $f_{O}$ is \tpp, $\pi_{\mathrm{D}}(\mathfrak{S} \mid b(1))=1$, and $\pi_{\mathrm{D}}(\mathfrak{S} \mid b(1))$ is increasing with $b(1)$, then $V^{*}_{\pi_{\mathrm{D}},l}\big($$s,$$b(1)\big)$$=0$ and for any $\tilde{\pi}_{\mathrm{A}} \in \mathscr{B}_{\mathrm{A}}(\pi_{\mathrm{D}})$, $\tilde{\pi}_{\mathrm{A}}$$(1,$$b(1))$$=\mathfrak{S}$.
\end{lemma}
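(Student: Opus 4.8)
The plan is to recognize that a best response $\tilde{\pi}_{\mathrm{A}}\in\mathscr{B}_{\mathrm{A}}(\pi_{\mathrm{D}})$ is an optimal policy of the attacker's \mdp $\mathcal{M}$, so the statement reduces to comparing, at the belief $b(1)$ with $\pi_{\mathrm{D},l}(\mathfrak{S}\mid b(1))=1$, the two action-values in state $s=1$. Playing $\mathfrak{S}$ yields immediate reward $\mathcal{R}(1,(\cdot,\mathfrak{S}))=0$ by (\ref{eq:reward_1}) and drives the game to the absorbing state $\emptyset$ by (\ref{eq:tp_4})--(\ref{eq:tp_7}), where all rewards are $0$ by (\ref{eq:reward_0}); hence the action-value of $\mathfrak{S}$ is exactly $0$. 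Since Lemma \ref{lemma:V_geq_zero} already gives $V^{*}_{\pi_{\mathrm{D}},l}\big(b(1),1\big)\geq 0$, it suffices to show that the action-value of $\mathfrak{C}$ is $\leq 0$: this forces $V^{*}_{\pi_{\mathrm{D}},l}\big(b(1),1\big)=0$ and makes $\mathfrak{S}$ the unique optimal action, i.e.\ $\tilde{\pi}_{\mathrm{A},l}\big(b(1),1\big)=\mathfrak{S}$.

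First I would dispatch the boundary case $l=1$ directly: since $\pi_{\mathrm{D},1}(\mathfrak{S}\mid b(1))=1$ the defender stops, and by (\ref{eq:tp_7}) the game terminates in one step while incurring reward $\mathcal{R}(1,(\mathfrak{S},\mathfrak{C}))=\mathrm{R}_{\mathrm{st}}>0$, so the action-value of $\mathfrak{C}$ equals $-\mathrm{R}_{\mathrm{st}}<0$ and we are done. For $l>1$, the defender again stops deterministically, so by (\ref{eq:tp_2}), (\ref{eq:tp_4}) and (\ref{eq:reward_4}) the action-value of $\mathfrak{C}$ at $\big(b(1),1\big)$ is
\[
-\frac{\mathrm{R}_{\mathrm{st}}}{l}+\gamma(1-\phi_{l})\sum_{o}f_{O}(o\mid 1)\,V^{*}_{\pi_{\mathrm{D}},l-1}\big(b^{o}(1),1\big).
\]
The key step is to bound the continuation sum. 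By Lemma \ref{lemma:decreasing_v}, $V^{*}_{\pi_{\mathrm{D}},\cdot}(\cdot,1)$ is non-decreasing in the number of remaining stops, so $V^{*}_{\pi_{\mathrm{D}},l-1}\big(b^{o}(1),1\big)\leq V^{*}_{\pi_{\mathrm{D}},l}\big(b^{o}(1),1\big)$; and by Lemma \ref{lemma:dec_stopping} (which is where the \tpp hypothesis and the monotonicity of $\pi_{\mathrm{D}}$ in $b(1)$ enter) we have $\sum_{o}f_{O}(o\mid 1)V^{*}_{\pi_{\mathrm{D}},l}\big(b^{o}(1),1\big)\leq V^{*}_{\pi_{\mathrm{D}},l}\big(b(1),1\big)$. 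Combining, the action-value of $\mathfrak{C}$ is at most $-\mathrm{R}_{\mathrm{st}}/l+\gamma(1-\phi_{l})\,V^{*}_{\pi_{\mathrm{D}},l}\big(b(1),1\big)$.

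Then I would close the fixed point. Write $W:=V^{*}_{\pi_{\mathrm{D}},l}\big(b(1),1\big)=\max\{0,\ \text{action-value of }\mathfrak{C}\}$. If $W>0$ then $W$ equals the action-value of $\mathfrak{C}$, so $W\leq -\mathrm{R}_{\mathrm{st}}/l+\gamma(1-\phi_{l})W$, i.e.\ $\big(1-\gamma(1-\phi_{l})\big)W\leq -\mathrm{R}_{\mathrm{st}}/l<0$; since $\gamma<1$ the coefficient is positive, a contradiction. Hence $W=0$, and the action-value of $\mathfrak{C}$ is then $\leq -\mathrm{R}_{\mathrm{st}}/l<0$, strictly below the value $0$ of $\mathfrak{S}$, so every optimal policy of $\mathcal{M}$ — in particular every $\tilde{\pi}_{\mathrm{A}}\in\mathscr{B}_{\mathrm{A}}(\pi_{\mathrm{D}})$ — selects $\mathfrak{S}$ at $\big(b(1),1\big)$, and $V^{*}_{\pi_{\mathrm{D}},l}=0$ there. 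The value is trivially $0$ in the absorbing state $\emptyset$ by (\ref{eq:reward_0}), so the substantive content is this $s=1$ claim, which is precisely what feeds into the threshold characterization (\ref{eq:prop_br_attacker_2}) of the attacker's best response in state $1$.

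The step I expect to be the main obstacle is the continuation bound — ruling out that consuming a defender stop ($l\mapsto l-1$) together with the belief shift $b(1)\mapsto b^{o}(1)$ could make continuing strictly more attractive than stopping. Without the monotone-likelihood structure the posterior $b^{o}(1)$ could dip below the belief at which the defender stops, which would restore a profitable continuation for the attacker; Lemma \ref{lemma:dec_stopping} eliminates this in expectation, while Lemma \ref{lemma:decreasing_v}'s monotonicity in $l$ is what lets me compare the level-$(l-1)$ continuation value to the level-$l$ value at the same belief, thereby making the fixed-point inequality self-contained. Care is also needed with the $l=1$ boundary, where a defender stop terminates the game outright rather than decrementing $l$.
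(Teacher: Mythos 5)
Your proof is correct, and it opens the same way as the paper's: both reduce the claim to the attacker's Bellman equation at $\big(b(1),1\big)$, observe that the stop action has value exactly $0$, and must show that continuing is no better. Where you diverge is in how the continuation term is handled. The paper asserts that, by the \tpp assumption and the monotonicity of $\pi_{\mathrm{D}}$, the defender also stops with probability one at every posterior belief (it writes $\mathbb{E}_{o}[\pi_{\mathrm{D}}(\mathfrak{S}\mid b^{o}(1))]=1$) and concludes outright that $\sum_{o}f_O(o\mid 1)V^{*}_{\pi_{\mathrm{D}},l-a^{(1)}}(b^{o}(1),1)=0$ --- a step that quietly presupposes the lemma's conclusion at the posterior beliefs. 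You instead bound that sum by $V^{*}_{\pi_{\mathrm{D}},l}\big(b(1),1\big)$ itself, using Lemma \ref{lemma:decreasing_v} to pass from $l-1$ to $l$ and Lemma \ref{lemma:dec_stopping} for the expectation over posteriors, and then close with the contraction inequality $\big(1-\gamma(1-\phi_l)\big)W\leq -\mathrm{R}_{\mathrm{st}}/l$, which forces $W=0$ with no self-reference. This buys two things the paper's version lacks: it removes the circularity, and the resulting strict comparison between the action-value of $\mathfrak{C}$ (at most $-\mathrm{R}_{\mathrm{st}}/l<0$) and that of $\mathfrak{S}$ (equal to $0$) is what actually licenses the ``for any $\tilde{\pi}_{\mathrm{A}}\in\mathscr{B}_{\mathrm{A}}(\pi_{\mathrm{D}})$'' quantifier, whereas the paper's non-strict ``$\mathrm{R}_{\mathrm{st}}\geq 0$, hence the inequality holds'' only establishes that $\mathfrak{S}$ is \emph{a} best response. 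Your explicit treatment of the $l=1$ boundary, where the defender's stop terminates the game outright, is likewise a detail the paper elides.
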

\begin{proof}[Proof.]
From (\ref{eq:bellman_eq_41})--(\ref{eq:bellman_posg_1}) we know that $\tilde{\pi}_{\mathrm{A}}(b(1),1)=\mathfrak{S}$ iff:
\begin{align}
\frac{\mathrm{R}_{\mathrm{st}}}{l} + (\phi_l-1)\sum_of_O(o\mid 1)V^{*}_{\pi_{\mathrm{D}},l-a^{(1)}}(b^{o}(1),1) \geq 0 \label{eq:ineq_stop_if_def}
\end{align}
We know that $\mathrm{R}_{\mathrm{st}} \geq 0$ (see \S \ref{sec:formal_model_2}). Further, since $f_{O}$ is \tpp, $\pi_{\mathrm{D}}(\mathfrak{S} \mid b(1))=1$, and since $\pi_{\mathrm{D}}(\mathfrak{S} \mid b(1))$ is increasing with $b(1)$, we have by Lemma \ref{lemma:dec_stopping} that $\mathbb{E}_{o}[\pi_{\mathrm{D}}(\mathfrak{S} \mid b^{o}(1))]=1$. As a consequence, the second term in the left-hand side of (\ref{eq:ineq_stop_if_def}) is zero. Hence, the inequality holds and $\tilde{\pi}_{\mathrm{A}}(b(1),1)=1$, which implies that $V^{*}_{\pi_{\mathrm{D}},l}\big($$s,b(1)\big)=0$.
\end{proof}

\begin{lemma}\label{lemma:continue_stop_equality}
Given any defender strategy $\pi_{\mathrm{D}}\in \Pi_{\mathrm{D}}$, if $\pi^{*}_{\mathrm{A}}(b(1),1) = \mathfrak{S}$, then $\pi^{*}_{\mathrm{A}}(b(1),0) = \mathfrak{C}$.
\end{lemma}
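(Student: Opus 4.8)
The plan is to work inside the attacker's best-response \mdp $\mathcal{M}$ from \S\ref{sec:game_analysis}, whose value function is $V^{*}_{\pi_{\mathrm{D}},l}$, and to compare the two attacker actions in state $0$ at belief $b(1)$. First I would translate the hypothesis: $\pi^{*}_{\mathrm{A}}(b(1),1)=\mathfrak{S}$ says that in state $1$ at belief $b(1)$ with $l$ stops left, aborting (which ends the episode with value $0$) is at least as good as continuing, so by (\ref{eq:bellman_eq_43}) and the reward structure (\ref{eq:reward_0})--(\ref{eq:reward_5}) we get $V^{*}_{\pi_{\mathrm{D}},l}(b(1),1)=0$; combining Lemma~\ref{lemma:decreasing_v} (monotonicity of $V^{*}_{\pi_{\mathrm{D}},\cdot}(b(1),1)$ in the number of stops) with Lemma~\ref{lemma:V_geq_zero} then forces $V^{*}_{\pi_{\mathrm{D}},l'}(b(1),1)=0$ for every $l'\leq l$, i.e. aborting is optimal in state $1$ at this belief regardless of how many defensive actions remain.

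Second I would exploit the special structure of state $0$: by (\ref{eq:reward_2})--(\ref{eq:reward_3}) the stage reward $\mathcal{R}(0,\cdot)$ depends only on the defender's action, and by (\ref{eq:tp_1})--(\ref{eq:tp_7}) the only consequence of playing $\mathfrak{S}$ rather than $\mathfrak{C}$ in state $0$ is that, whenever the episode does not end, the successor state is $1$ instead of $0$ with the same remaining-stops count (and when it ends, which happens identically in both cases, the continuation is $0$ in both). Writing $Q_{\mathfrak{S}}$ and $Q_{\mathfrak{C}}$ for the state-$0$ action-values of $\mathcal{M}$, this yields
\begin{align}
Q_{\mathfrak{S}}(b(1),l)-Q_{\mathfrak{C}}(b(1),l)
= \gamma\,\mathbb{E}\Big[&\mathbb{E}_{o\sim f_{O\mid 1}}\!\big[V^{*}_{\pi_{\mathrm{D}},l'}(b^{o}(1),1)\big]\nonumber\\
&-\mathbb{E}_{o\sim f_{O\mid 0}}\!\big[V^{*}_{\pi_{\mathrm{D}},l'}(b^{o}(1),0)\big]\Big],\nonumber
\end{align}
the outer expectation being over the defender's action under $\pi_{\mathrm{D}}(\cdot\mid b(1))$ (which fixes $l'\in\{l-1,l\}$ and whether a successor state is reached) and $b^{o}(1)$ the belief updated via (\ref{eq:belief_upd}). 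Since $V^{*}_{\pi_{\mathrm{D}},\cdot}(\cdot,0)\geq 0$ by Lemma~\ref{lemma:V_geq_zero}, the second inner expectation is nonnegative, so it suffices to show the first is $0$, i.e. that $V^{*}_{\pi_{\mathrm{D}},l'}(b^{o}(1),1)=0$ for every observation $o$ that can follow the $0\rightarrow 1$ transition; then $Q_{\mathfrak{C}}\geq Q_{\mathfrak{S}}$ and $\mathfrak{C}$ is a best response in state $0$ at $b(1)$, which is the claim.

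The main obstacle is exactly this last reduction, because after intruding the defender's belief is the updated $b^{o}(1)$ (with $o$ drawn under an ongoing intrusion, $o\sim f_{O\mid 1}$) rather than $b(1)$, so the hypothesis cannot be invoked verbatim. I would close the gap by showing that aborting stays optimal over the whole range of beliefs reachable here: starting from the continuation inequality characterizing $\pi^{*}_{\mathrm{A}}(b(1),1)=\mathfrak{S}$ (the analogue of (\ref{eq:ineq_stop_if_def})) and using the monotonicity of $V^{*}_{\pi_{\mathrm{D}},l}(\cdot,1)$ in $\pi_{\mathrm{D}}(\mathfrak{S}\mid\cdot)$ from Lemma~\ref{lemma:decreasing_v}, together with the stochastic-ordering behaviour of the belief update (\ref{eq:belief_upd}) in state $1$ already used in Lemma~\ref{lemma:dec_stopping}, one rules out $V^{*}_{\pi_{\mathrm{D}},l'}(b^{o}(1),1)>0$, which by Lemma~\ref{lemma:V_geq_zero} pins it to $0$. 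A self-contained alternative that avoids invoking value-function monotonicity is a coupling/interchange argument: given an optimal $\pi^{*}_{\mathrm{A}}\in\mathscr{B}_{\mathrm{A}}(\pi_{\mathrm{D}})$ with $\pi^{*}_{\mathrm{A}}(b(1),0)=\mathfrak{S}$, modify it so that it plays $\mathfrak{C}$ at $(b(1),0)$ and aborts at the first ensuing visit to state $1$, and verify from (\ref{eq:reward_0})--(\ref{eq:reward_5}) that this does not decrease $J_{\mathrm{A}}$; this exhibits an optimal attacker strategy with $\pi^{*}_{\mathrm{A}}(b(1),0)=\mathfrak{C}$.
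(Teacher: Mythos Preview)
Your primary approach is essentially the paper's: from $\pi^{*}_{\mathrm{A}}(b(1),1)=\mathfrak{S}$ deduce $V^{*}_{\pi_{\mathrm{D}},l}(b(1),1)=0$, invoke Lemma~\ref{lemma:dec_stopping} to control the value after the belief update in state~$1$, and use Lemma~\ref{lemma:V_geq_zero} for the state-$0$ side to conclude $Q_{\mathfrak{C}}\geq Q_{\mathfrak{S}}$. The paper is slightly more direct than your reduction: rather than arguing pointwise that $V^{*}_{\pi_{\mathrm{D}},l'}(b^{o}(1),1)=0$ for every $o$, it applies Lemma~\ref{lemma:dec_stopping} once to get $\sum_o f_O(o\mid 1)V^{*}_{\pi_{\mathrm{D}},l}(b^{o},1)\leq V^{*}_{\pi_{\mathrm{D}},l}(b(1),1)=0$, which (together with Lemma~\ref{lemma:V_geq_zero} on the state-$0$ term) already yields the required $Q$-value comparison; your extra step through Lemma~\ref{lemma:decreasing_v} to cover $l'\leq l$ is a detail the paper glosses over but that you handle correctly.
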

\begin{proof}[Proof.]
$\pi^{*}_{\mathrm{A}}$$(1,$$b(1))$$=\mathfrak{S}$ implies that $V^{*}_{\pi_{\mathrm{D}},l}$$(1,$$b(1))$$=0$. Hence, it follows from Lemma \ref{lemma:dec_stopping} that:
\begin{align}
 &(1-\phi_l)\sum_{o\in \mathcal{O}} f_O(o\mid 1)V^{*}_{\pi_{\mathrm{D}},l}(b^{o},1) \leq 0\\
&\implies \sum_{o\in \mathcal{O}} f_O(o \mid 1)V^{*}_{\pi_{\mathrm{D}},l}(b^{o},1) \leq \sum_{o\in \mathcal{O}} f_O(o \mid 0)V^{*}_{\pi_{\mathrm{D}},l}(b^{o},0)\nonumber\\
&\implies \pi^{*}_{\mathrm{A}}(b(1),0) = \mathfrak{C}\nonumber
\end{align}
\end{proof}

\begin{lemma}\label{lemma:decreasing_differences}
If $\pi_{\mathrm{D}}(\mathfrak{S}\mid b(1))$ is non-decreasing with $b(1)$ and $f_O$ is \tpp, then $V^{*}_{\pi_{\mathrm{D}},l}\big($$b(1),0\big) - V^{*}_{\pi_{\mathrm{D}},l}\big($$b(1),1\big)$ is non-decreasing with $\pi_{\mathrm{D}}(\mathfrak{S} \mid b(1))$.
\end{lemma}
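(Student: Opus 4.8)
I would establish the lemma by value iteration on the attacker's best-response \mdp $\mathcal{M}$, reusing the notation fixed before Lemma~\ref{lemma:V_geq_zero}: $V^{k}_{\pi_{\mathrm{D}},l}$ is the $k$-th iterate with $V^{0}_{\pi_{\mathrm{D}},l}\equiv 0$ and $\lim_{k}V^{k}_{\pi_{\mathrm{D}},l}=V^{*}_{\pi_{\mathrm{D}},l}$. Writing $D^{k}_{l}(b(1))\triangleq V^{k}_{\pi_{\mathrm{D}},l}(b(1),0)-V^{k}_{\pi_{\mathrm{D}},l}(b(1),1)$, the plan is to prove by induction on $k$ that, for every $l\in\{1,\hdots,L\}$, $D^{k}_{l}$ is non-decreasing in the defender's stopping probability $\pi_{\mathrm{D}}(\mathfrak{S}\mid b(1))$ at the current belief, and then to let $k\to\infty$: the class of functions that are non-decreasing in a parameter is closed under pointwise limits, so the property transfers to $V^{*}_{\pi_{\mathrm{D}},l}$.

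\textbf{Steps.} The base case is immediate, since $D^{0}_{l}\equiv 0$. For the inductive step I would expand $V^{k}_{\pi_{\mathrm{D}},l}(b(1),0)$ and $V^{k}_{\pi_{\mathrm{D}},l}(b(1),1)$ through the Bellman recursion (\ref{eq:bellman_eq_43}): each is a minimum over the attacker's two actions of a one-step reward plus $\gamma$ times the expected continuation value $\mathbb{E}_{o}\!\big[V^{k-1}_{\pi_{\mathrm{D}},l-a^{(\mathrm{D})}}(b^{o}(1),s')\big]$, where $b^{o}(1)$ is the belief after observing $o$ (cf.\ (\ref{eq:belief_upd})). In state $s=0$ the continue branch keeps $s'=0$ and draws $o$ from $f_{O}(\cdot\mid 0)$, while the stop branch moves the successor state to $s'=1$ and the kernel to $f_{O}(\cdot\mid 1)$; in state $s=1$ the continue branch keeps $s'=1$ with the $\phi_{l}$-mass escaping to the zero-reward terminal state, and the stop branch ends the game with value $0$ by (\ref{eq:reward_0}). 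Subtracting, $D^{k}_{l}$ is a difference of two minima, so I would invoke the elementary fact that $\min(f_{1},g_{1})-\min(f_{2},g_{2})$ is non-decreasing in a parameter whenever each of the four pairwise differences $f_{1}-f_{2}$, $f_{1}-g_{2}$, $g_{1}-f_{2}$, $g_{1}-g_{2}$ is, which reduces the step to four routine checks. Each of these is handled by combining: the (correctly-signed, affine) dependence of the one-step rewards on $\pi_{\mathrm{D}}(\mathfrak{S}\mid b(1))$ coming from (\ref{eq:reward_0})--(\ref{eq:reward_5}); the induction hypothesis on $D^{k-1}_{l'}$; Lemma~\ref{lemma:V_geq_zero} (non-negativity of $V^{*}_{\pi_{\mathrm{D}},l}$) and Lemma~\ref{lemma:decreasing_v} (monotonicity of $V^{*}_{\pi_{\mathrm{D}},l}(b(1),1)$ in $\pi_{\mathrm{D}}(\mathfrak{S}\mid b(1))$ and in $l$); and Lemma~\ref{lemma:dec_stopping}, the \tpp averaging inequality, which is what lets me compare the two continuation expectations $\mathbb{E}_{o}[V^{k-1}_{\cdot}(b^{o}(1),\cdot)]$ taken under the kernels $f_{O}(\cdot\mid 0)$ and $f_{O}(\cdot\mid 1)$, given that $\pi_{\mathrm{D}}(\mathfrak{S}\mid\cdot)$ is non-decreasing in $b(1)$. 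The boundary case $l=1$, where a defender stop terminates the game rather than decrementing $l$, has to be handled separately but fits the same template, again because the terminal value is $0$.

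\textbf{Main obstacle.} The crux is the inductive step, and within it the coupling --- specific to $s=0$ --- between the attacker's stop action, the state transition it triggers, and the accompanying change of observation kernel and belief trajectory: showing that the post-stop continuation term $\mathbb{E}_{o}[V^{k-1}_{\cdot}(b^{o}(1),1)]$ relates in the required direction to the post-continue term $\mathbb{E}_{o}[V^{k-1}_{\cdot}(b^{o}(1),0)]$ is precisely where the \tpp hypothesis does its work, routed through Lemma~\ref{lemma:dec_stopping} and the monotonicity of $\pi_{\mathrm{D}}(\mathfrak{S}\mid\cdot)$; the remainder is sign-bookkeeping on the reward constants. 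Once this monotone-difference property is in hand, it is the missing ingredient that converts the attacker's indifference conditions into the threshold form (\ref{eq:prop_br_attacker_1})--(\ref{eq:prop_br_attacker_2}) of Theorem~\ref{thm:best_responses}.C, since $\pi_{\mathrm{D}}(\mathfrak{S}\mid\cdot)$ is itself non-decreasing in $b(1)$.
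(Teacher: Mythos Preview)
Your approach is essentially the paper's: value iteration and induction on $k$, with the monotonicity of the difference $D^{k}_{l}$ established step by step using the reward signs, Lemma~\ref{lemma:V_geq_zero}, Lemma~\ref{lemma:decreasing_v}, Lemma~\ref{lemma:dec_stopping}, and the induction hypothesis. The only organizational difference is that the paper argues by case analysis on the attacker's optimal actions at $b(1)$ (three cases, the fourth being excluded by Lemma~\ref{lemma:continue_stop_equality}), whereas you invoke the general fact that $\min(f_{1},g_{1})-\min(f_{2},g_{2})$ is non-decreasing once all four cross-differences are.

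There is, however, a gap in your ``four routine checks.'' The fourth cross-difference is (attacker stops in $s=0$) minus (attacker stops in $s=1$). The latter branch has value $0$, so this difference equals the stop-in-$s=0$ value itself: an immediate term $-\pi_{\mathrm{D}}(\mathfrak{S}\mid b(1))\,\mathrm{R}_{\mathrm{cost}}/l$, which is increasing in $\pi_{\mathrm{D}}(\mathfrak{S}\mid b(1))$, plus a continuation term $\gamma\,\mathbb{E}_{a^{(\mathrm{D})},o}\big[V^{k-1}_{l-a^{(\mathrm{D})}}(b^{o}(1),1)\big]$, which by Lemma~\ref{lemma:decreasing_v} (and the \tpp belief ordering) is \emph{non-increasing} in $\pi_{\mathrm{D}}(\mathfrak{S}\mid b(1))$. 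The sum is not monotone without a further argument, so this check is not routine with only Lemmas~\ref{lemma:V_geq_zero}--\ref{lemma:dec_stopping}. This is exactly the case the paper sidesteps via Lemma~\ref{lemma:continue_stop_equality}, which you do not invoke. Once you add that lemma (or an equivalent observation that the $(\mathfrak{S},\mathfrak{S})$ branch is never simultaneously optimal), your argument collapses to the paper's three-case analysis and goes through.
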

\begin{proof}[Proof.]
We prove this statement by mathematical induction. Let $W^{k}_{\pi_{\mathrm{D}},l}(b(1))$ $= V^{k}_{\pi_{\mathrm{D}},l}\big($$b(1),0\big) - V^{k}_{\pi_{\mathrm{D}},l}\big($$b(1),1\big)$. For $k=1$, it follows from (\ref{eq:reward_0})--(\ref{eq:reward_5}) that $W^{1}_{\pi_{\mathrm{D}},l}(b(1))$ is non-decreasing with $\pi_{\mathrm{D}}(\mathfrak{S} \mid b(1))\in [0,1]$. Assume by induction that the statement of Lemma \ref{lemma:decreasing_differences} holds for $W^{k-1}_{\pi_{\mathrm{D}},l}(b(1))$. We show that then the statement holds also for $W^{k}_{\pi_{\mathrm{D}},l}(b(1))$.

There are three cases to consider:
\begin{itemize}
\item If $b(1) \in \mathscr{S}^{k,(\mathrm{A})}_{0,l,\pi_{\mathrm{D}}} \cap \mathscr{C}^{k,(\mathrm{A})}_{1,l,\pi_{\mathrm{D}}}$, then:
\begin{align}
  &W^{k}_{\pi_{\mathrm{D}},l}(b(1)) = \mathrm{R}_{\mathrm{int}} + \label{eq:ind_case_1}\\
  &\pi_{\mathrm{D}}(\mathfrak{S} \mid b(1))\left(\frac{\mathrm{R}_{\mathrm{st}}}{l} -\frac{\mathrm{R}_{\mathrm{cost}}}{l} - \mathrm{R}_{\mathrm{int}}\right)\nonumber
\end{align}
The right-hand side of (\ref{eq:ind_case_1}) is non-decreasing with $\pi_{\mathrm{D}}(\mathfrak{S} \mid b(1))$ since $\frac{\mathrm{R}_{\mathrm{st}}}{l}- \frac{\mathrm{R}_{\mathrm{cost}}}{l}- \mathrm{R}_{\mathrm{int}}\geq 0$ (see \S \ref{sec:formal_model_2}).
\item If $b(1) \in \mathscr{C}^{k,(\mathrm{A})}_{0,l,\pi_{\mathrm{D}}} \cap \mathscr{C}^{k,(\mathrm{A})}_{1,l,\pi_{\mathrm{D}}}$, then using (\ref{eq:tp_2}) and (\ref{eq:belief_upd}):
\begin{align}
  &W^{k}_{\pi_{\mathrm{D}},l}(b(1)) = \pi_{\mathrm{D}}(\mathfrak{S} \mid b(1))\Big(\frac{\mathrm{R}_{\mathrm{st}}}{l} - \frac{\mathrm{R}_{\mathrm{cost}}}{l} \label{eq:ind_case_2}\\
  &-\mathrm{R}_{\mathrm{int}}\Big) + V^{k-1}_{\pi_{\mathrm{D}},l}\big(b(1),1\big)\Big) + \mathrm{R}_{\mathrm{int}} +\sum_o\Big(f_{O}(o\mid 0)\cdot\nonumber\\
  &V^{k}_{\pi_{\mathrm{D}},l}\big(b^{o}(1),0\big) - (1-\phi_l)f_{O}(o\mid 1)V^{k}_{\pi_{\mathrm{D}},l}\big(b^{o}(1),1\big)\Big)\nonumber
\end{align}
The first term in the right-hand side of (\ref{eq:ind_case_2}) is non-decreasing with $\pi_{\mathrm{D}}(\mathfrak{S} \mid b(1))$ since $\frac{\mathrm{R}_{\mathrm{st}}}{l} - \frac{\mathrm{R}_{\mathrm{cost}}}{l}- \mathrm{R}_{\mathrm{int}}\geq 0$ (see \S \ref{sec:formal_model_2}) and $V^{k-1}_{\pi_{\mathrm{D}},l}\big(b(1),1\big) \geq 0$ (it is a consequence of Lemma \ref{lemma:V_geq_zero} and (\ref{eq:bellman_eq_41})--(\ref{eq:bellman_posg_1})). The second term is non-decreasing with $\pi_{\mathrm{D}}(\mathfrak{S} \mid b(1))$ by the induction hypothesis and the assumption that $f_O$ is \tpp.
\item If $b(1) \in \mathscr{C}^{k,(\mathrm{A})}_{0,l,\pi_{\mathrm{D}}} \cap \mathscr{S}^{k,(\mathrm{A})}_{1,l,\pi_{\mathrm{D}}}$, then:
\begin{align}
  &W^{k}_{\pi_{\mathrm{D}},l}(b(1)) = \pi_{\mathrm{D}}(\mathfrak{S} \mid b(1))\left(-\frac{\mathrm{R}_{\mathrm{cost}}}{l}\right) \label{eq:step_1_1}\\
  & + \sum_of_{O}(o \mid 0)V^{k}_{\pi_{\mathrm{D}},l}\big(b^{o}(1),0\big) \nonumber\\
  &=\pi_{\mathrm{D}}(\mathfrak{S} \mid b(1))\left(-\frac{\mathrm{R}_{\mathrm{cost}}}{l}\right) + \sum_o\Big(f_{O}(o \mid 0)\cdot \label{eq:step_1_2}\\
  &V^{k}_{\pi_{\mathrm{D}},l}\big(b^{o}(1),0\big) - (1-\phi_l)f_{O}(o \mid 1)V^{k}_{\pi_{\mathrm{D}},l}\big(b^{o}(1),1\big)\Big)\nonumber
\end{align}
The first term in the right-hand side of (\ref{eq:step_1_1}) is non-decreasing with $\pi_{\mathrm{D}}(\mathfrak{S} \mid b(1))$ since $-\frac{\mathrm{R}_{\mathrm{cost}}}{l}\geq 0$. The second term is non-decreasing with $\pi_{\mathrm{D}}(\mathfrak{S} \mid b(1))$ by the induction hypothesis and the assumption that $f_O$ is \tpp. (\ref{eq:step_1_2}) follows from Lemma \ref{lemma:dec_stopping} and the fact that $b(1) \in \mathscr{S}^{k,(\mathrm{A})}_{1,l,\pi_{\mathrm{D}}}$.
\end{itemize}
The other cases, e.g. $b(1) \in \mathscr{S}^{k,(\mathrm{A})}_{0,l,\pi_{\mathrm{D}}} \cap \mathscr{S}^{k,(\mathrm{A})}_{1,l,\pi_{\mathrm{D}}}$, can be discarded due to Lemma \ref{lemma:continue_stop_equality}. Hence, $W^{k}_{\pi_{\mathrm{D}},l}(b(1))$ is non-decreasing with $\pi_{\mathrm{D}}(\mathfrak{S} \mid b(1))$ for all $k \geq 0$.
\end{proof}
We now use Lemmas \ref{lemma:V_geq_zero}-\ref{lemma:decreasing_differences} to prove Theorem \ref{thm:best_responses}.C. The main idea behind the proof is to show that the stopping sets in state $s=1$ have the form: $\mathscr{S}^{(\mathrm{A})}_{1,l,\pi_{\mathrm{D}}} = [\tilde{\beta}_{1,l},1]$, and that the continuation sets in state $s=0$ have the form: $\mathscr{C}^{(\mathrm{A})}_{0,l,\pi_{\mathrm{D}}} = [\tilde{\beta}_{0,l},1]$, for some values $\tilde{\beta}_{0,1}, \tilde{\beta}_{1,1}, \hdots, \tilde{\beta}_{0,L}, \tilde{\beta}_{1,L} \in [0,1]$.
\begin{proof}[Proof of Theorem \ref{thm:best_responses}.C.]
We first show that $1 \in \mathscr{S}^{(\mathrm{A})}_{1,l,\pi_{\mathrm{D}}}$ and that $1 \in \mathscr{C}^{(\mathrm{A})}_{0,l,\pi_{\mathrm{D}}}$. Since $\pi_{\mathrm{D}}(\mathfrak{S} \mid 1)=1$, it follows from Lemma \ref{lemma:stop_if_defender_knows} that $1 \in \mathscr{S}^{(\mathrm{A})}_{1,l,\pi_{\mathrm{D}}}$ and as a consequence of (\ref{eq:bellman_eq_41})--(\ref{eq:bellman_posg_1}) we have that $\tilde{\pi}_{\mathrm{A}}(b(1),0)=\mathfrak{C}$ iff:
\begin{align}
  &\sum_of_O(o\mid 0) V^{*}_{\pi_{\mathrm{D}},l-1}(b^{o}(1),0) -\nonumber\\
  &f_O(o \mid 1)V^{*}_{\pi_{\mathrm{D}},l-1}(b^{o}(1),1)\geq 0
\end{align}
The left-hand side of the above equation is positive since a) $f_{O}$ is assumed to be \tpp; b) $\sum_of_O(o\mid 0)V^{*}_{\pi_{\mathrm{D}},l-1}(b^{o}(1),0)\geq 0$ (Lemma \ref{lemma:V_geq_zero}); and c) $f_O(o \mid 1)V^{*}_{\pi_{\mathrm{D}},l-1}(1,$$b^{o}(1))=0$ (Lemma \ref{lemma:dec_stopping}). Hence, $1 \in \mathscr{C}^{(\mathrm{A})}_{0,l,\pi_{\mathrm{D}}}$.

Now we show that $\mathscr{S}^{(\mathrm{A})}_{1,l,\pi_{\mathrm{D}}} = [\tilde{\beta}_{1,l},1]$ and that $\mathscr{C}^{(\mathrm{A})}_{0,l,\pi_{\mathrm{D}}} = [\tilde{\beta}_{0,l},1]$ for some values $\tilde{\beta}_{0,1}, \tilde{\beta}_{1,1}, \hdots, \tilde{\beta}_{0,L}, \tilde{\beta}_{1,L} \in [0,1]$. From (\ref{eq:bellman_eq_41})--(\ref{eq:bellman_posg_1}) we know that $\tilde{\pi}_{2}(b(1),1)=\mathfrak{S}$ iff:
\begin{align}
&\mathbb{E}_{\pi_{\mathrm{D}}}\Big[\mathcal{R}\big(1,(a^{(\mathrm{D})}, \mathfrak{C})\big) + \label{eq:exp_1_1_1}\\
  &(\phi_{l}-1)\sum_of_O(o \mid 1)V^{*}_{\pi_{\mathrm{D}},l-a^{(\mathrm{D})}}(b^{o}(1),1)\Big] \geq 0\nonumber
\end{align}
The first term in the left-hand side of (\ref{eq:exp_1_1_1}) is increasing with $b(1)$ (\ref{eq:reward_0})--(\ref{eq:reward_5}). Further, it follows from Lemma \ref{lemma:decreasing_v} that the second term is decreasing with $b(1)$. Hence, we conclude that if $\tilde{\pi}_{\mathrm{A}}(b(1),1)=\mathfrak{S}$, then for any $b^{\prime}(1) \geq b(1)$, $\tilde{\pi}_{\mathrm{A}}(b^{\prime}(1),1)=\mathfrak{S}$. As a consequence, there exist values $\tilde{\beta}_{1,1}, \hdots, \tilde{\beta}_{1,L}$ such that $\mathscr{S}^{(\mathrm{A})}_{1,l,\pi_{\mathrm{D}}} = [\tilde{\beta}_{1,l},1]$.

Similarly, from (\ref{eq:bellman_eq_41})--(\ref{eq:bellman_posg_1}) we know that $\tilde{\pi}_{\mathrm{A}}(b(1),0)=\mathfrak{C}$ iff:
\begin{align}
&\mathbb{E}_{\pi_{\mathrm{D}}}\Big[\sum_of_O(o \mid 0)V^{*}_{\pi_{\mathrm{D}},l-a^{(\mathrm{D})}}(b^{o}(1),0)\label{eq:theorem_b_12}\\
&-f_O(o \mid 1)V^{*}_{\pi_{\mathrm{D}},l-a^{(\mathrm{D})}}(b^{o}(1),1)\Big] \geq 0\nonumber
\end{align}
Since $f_{O}$ is \tpp and $\pi_{\mathrm{D}}(\mathfrak{S} \mid b(1))$ is increasing with $b(1)$, the left-hand side in (\ref{eq:theorem_b_12}) is decreasing (it follows from Lemma \ref{lemma:decreasing_v} and Lemma \ref{lemma:decreasing_differences}). Hence, we conclude that if $\tilde{\pi}_{\mathrm{A}}(b(1),0)=\mathfrak{C}$, then for any $b^{\prime}(1) \geq b(1)$, $\tilde{\pi}_{\mathrm{A}}(b^{\prime}(1),0)=\mathfrak{C}$. As a result, there exist values $\tilde{\beta}_{0,1}, \hdots, \tilde{\beta}_{0,L}$ such that $\mathscr{C}^{(\mathrm{A})}_{0,l,\pi_{\mathrm{D}}} = [\tilde{\beta}_{0,l},1]$.
\end{proof}
\section{Hyperparameters}\label{appendix:hyperparameters}
\begin{table}
\centering
\resizebox{1\columnwidth}{!}{%
  \begin{tabular}{ll} \toprule
  \textbf{Game Parameters} & {\textbf{Values}} \\
  \hline
  $\mathrm{R}_{\mathrm{st}}, \mathrm{R}_{\mathrm{cost}}, \mathrm{R}_{\mathrm{int}}$,$\gamma$, $\phi_{l}$, $L$ & $20$, $-2$, $-1$, $0.99$, $1/2l$, $7$\\
  {\textbf{\tfp Parameters}} & {\textbf{Values}} \\
  \hline
  $c, \epsilon, \lambda, A, a, N, \delta$ & $10$, $0.101$, $0.602$, $100$, $1$, $50$, $0.2$\\
  {\textbf{\nfsp Parameters}} & {\textbf{Values}} \\
  \hline
  lr \rll, lr \sll, batch, \# layers & $10^{-2}$,$5\cdot 10^{-3}$, $64$, $2$ \\
  \# neurons, $\mathcal{M}_{RL}$, $\mathcal{M}_{SL}$ & $128$, $2\times 10^{5}$, $2\times 10^{6}$,\\
  $\epsilon$, $\epsilon$-decay, $\eta$ & $0.06$, $0.001$, $0.1$\\
  {\textbf{\hsvi Parameter}} & {\textbf{Value}} \\
  \hline
  $\epsilon$ & $3$\\
  \bottomrule\\
\end{tabular}
}
\caption{Hyperparameters of the \posg and the algorithms used for evaluation.}\label{tab:hyperparams}
\end{table}
The hyperparameters used for the evaluation in this paper are listed in Table \ref{tab:hyperparams} and were obtained through grid search.

\section{Configuration of the Infrastructure in Fig. \ref{fig:system2}}\label{appendix:infrastructure_configuration}
The configuration of the target infrastructure shown in Fig. \ref{fig:system2} is available in Table \ref{tab:emulation_setup}.
\begin{table}
\centering
\resizebox{1\columnwidth}{!}{%
\begin{tabular}{ll} \toprule
  {\textit{ID (s)}} & {\textit{OS:Services:Exploitable Vulnerabilities}} \\ \midrule
  $N_1$ & \ubuntu20:\snort(community ruleset v2.9.17.1),\ssh:- \\
  $N_2$ & \ubuntu20:\ssh,\http Erl-Pengine,\dns:\cwe-1391\\
  $N_4$ & \ubuntu20:\http \flask,\telnet,\ssh:\cwe-1391 \\
  $N_{10}$ &\ubuntu20:\ftp,\mongo,\smtp,\tomcat,\ts3,\ssh:\cwe-1391 \\
  $N_{12}$ & \jessie:\ts3,\tomcat,\ssh:\cve-2010-0426,\cwe-1391 \\
  $N_{17}$ & \wheezy:\apache2,\snmp,\ssh:\cve-2014-6271 \\
  $N_{18}$ & \debian 9.2:\irc,\apache2,\ssh:\cwe-89 \\
  $N_{22}$ & \jessie:\proftp,\ssh,\apache2,\snmp:\cve-2015-3306 \\
  $N_{23}$ & \jessie:\apache2,\smtp,\ssh:\cve-2016-10033 \\
  $N_{24}$ & \jessie:\ssh:\cve-2015-5602,\cwe-1391 \\
  $N_{25}$ & \jessie: \elastic,\apache2,\ssh,\snmp:\cve-2015-1427\\
  $N_{27}$ & \jessie:\samba,\ntp,\ssh:\cve-2017-7494\\
  $N_3$,$N_{11}$,$N_{5}$-$N_9$& \ubuntu20:\ssh,\snmp,\postgres,\ntp:-\\
  $N_{13-16}$,$N_{19-21}$,$N_{26}$,$N_{28-31}$& \ubuntu20:\ntp, \irc, \snmp, \ssh, \postgres:-\\
  \bottomrule\\
\end{tabular}
}
\caption{Configuration of the target infrastructure (Fig. \ref{fig:system2}).}\label{tab:emulation_setup}
\end{table}

\section{Distributions of Infrastructure Metrics}\label{appendix:infrastructure_metrics}
The emulation system (see Fig. \ref{fig:method}) collects hundreds of metrics every time-step. To measure the information that a metric provides for detecting intrusions, we calculate the Kullback-Leibler (\kl) divergence $D_{\mathrm{KL}}(f_{O \mid 0} \parallel f_{O \mid 1})$ between the distribution of the metric when no intrusion occurs $f_{O\mid s=0}$ and during an intrusion $f_{O\mid s=1}$:
\begin{align}
D_{\mathrm{KL}}(f_{O \mid 0} \parallel f_{O \mid 1}) &= \sum_{o \in \mathcal{O}}f_{O \mid 0}(o)\log\left(\frac{f_{O \mid 0}(o)}{f_{O \mid 1}(o)}\right)
\end{align}
Here $O$ denotes the random variable representing the value of the metric and $\mathcal{O}$ is the domain of $O$.

Figure \ref{fig:observations} shows empirical distributions of the collected metrics with the largest \kl divergence. We see that the \textsc{idps} alerts have the largest \kl divergence and thus provide the most information for detecting intrusions.
\begin{figure}
  \centering
    \scalebox{0.54}{
      \includegraphics{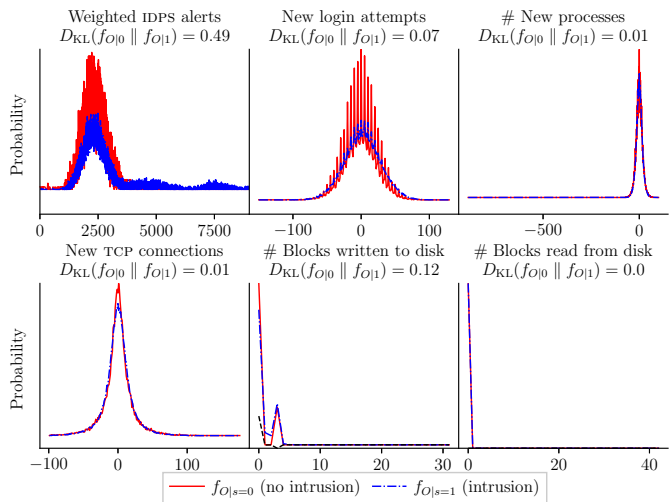}
    }
    \caption{Empirical distributions of selected infrastructure metrics; the red and blue lines show the distributions when no intrusion occurs and during intrusion, respectively.}
    \label{fig:observations}
\end{figure}

\section{Attacker Actions}\label{appendix:attacker actions}
The attacker actions and their descriptions are listed in Table \ref{tab:attacker_actions_descr}.
\begin{table}
\centering
\resizebox{1\columnwidth}{!}{%
\begin{tabular}{ll} \toprule
  {\textit{Action}} & {\textit{Description}} \\ \midrule
  \tcpp scan & \tcpp port scan by sending \syn or empty packets (\texttt{nmap}) \\
  \udp port scan & \udp port scan by sending \udp packets (\texttt{nmap}) \\
  ping scan & IP scan with \icmp ping messages \\
  \vulscan & vulnerability scan using \texttt{nmap}\\
  brute-force attack & performs a dictionary attack against a login service (\texttt{nmap})\\
  \cve-2017-7494 exploit & uploads malicious binary to the \samba service and executes it \\
  \cve-2015-3306 exploit & uses the \texttt{mod\_copy} in \texttt{proftpd} for remote code execution  \\
  \cve-2014-6271 exploit & uses a vulnerability in \texttt{bash} for remote code execution \\
  \cve-2016-10033 exploit & uses \texttt{phpmailer} for remote code execution \\
  \cve-2015-1427 exploit & uses \texttt{elasticsearch} for remote code execution \\
  \cwe-89 exploit & injects malicious SQL code to execute code remotely \\
  \bottomrule\\
\end{tabular}
}
\caption{Descriptions of the attacker actions.}\label{tab:attacker_actions_descr}
\end{table}

\ifCLASSOPTIONcaptionsoff
  \newpage
\fi

\bibliographystyle{IEEEtran}
\bibliography{references,url}

\begin{IEEEbiography}
    [{\includegraphics[width=1in,height=1.25in,clip,keepaspectratio]{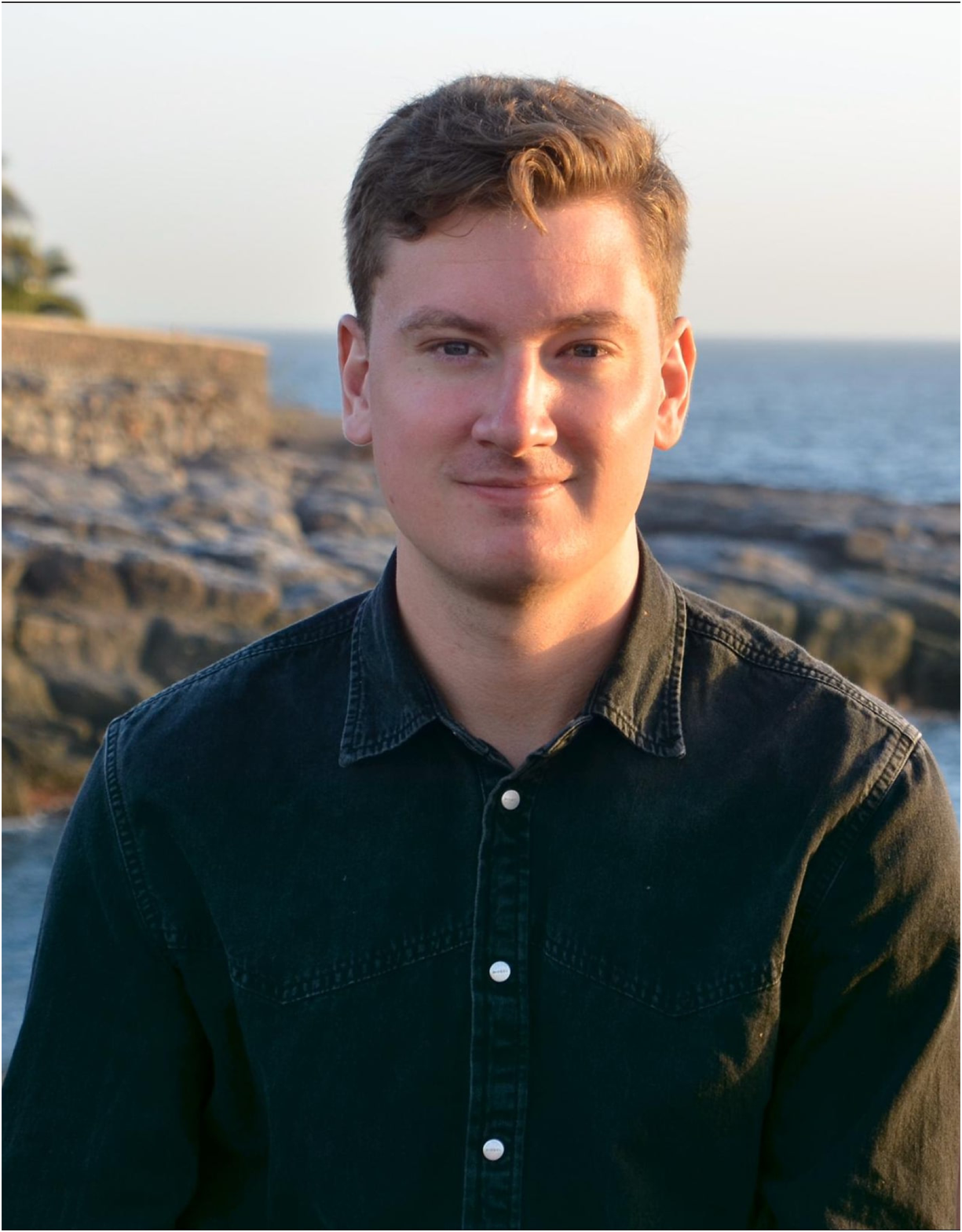}}]{Kim Hammar}
Kim Hammar is currently pursuing the Ph.D. degree at the Division of Network and Systems Engineering at KTH Royal Institute of Technology in Stockholm, Sweden. Before starting his Ph.D., he received the B.Sc. and M.Sc. degree in computer engineering with a specialization in distributed systems from KTH Royal Institute of Technology in 2016 and 2018, respectively. His research interests are in the intersection between decision theory, machine learning, and large-scale systems, focusing on cybersecurity applications.
\end{IEEEbiography}
\begin{IEEEbiography}
    [{\includegraphics[width=1in,height=1.25in,clip,keepaspectratio]{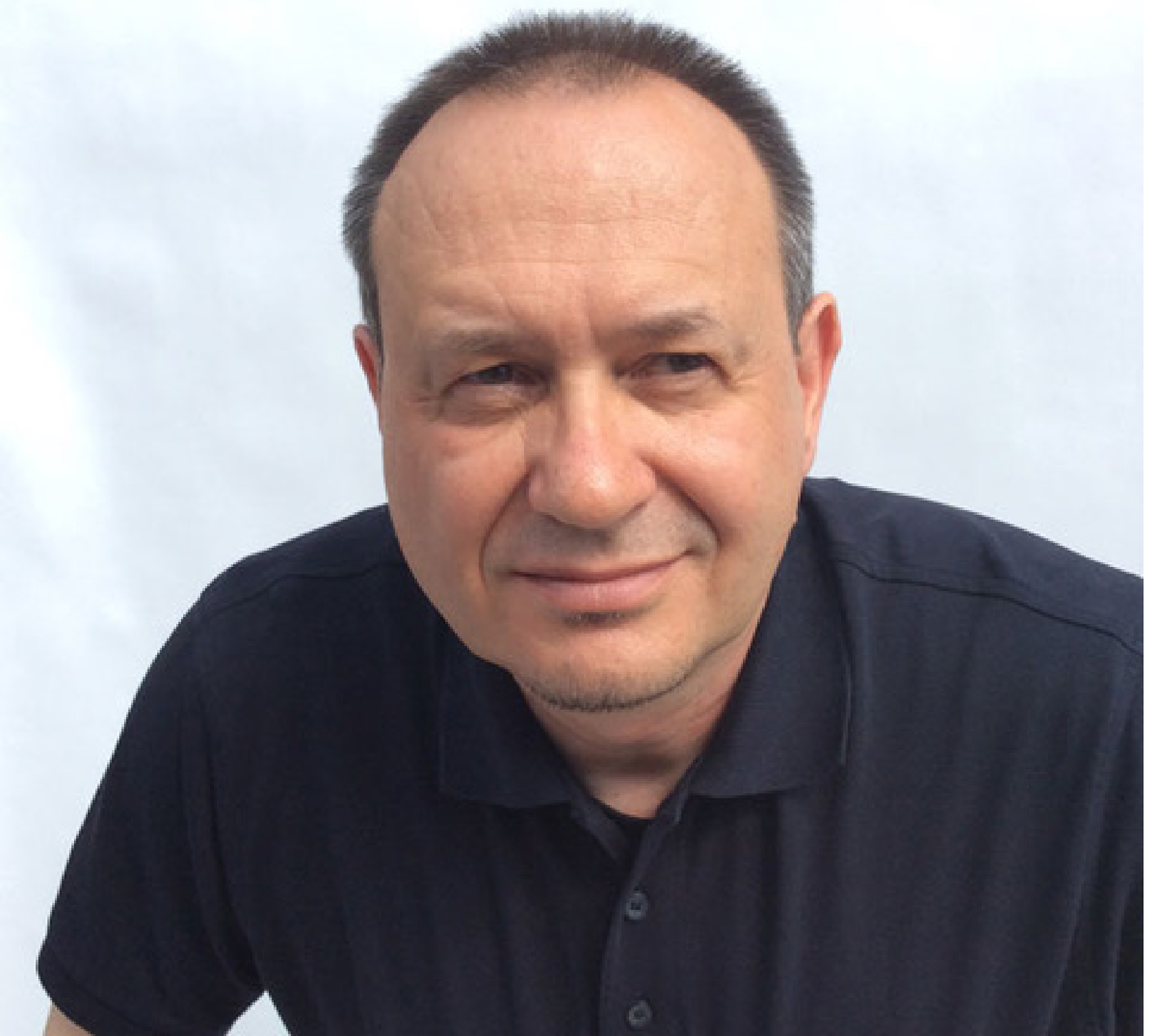}}]{Rolf Stadler}
Rolf Stadler is a professor at KTH Royal Institute of Technology in Stockholm, Sweden, and head of the Division of Network and Systems Engineering. He holds an M.Sc. degree in mathematics and a Ph.D. in computer science from the University of Zurich. Before joining KTH in 2001, he held positions at the IBM Zurich Research Laboratory, Columbia University, and ETH Zürich. His group made contributions to real-time monitoring, resource management, and automation for large-scale networked systems. His current interests include data-driven methods for network engineering and management, as well as AI techniques for cybersecurity. Rolf Stadler has been Editor-in-Chief of IEEE Transactions on Network and Service Management (TNSM) 2014-2017.
\end{IEEEbiography}
\end{document}

